\newtheorem{theorem}{Theorem}
\newtheorem{definition}{Definition}
\newtheorem{example}{Example}
\def \ie {\textit{i.e.}}
\def \eg {\textit{e.g.}}
\def \etal {\textit{et~al.}}
\DeclareMathOperator* {\argmax} {arg\,max}
\DeclareMathOperator* {\argmin} {arg\,min}
\newcommand{\N}{\mathbb{N}}
\newcommand{\R}{\mathbb{R}}
\newcommand{\et}{\tau}
\newcommand{\en}{\eta}
\newcommand{\SR}{\Phi}
\newcommand{\sr}{\phi}
\newcommand{\vs}{v_S}
\newcommand{\seq}{\gamma}
\newcommand{\Seq}{\Gamma}
\newcommand{\avs}{\hat{\Seq}}
\newcommand{\Su}{\sigma}
\newcommand{\dist}{\mathit{dist}}
\begin{document}

\title{Strategic Attack \& Defense in Security Diffusion Games}

\author[a,b,c,*]{Marcin Waniek}
\author[c]{Tomasz P. Michalak}
\author[b]{Aamena Alshamsi}

\renewcommand*{\Affilfont}{\normalsize}

\affil[a]{Computer Science Department, New York University Abu Dhabi, Abu Dhabi, UAE}
\affil[b]{Masdar Institute, Khalifa University of Science and Technology, Abu Dhabi, UAE}
\affil[c]{Institute of Informatics, University of Warsaw, Warsaw, Poland}
\affil[*]{To whom correspondence should be addressed: mjwaniek@nyu.edu}

\date{}

\maketitle

\begin{abstract}
Security games model the confrontation between a defender protecting a set of targets and an attacker who tries to capture them.
A variant of these games assumes security interdependence between targets, facilitating contagion of an attack. 
So far only stochastic spread of an attack has been considered.
In this work, we introduce a version of security games, where the attacker strategically drives the entire spread of attack and where interconnections between nodes affect their susceptibility to be captured.

We find that the strategies effective in the settings without contagion or with stochastic contagion are no longer feasible when spread of attack is strategic.
While in the former settings it was possible to efficiently find optimal strategies of the attacker, doing so in the latter setting turns out to be an NP-complete problem for an arbitrary network.
However, for some simpler network structures, such as cliques, stars, and trees, we show that it is possible to efficiently find optimal strategies of both players.
For arbitrary networks, we study and compare the efficiency of various heuristic strategies.
As opposed to previous works with no or stochastic contagion, we find that centrality-based defense is often effective when spread of attack is strategic, particularly for centrality measures based on the Shapley value.
\end{abstract}

\section{Introduction}


Security games~\cite{roy2010survey}---a particular type of Stackelberg games~\cite{von1934marktform}---has recently attracted growing interest in the literature. 
These games consider a confrontation between two players: a defender, \eg, a police force trying to protect certain resources or locations, and an attacker, \eg, a criminal organization intending to gain control over the locations or the resources.
One of the milestones proving the effectiveness of such models is the ARMOR system used to protect the LAX airport~\cite{pita2008deployed}.
Its success led to the development of many other systems based on security games~\cite{tsai2009iris,tambe2011security,shieh2012protect}.

Some security games settings consider a potential spread of the attack after hitting the initial target~\cite{kunreuther2003interdependent,johnson2010uncertainty}.
In particular, Bachrach~\etal~\cite{bachrach2013contagion} created a variation of the security game model with stochastic contagion between two targets. The idea was later expanded by adding a whole network of interdependencies between targets~\cite{acemoglu2016network, lou2017multidefender}. However, so far the literature on security games has addressed only the stochastic spread of attacks. While this fits certain scenarios such as the spread of computer viruses or fires, it is not suitable for scenarios in which the attack is more likely to diffuse be strategic  of the attack. An example of such a setting can be an adversarial recruitment of people in social networks which is usually not contagious but carefully planned. As we discuss below, the adversarial recruitment of people may occur in either competitive scenarios, such as customer poaching and brand switching~\cite{fudenberg2000customer}, or malicious ones, such as recruitment of individuals to join criminal and covert organizations~\cite{leukfeldt2017cybercriminal,saad2018terrorist}.

An example of a competitive scenario is an archetypical Stackelberg game, i.e. the competition between two market players---an incumbent and an entrant~\cite{von1934marktform,etro2007stackelberg}. One of the ways in which such an entry can be executed, especially in the time of growing popularity of personalised marketing campaigns~\cite{matloka2010destination}, is by poaching the customers of the incumbent~\cite{chen2010dynamic}, e.g. by offering them discounts and free samples. As a pre-emptive move, the incumbent may try to secure the loyalty of the customers by compensating them with competitive offers. In this game, the incumbent is the defender, who acts in advance to prevent the customers from switching to the entrant, who is the attacker. Furthermore, given that the decision to switch to new products or services, especially in high-tech markets \cite{hall2003adoption}, is often affected by social influence, the consideration of the social network is key to capture the complexity of this scenario~\cite{iyengar2011opinion,karsai2014complex}.

An example of a malicious scenario concerns the recruitment of individuals to join covert organizations. It is well known that various covert organization recruit members by carefully selecting potential candidates using social ties and they do this typically by taking advantage of the kinship and friendship ties~\cite{leukfeldt2017cybercriminal}.
In such scenarios, the attacker is a covert organization, who intends to strategically select potential candidates from the acquaintances of the already-recruited members.
On the other hand, the defender is either the law enforcement or a charitable institution, whose goal is to prevent the criminal degeneration or radicalization of a local community and who has limited security resources to distribute over the social networks in an attempt to achieve this goal.

Unfortunately, the existing models are inadequate to grasp complexity of these scenarios.
On one hand, the aforementioned defender-attacker literature~\cite{bachrach2013contagion, acemoglu2016network, lou2017multidefender} assumes a stochastic contagion of the attack over the network.
On the other hand, the existing models of strategic diffusion~\cite{alshamsi2018optimal} do not include an optimizing defender counteracting the spread, \ie, they only model the diffusion aspect of the problem. Furthermore, there exist models of market competition, where one of the competitors strategically schedules an order in which members of the network make decision on adopting either her product, or the product of her competitor, based on their network neighborhood~\cite{chierichetti2014schedule,arthur1989competing}.
However, even though these works study market competition, only one competitor in these models makes strategic decisions, while the other remains passive.
Moreover, the temporal aspect is limited to selecting a sequence of nodes, with each node in the sequence making her decision immediately.
In reality, more resilient nodes may take more time to be convinced by a competitor, adding another layer of complexity to the problem.
Another body of literature considers strategically affecting the spread of diffusion by selecting the exact moment when the seed nodes are activated, instead of activating them all at the beginning of a marketing campaign process~\cite{jankowski2017balancing,jankowski2018strategic}.
Nevertheless, after activating the seed nodes the remainder of the process is purely stochastic, whereas we intend to model a fully strategic diffusion process.
Also, these settings do not include any competitive elements, whereas the existence of an entity intending to prevent the spread of diffusion is a key element of our approach.

Against this background, we propose and analyze the first Stackelberg game between a defender and an attacker of the network, where the attacker strategically guides the course of network diffusion and a central defender strategically attempts to counter the expected attack in advance. At the beginning of the game, the defender distributes security resources among nodes of the network.
Having observed the security efforts of the defender, the attacker chooses the sequence in which the nodes are to be attacked, where the time it takes to successfully attack a certain node is inversely proportional to the number of already captured neighbors of this node.
We use complex contagion to model the susceptibility level of each member of the network, as it has been shown to better describe diffusion processes in social networks~\cite{centola2007complex, centola2010spread, centola2011experimental}.
Within this framework, we investigate two main questions: \textit{how difficult it is for the attacker to find an optimal way of strategically attacking a network} and \textit{what are the effective ways of defending a network from a fully sequential strategic attack?} Our study mainly focuses on finding the best strategies for both players.

We first prove that in the general case, i.e., for an arbitrary network, the problem of finding an optimal plan of the strategic attack is NP-complete. Given this hardness result, we next analyse whether the same holds when we focus our attention on some simple network structures. In doing so, we follow the studies of the previous models in the literature (with no strategic spread of attack) for which polynomial results were obtained for some simple networks~\cite{bachrach2013contagion, acemoglu2016network, lou2017multidefender}, \eg, by considering only a very limited number of nodes~\cite{bachrach2013contagion} or by considering only symmetric networks~\cite{acemoglu2016network}. The literature shows that in such cases finding the optimal strategy of an attacker is typically relatively easy---she just selects the target offering the highest expected utility. Given this, the remaining analysis was typically focused on analyzing the strategies of the defender. Interestingly, we are still able to find optimal strategies for some some simple network structures in our more challenging model. In particular, for cliques and stars, we develop polynomial algorithms to search for the optimal strategies for both the attacker and the defender. 

Next, for arbitrary but small networks, we develop a dynamic programming algorithm that finds an optimal attack more efficiently than checking all possible sequences, albeit still in exponential time. We also show an effective way of protecting the network from a fixed set of attacker's strategies, similar to that proposed by Lou~\etal~\cite{lou2017multidefender}.

Our hardness result shows that considering the strategic spread of an attack, rather than a stochastic one, significantly increases the challenge faced by both players.  
Indeed, in the Stackelberg game, the NP-completeness of finding an optimal attack constitutes a challenge not only to the attacker but also to the defender. This is because, it may be prohibitively difficult for the defender to choose her best strategy given that this choice is based on the prediction of the best response by the attacker.

A popular way of dealing with such computational challenges efficiently, although most probably suboptimally, is to resort to heuristics. Hence, we propose a variety of heuristic strategies for both players. As for the attacker's strategies, we consider the ones found effective in strategic diffusion models~\etal~\cite{alshamsi2018optimal}, as well as strategies based on exploration-exploitation techniques~\cite{sutton1998reinforcement}.
As for the defender's strategies, the previous studies found that the optimal defender's utility is achieved when security levels of all targets are equal~\cite{bachrach2013contagion, lou2017multidefender}.
However, our findings indicate that making all nodes equally difficult to target is not particularly effective against attack that spreads strategically.
Instead, in most cases the defense based on one of the network centrality measures provides better results.

\section{Preliminaries}
\label{sec:preliminaries}

\noindent Let $G = (V, E)$ be a network, where $V=\{v_1,\ldots,v_n\}$ denotes the set of $n$ nodes and $E \subseteq V \times V$ denotes the set of edges.
Let $\mathbb{G}(V)$ be a set of all networks over the set of nodes $V$.
We denote an edge between nodes $v_i$ and $v_j$ by $(v_i,v_j)$.
In this work, we consider only \textit{undirected} networks, \ie, networks in which we do not discern between edges $(v_i,v_j)$ and $(v_j,v_i)$.
We also assume that networks do not contain self-loops, \ie, $\forall_{v_i \in V}(v_i,v_i) \notin E$.
We denote by $N_G(v_i)$ the set of \textit{neighbours} of $v_i$ in $G$, \ie, $N_G(v_i) = \{v_j \in V : (v_i,v_j) \in E\}$.
Furthermore, we denote by $d_G(v_i)$ the \textit{degree} of $v_i$ in $G$, \ie, $d_G(v_i) = |N_G(v_i)|$. We will often omit the network itself from the notation when it is clear from the context, \eg, by writing $N(v)$ instead of $N_G(v)$.

We will also use a concept of a path in the graph.
In particular, a path is a sequence of distinct nodes, $\langle v_l, \ldots, v_k\rangle$, such that every two consecutive nodes are connected by an edge.
The length of a path is the number of edges in that path. 
Let $\Seq(V)$ denote the set of all \textit{sequences} of elements from $V$ without repetitions.
Let $\seq_i$ denote the $i$-th element of sequence $\seq \in \Seq(V)$.
Finally, let $|\seq|$ denote the number of elements in $\seq \in \Seq(V)$.

In graph theory and social network analysis, the importance of nodes is quantified using functions of the form: $c : \mathbb{G}(V) \times V \rightarrow \R$, called \textit{centrality measures}.
Three such fundamental centrality measures are the \textit{degree} centrality~\cite{shaw1954group}, the \textit{closeness} centrality~\cite{beauchamp1965improved}, and the \textit{betweeness} centrality~\cite{anthonisse1971rush, freeman1977set}. Specifically, given a node $v_i \in V$ and an undirected network, we have that the formula of degree centrality is: $c_{degr}(v_i) = \frac{d(v_i)}{n - 1}$. In words, degree centrality ranks the nodes simply according to their number of neighbors. Next,  the formula of closeness centrality is:
$$
	c_{clos}(v_i) = \frac{n - 1}{\sum_{v_j \in V}\dist(v_i,v_j)},
$$
\noindent where $\dist(v_i,v_j)$ denotes the distance between $v_i$ and $v_j$.
Intuitively, closeness centrality focuses on distances among nodes and gives high value to the nodes that are, on average, close to all other nodes. Furthermore, betweenness centrality is defined as:
$$
	c_{betw}(v_i) = \frac{2}{(n-1)(n-2)}
		\sum_{v_j,v_k \in V \setminus \{ v_i \}}
			\frac
				{|\{ p \in sp(v_j,v_k) : v_i \in p \}|}
				{|sp(v_j,v_k)|},
$$
\noindent where $sp_G(v_i,v_j)$ denotes the set of all shortest paths between any pair of nodes, $v_i$ and $v_j$.
Thus, in words, betweenness centrality considers all shortest paths between any two nodes in the network.
The more such shortest paths a particular node belongs to, the more important it is.

Apart from the above three fundamental centrality measures, many other have been proposed in the literature. One interesting class are \textit{game-theoretic centrality} measures.
The key idea behind game-theoretic centrality measures is to treat nodes as players in a cooperative game, where the value of each coalition of nodes is determined by certain graph-theoretic properties~\cite{michalak2013efficient,skibski2018axiomatic}. The key advantage of this approach is that nodes are ranked not only according to their individual roles in the network but also according to how they contribute to the role played by all possible subsets (or groups) of nodes. In game theory, the function  that assigns a value to any subset of nodes, $u: 2^V \rightarrow \R$, is called the \textit{characteristic function}. Now, the best known game-theoretic method to evaluate the role of a player in the game is the Shapley value:
$$
\psi(v_i, u) = \sum_{C \subseteq V \setminus \{v_i\}} \frac{|C|!(n-|C|-1)!}{n!} \left( u(C \cup \{ v_i \}) - u(C) \right).
$$
\noindent The value of $\psi(v_i, u)$ is the Shapley value of player $v_i$ in game $u$. Now,  in  each Shapley value-based centrality measure $c$ the centrality score of a node $v_i$ is equal to its Shapley value in the game $u$ associated with this centrality measure, \ie, $c(v_i) = \psi(v_i, u)$.
We consider the following centralities based on the Shapley value:
\begin{itemize}
\item The \textit{Shapley value-based degree centrality}~\cite{michalak2013efficient} of a node is its Shapley value in a game with the characteristic function:
$$
u(C) = \left|C \cup \bigcup_{v_i \in C} N(v_i)\right|.
$$
In words, in this game the value of a given coalition is the size of the coalition and its neighborhood.

\item The \textit{Shapley value-based closeness centrality}~\cite{michalak2013efficient} is defined by:\footnote{\footnotesize Note that there are many other ways in which one could define the game-theoretic extension of  closeness centrality~\cite{tarkowski2018efficient}.}
$$
u(C) = \left|C \cup \{v_j \in V: \exists_{v_i \in C} dist(v_i,v_j) \leq \delta\}\right|.
$$
In words, in this game the value of a given coalition is the number of its nodes and all nodes in distance at most $\delta$. In this work we assume $\delta=3$.

\item \textit{Shapley value-based betweenness centrality}~\cite{szczepanski2016efficient} is defined by:
$$
u(C) = \sum_{v_j,v_k \in V \setminus C}
			\frac
				{|\{ p \in sp(v_j,v_k) : \exists_{v_i \in C} v_i \in p \}|}
				{|sp(v_j,v_k)|}.
$$
In words, the value of a given coalition depends on the percentage of the shortest paths between pairs of nodes not from the coalition, that the members of the coalition belong to.
\end{itemize}


\section{The Model of Security Diffusion Games} 
\label{sec:game-definition}

\noindent
We define a security diffusion game by combining ideas of security games and strategic diffusion in networks. 

\begin{figure*}[t]
\centering
\includegraphics[width=.8\linewidth]{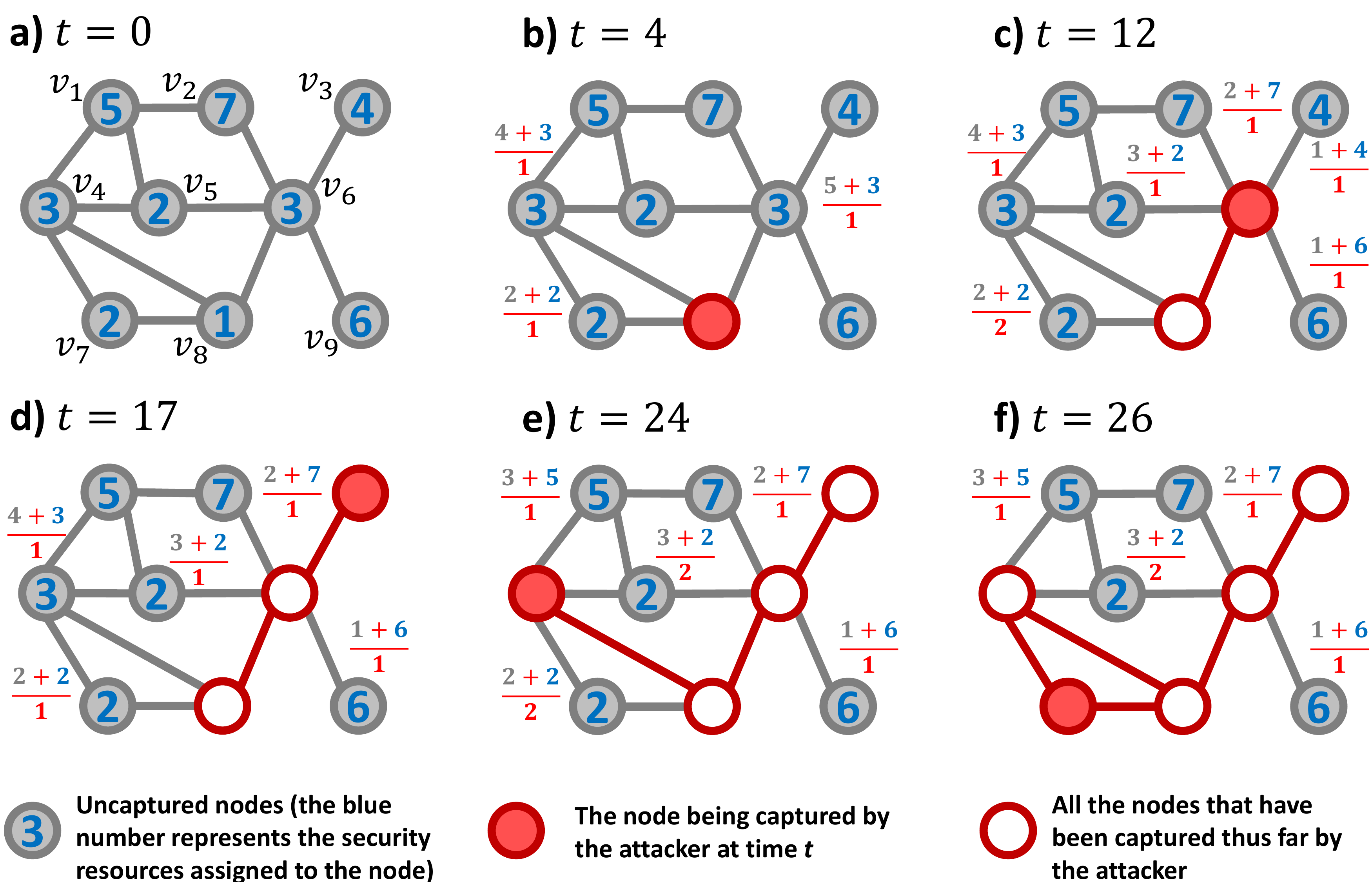}
\caption{Figure (a) depicts a sample network with the set of nodes $\{v_1,\ldots,v_9\}$. Figures (b)-(f) depict sequence $\seq = (v_8,v_6,v_3,v_4,v_7)$ in which the attacker chooses to spread the attack. Note that, for the sake of clarity, we omit the labels of nodes in Figures (b)-(f). 
}
\label{fig:example}
\end{figure*}

\begin{definition}[Security Diffusion Game]
The security diffusion game is a Stackelberg game between two players, the defender (the leader) and the attacker (the follower).
It is defined by a tuple $(G, \SR, T)$, where $G = (V,E)$ is a given network, $\SR \in \R^+$ is the amount of security resources available to the defender, and $T \in \R^+$ is the attacker's time limit before the defender discovers the attack and stops it completely. The game has two steps:

\begin{itemize}
\item Step~1---at the beginning of the game, the defender distributes security resources among nodes of the network, \ie, she chooses $\sr : V \rightarrow \R$ such that $\forall_{v \in V} \sr(v) \geq 0$ and $\sum_{v \in V} \sr(v) \leq \SR$.
\item Step~2--- next, having observed the security efforts of the defender in Step~1, the attacker chooses the sequence, $\seq \in \Seq(V)$,  in which nodes will be targeted for attacks. A given node in the sequence can be attacked only when all preceding nodes have been successfully captured.
\end{itemize}
The activation time of each node $v$,  denoted $\et(v)$, is a function of the network topology, the set of already activated  nodes, and the distribution of the security resources (to improve readability we omit the additional arguments in the notation). The utilities of the attacker and the defender are denoted $u_A$ and $u_D$, respectively, and are functions of the network, the distribution of security resources, the sequence of the attack, and the time limit $T$: 
\begin{align*}
u_{D} = f^D(G,\sr,\seq,T); \\
u_{A} = f^A(G,\sr,\seq,T).
\end{align*}
\end{definition}

In this work we make the following assumptions about the particular forms of the utility functions and the activation time function. As for the utility functions, we assume that $u_A = \en(\seq,T)$, where $\en(\seq,T)$ denotes the number of nodes activated within time limit $T$ when using attack sequence $\seq$ (to simplify notation we assume that the network and the distribution of security resources are known from the context). 
Furthermore, to express the competitive nature of the game, we define the utility of the defender as $u_D = -u_A = -\en(\seq,T)$, which makes the security diffusion game a zero-sum game.

We define the activation time of a node as a function of its assigned security resources, its degree, and the number of the activated nodes among its neighbours:\footnote{\footnotesize 
The function in equation \eqref{cs:time-second} is a form of the Tullock (contest success) function which is widely used in the security games literature~\cite{bachrach2013contagion, acemoglu2016network, lou2017multidefender} and has various desirable properties~\cite{skaperdas1996contest, bachrach2013contagion}.
Furthermore, we note that in cyber-security security domain a similar idea is expressed by the Byzantine generals problem~\cite{lamport1982byzantine}.}
\begin{numcases}{\et(v) = }
d(v) + \sr(v) & if $v$ is the seed node, \ie, the first node in the sequence $\seq$ \label{cs:time-first} \\
\frac{d(v) + \sr(v)}{|N(v) \cap I|} & otherwise, \label{cs:time-second}
\end{numcases}
\noindent where $I$ is the set of currently activated nodes. The above form of activation time function combines notions from strategic diffusion~\cite{alshamsi2018optimal}  and from security games~\cite{paruchuri2008playing}. In particular:
\begin{itemize}
    \item The more security resources assigned to $v$, the more time it takes to activate this node (see the term $\sr(v)$ in equation~\ref{cs:time-first} for the seed node and the nominator of equation~\ref{cs:time-second} for all the other nodes);
    \item The time of activation decreases with the number of activated neighbours (see the term $|N(v) \cap I|$ in the denominator of equation~\ref{cs:time-second}). In other words, the more neighbours of $v$ have been already activated, the more this node is susceptible to become activated itself.
\end{itemize}
Furthermore, the activation time increases with the total degree of the node (see the term $\sr(v)$ in equation~\ref{cs:time-first} for the seed node and the nominator of equation~\ref{cs:time-second} for all the other nodes).  This is a well established idea in the diffusion models~\cite{kempe2003maximizing, alshamsi2018optimal} and complex contagion~\cite{centola2007complex} literature and it has the following intuitive interpretation. If the number of already activated nodes around $v$ makes only a small proportion of the total number of $v$'s neighbours then $v$ is less likely to become active itself.

We also note that the activation time function (\ref{cs:time-first}-\ref{cs:time-second}) implies that, while we allow the attacker to choose any node in the network as the seed node, any other node needs at least one activated neighbour to become  activated itself. Hence, the strategy of the attacker (\ie, the sequence of nodes that she chooses to attack) constitutes by definition a path in the graph.

\begin{example}
Figure~\ref{fig:example} presents a sample network of $n=9$ nodes and a possible sequence, $\seq = (v_8,v_6,v_3,v_4,v_7)$, in which the attacker  captures the nodes within the time limit $T \geq 26$, \ie:
\begin{itemize}
    \item Figure~\ref{fig:example} a) depicts a not-yet-attacked network at time $t=0$, where the number inside a node shows security resources assigned to this node by the defender;
    \item Figure~\ref{fig:example} b) depicts the moment at which the attacker, having chosen to attack node $v_8$, captures it (at time $t = d(v_8)+\sr(v_8) = 3+1 = 4$);
    \item Figure~\ref{fig:example} c) depicts the activation of $v_6$ at time $t = 4+ \left(d(v_6)+\sr(v_6)\right)/\left(|N(v_6) \cap I| \right) = 4+(5+3)/1 = 12$);
    \item Figure~\ref{fig:example} d) depicts the activation of $v_3$ at time $t = 12+ \left(d(v_3)+\sr(v_3)\right)/\left(|N(v_3) \cap I| \right) = 12+(1+4)/1 = 17$);    
    \item Figure~\ref{fig:example} e) depicts the activation of $v_4$ at time $t = 17+ \left(d(v_4)+\sr(v_4)\right)/\left(|N(v_4) \cap I| \right) = 17+(4+3)/1 = 24$); and
    \item Figure~\ref{fig:example} f) depicts the activation of $v_7$ at time $t = 24+ \left(d(v_7)+\sr(v_7)\right)/\left(|N(v_7) \cap I| \right) = 24+(2+2)/2 = 26$).
\end{itemize}
\end{example}
We can also relate our model to both motivating scenarios presented in the introduction.  Our sample competitive scenario involves two companies---an incumbent and an entrant. The former one is the defender who can invest resources into securing higher levels of loyalty of its customer base. Conversely, the entrant is the attacker whose goal is to attract as many customers of the defender as possible. The activation time for each customer (node) represents here the difficulty for the entrant of doing so. As per equation~(\ref{cs:time-first}-\ref{cs:time-second}), in our model this difficulty for any customer $v$ is assumed to be in the inverse proportion to the share of the already attracted (activated) customers in the neighbourhood of $v$. As we already argued, this is a well-established approach in the complex contagion literature~\cite{centola2007complex}. In our context, such a definition of the activation time reflects that in many markets (especially hi-tech ones) the customers are more likely to switch to a competing company (or a technology) if more of their friends already use it. We also note that the strategy of the attacker (the sequence of nodes) induces a connected subgraph in the network;  hence, our model fits the settings of the word-of-mouth marketing~\cite{kozinets2010networked}.

Our sample malicious scenario involves either a criminal or extremist organisation who selects the candidates for new members using social ties of their existing rank and file. Here, the defender is either the law enforcement or a charitable institution whose goal is to protect a community. The activation time for each node represents the threshold of an individual to  become attracted by the covert organisation. The meaning of equation~(\ref{cs:time-first}-\ref{cs:time-second}) in this context is that any individual $v$ is assumed to be more attracted to become a member of a covert organisation if relatively many of her neighbours have already done so. This reflects many studies of illegal behaviours in the literature---for instance, Bikhchandani~\etal~\cite[p. 994]{bikhchandani1992theory} ``consider a teenager deciding whether or not to try drugs. A strong motivation for trying out drugs is the fact that friends are doing so. Conversely, seeing friends reject drugs could help persuade the teenager to stay clean''. Furthermore, we note that the fact that the strategy of the attacker induces a connected subgraph is natural in this scenario as it ensures that the covert network remains connected.

\section{Finding Optimal Strategies}

We begin our analysis of the model by considering the problem of finding optimal strategies. Since the Stackelberg game is solved by backward induction,  we first consider the strategy  of the attacker (given an arbitrary strategy of the defender). In particualr,  we formally define the Optimal Attack Problem as follows:

\begin{definition}[Optimal Attack Problem]
This problem is defined by a tuple $(G,\sr,\vs,T)$, where $G=(V,E)$ is a given network, $\sr : V \rightarrow \R$ is the defender's distribution of security resources, $\vs \in V$ is the seed node of the attacker, and $T$ is the attacker's time limit.
The goal is to identify $\seq^* \in \Seq(V)$ such that $\seq^*$ is in $\argmax_{\seq \in \Seq(V) : \seq_1 = \vs} \en(\seq,T)$.\footnote{Recall that $\en(\seq,T)$ denotes the number of nodes activated within time limit $T$ when using attack sequence $\seq$.}
\end{definition}

As mentioned in the introduction, in the existing literature, where the attack spreads in the form of a stochastic process, the attacker's strategy consists only of choosing a seed node; hence, it is usually easy to pick an optimal strategy of the attacker. However, finding the optimal strategy of the attacker in our model is no longer trivial. Indeed, as we show in Section~\ref{sec:npc}, in the general case, \ie, when we allow a network to have any structure, the problem of finding a sequence of nodes providing the largest number of nodes activated within a time limit is NP-complete.
Hence, even if the attacker is given complete information about the activities of the defender, finding an optimal strategy of the attack is computationally intractable.

However, we are able to find optimal strategies of both attacker and defender for more restricted network structures.
In particular, we find that, if the network under consideration is a star, the attacker should attack nodes in non-decreasing order of their activation times, with the additional constraint that the center of the star has to be one of the first two nodes in the sequence (see Section~\ref{app:stars}). As for the defender strategy, if it is possible not to let the attacker activate even a single node given the time limit, then the defender should try to make the time of activation equal for all nodes (if possible) or all peripheral nodes (if there are not enough security resources).
Alternatively, the defender should assign all security resources to the central node (since again, it has to be one of the first two nodes in the sequence). We also analyse the optimal strategies for cliques, where the optimal attack sequence orders the nodes non-decreasingly according to assigned security resources, while the defender should spread her security resources uniformly (see Appendix~\ref{app:cliques}). We can also efficiently find an optimal attack sequence for a tree (see Appendix~\ref{app:trees}).

In the case we need to find an optimal attacker's strategy for an arbitrary (but small) network, we present an algorithm based on dynamic programming technique that returns solution in time $\mathcal{O}(2^n n^2)$, where $n$ is the number of nodes in the network.
Its pseudocode can be found as Algorithm~\ref{alg:dynamic-programming} in Appendix~\ref{app:dynamic}.
While it is more effective than simple exhaustive search, its exponential time complexity makes it suitable for rather small networks.

Finally, we also present a mixed-integer-linear-programming method to compute optimal defense strategies when the strategies of the attacker can be enumerated, \ie, when the set of the attacker's strategies is of reasonable size.
The formulation has $\mathcal{O}(|\avs|n)$ constraints, where $\avs$ is the set of strategies available to the attacker (see Section~\ref{app:milp}).
While this method would be highly impractical if we assume that attacker can use any strategy, it can help defend the network effectively when some additional constraints are put on the attacker, \eg, we can ensure that some ways of attack are completely implausible.

\subsection{Complexity Analysis of the Optimal Attack Problem}
\label{sec:npc}

The following result holds.

\begin{theorem}
\label{thrm:optimal-attack-npcomplete}
The Optimal Attack problem is NP-complete.
\end{theorem}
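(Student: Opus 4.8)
The plan is to show that the Optimal Attack problem is NP-complete by establishing membership in NP and then proving NP-hardness via a reduction from a known NP-complete problem.

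\medskip

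\noindent\textbf{Membership in NP.}
First I would argue that the problem is in NP. The natural certificate is a proposed attack sequence $\seq \in \Seq(V)$ with $\seq_1 = \vs$. Given such a sequence, we can verify in polynomial time that it is a valid path (each node after the first has at least one already-activated neighbour), compute the cumulative activation times using equations~\eqref{cs:time-first}--\eqref{cs:time-second} in a single left-to-right pass, and count how many nodes are activated before the cumulative time exceeds $T$. To phrase the optimization problem as a decision problem suitable for NP-completeness, I would introduce a target parameter $k$ and ask whether there exists a sequence activating at least $k$ nodes within time $T$. Verification of this decision version is clearly polynomial, so membership in NP is routine.

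\medskip

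\noindent\textbf{NP-hardness via reduction.}
The heart of the proof is the hardness reduction, and this is where I expect the main obstacle to lie. The key structural feature of our model is that activation time of a non-seed node $v$ is $\frac{d(v)+\sr(v)}{|N(v)\cap I|}$, so a node activated late (when many of its neighbours are already captured) is cheap, while a node activated early is expensive. This creates a tension: to activate many nodes within a tight budget $T$, the attacker must choose a sequence in which, as often as possible, each newly attacked node already has several activated neighbours. This ordering constraint is precisely the kind of combinatorial bottleneck that encodes hard problems. I would look for a reduction from \textbf{Hamiltonian Path} (or a closely related problem such as \textbf{Exact Cover} or \textbf{3-SAT}), because the requirement that the attack sequence be a path in the graph already resonates strongly with Hamiltonicity. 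The plan is to construct, from an instance of the source problem, a network $G$ together with a security distribution $\sr$, a seed $\vs$, a time limit $T$, and a target count $k$, such that a sequence activating at least $k$ nodes within time $T$ exists if and only if the source instance is a yes-instance.

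\medskip

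\noindent\textbf{Engineering the gadget.}
The delicate part will be calibrating degrees and security resources so that the activation-time arithmetic forces the desired combinatorial structure. Because $\seq$ must induce a path and the cost of a node depends on how many of its neighbours precede it, I would design gadgets in which activating nodes ``out of the intended order'' is prohibitively slow (pushing the cumulative time past $T$), while the intended traversal keeps costs low enough to reach $k$ activations in time. In particular, I would use the denominator $|N(v)\cap I|$ as a lever: by attaching auxiliary high-degree structure or by tuning $\sr$, I can make certain nodes affordable only when reached after a specific set of predecessors has been activated, thereby simulating the clauses, variables, or covering constraints of the source problem. The main technical hurdle will be verifying both directions of the equivalence simultaneously---showing that any schedule meeting the target and time budget must respect the intended traversal order (soundness), while confirming that a yes-instance genuinely yields such a schedule (completeness)---and ensuring all the numbers ($d(v)$, $\sr(v)$, $T$, $k$) remain polynomially bounded in the size of the source instance so that the reduction runs in polynomial time. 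Once the gadget's arithmetic is pinned down, combining it with the NP-membership argument yields NP-completeness.
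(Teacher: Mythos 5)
Your NP-membership argument is fine and matches the paper's: pass to the decision version with a target count, then a candidate sequence $\seq$ can be checked in one left-to-right pass by computing cumulative activation times. The gap is in the hardness part: what you have written is a plan to find a reduction, not a reduction. Everything that constitutes the actual proof --- the final choice of source problem, the concrete gadget, the exact values of the degrees, of $\sr(v)$, of $T$ and of the target count, and the verification of both the completeness and soundness directions --- is deferred to future "engineering." Since the entire difficulty of the theorem lies precisely in that engineering (getting numbers that simultaneously force the intended order and keep the intended schedule feasible), the proposal cannot be credited as a proof of NP-hardness; it is an accurate description of what a proof would have to accomplish, but no more.

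For comparison, the paper reduces from 3-Set Cover rather than Hamiltonian Path. Each set $S_i$ and each universe element $u_j$ becomes a node; intermediate nodes $l_{i,j}$ wire $S_i$ to the elements it contains; a bundle of $4k$ cheap pendant nodes hangs off each $u_j$, so that activating every element node is the only way to reach the required count $r^* = b + (4k+2)m$; and the security resources are tuned so that set-nodes and link-nodes are uniformly expensive (on the order of $20mk$) while element nodes are equalized and cheap. The time budget $T^*$ then allows at most $b$ set-node activations, which is exactly the correspondence with a cover of size $b$. Note also that your stated motivation for Hamiltonian Path is weaker than it looks: the attack sequence need not be a path in the graph --- each newly attacked node only needs \emph{some} previously activated neighbour, so the activated set is merely connected, and consecutive nodes of $\seq$ need not be adjacent --- and the objective is to activate many nodes within a time budget, not to traverse all nodes. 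That is why a covering-type source problem fits the arithmetic of $\frac{d(v)+\sr(v)}{|N(v)\cap I|}$ so naturally: the reduction only needs an "unlocking" structure (sets unlock elements) together with a counting reward, rather than a global traversal constraint, which would be much harder to enforce with this activation-time function.
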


\begin{proof}

The decision version of the optimization problem is the following: given a network $G=(V,E)$, a distribution of security resources $\sr$, a seed node $\vs$, a time limit $T$, and a value $r^* \in \N$ expressing the number of nodes to be activated within time limit, does there exist a sequence of nodes $\seq^* \in \Seq(V)$ such that $\en(\seq^*,T) \geq r^*$?

This problem clearly is in NP. This is because, given a solution, \ie, a sequence $\seq^* \in \Seq(V)$, we can compute the time of activation of every node in the sequence, and verify in polynomial time whether $\en(\seq^*,T) \geq r^*$.

To prove NP-hardness, we will show a reduction from the NP-complete 3-Set Cover problem. To this end, we will build a network that reflects the structure of a given 3-Set Cover problem instance, and use it as an input for the Optimal Attack problem. Finally, we will show that an optimal solution of the Optimal Attack problem corresponds to a solution of the given instance of the 3-Set Cover problem.

More formally, an instance of the NP-complete 3-Set Cover problem is defined by a universe $U=\{u_1, \ldots, u_m\}$, a collection of sets $S = \{S_1, \ldots, S_k\}$ such that $\forall_j S_j \subset U$ and $\forall_j |S_j| = 3$, and an integer $b \leq k$.
The goal is to determine whether there exist $b$ elements of $S$, the union of which equals $U$.
In what follows, let $\Su(u_i)=|\{S_j \in S : u_i \in S_j\}|$, \ie, $\Su(u_i)$ is the number of sets in $S$ that contain $u_i$.

First, let us create a network $G$ that reflects the structure of the instance of the 3-Set Cover problem under consideration. It is shown in Figure~\ref{fig:npc}, where:
\begin{itemize}
\item \textbf{The set of nodes:}
For every $S_i \in S$, we create a single node, denoted by $S_i$.
For every $u_i \in U$, we create a single node, denoted by $u_i$, as well as $4k$ nodes $a_{i,1}, \ldots, a_{i,k}$.
Moreover, for every $u_j \in U$ and every $S_i \in S$ such that $u_j \in S_i$, we create a single node $l_{i,j}$.
Additionally, we create a single node $\vs$.
\item \textbf{The set of edges:}
For every node $S_i \in S$ we create an edge $(S_i,\vs)$.
For every node $a_{i,j}$ we create an edge $(a_{i,j},u_i)$.
Finally, for every node $l_{i,j}$ we create two edges, $(S_i,l_{i,j})$ and $(u_j,l_{i,j})$.
\end{itemize}

Next, let $\sr^*$ be a distribution of security resources such that:
\begin{itemize}
\item $\sr^*(\vs) = 0$,
\item $\sr^*(a_{i,j}) = 0$ for every node $a_{i,j}$,
\item $\sr^*(u_i) = k - \Su(u_i)$ for every node $u_i$,
\item $\sr^*(S_i) = 20mk$ for every node $S_i$,
\item $\sr^*(l_{i,j}) = 20mk + 2$ for every node $l_{i,j}$.
\end{itemize}

This particular distribution of security resources has the following properties:

\begin{itemize}
\item it gives each node $u_i$ the same time of activation (assuming an equal number of activated neighbors), and 
\item it gives each of the nodes $S_i$ and $l_{i,j}$ the minimal time of activation that is greater than the maximal time of activation of nodes $u_i$, \ie, activating node $S_i$ or $l_{i,j}$ always takes more time than activating node $u_i$, no matter what the number of active neighbors is.
\end{itemize}

These properties will play an important role later in the proof. Furthermore, let $T^* = (b + m)(20mk+4) + 9mk + 1$ and $r^* = b + (4k+2)m$.
These are the time and the number of activated nodes of the strategy that corresponds to a solution to the 3-Set Cover problem.

\begin{figure}[t]
\centering
\includegraphics[width=.3\linewidth]{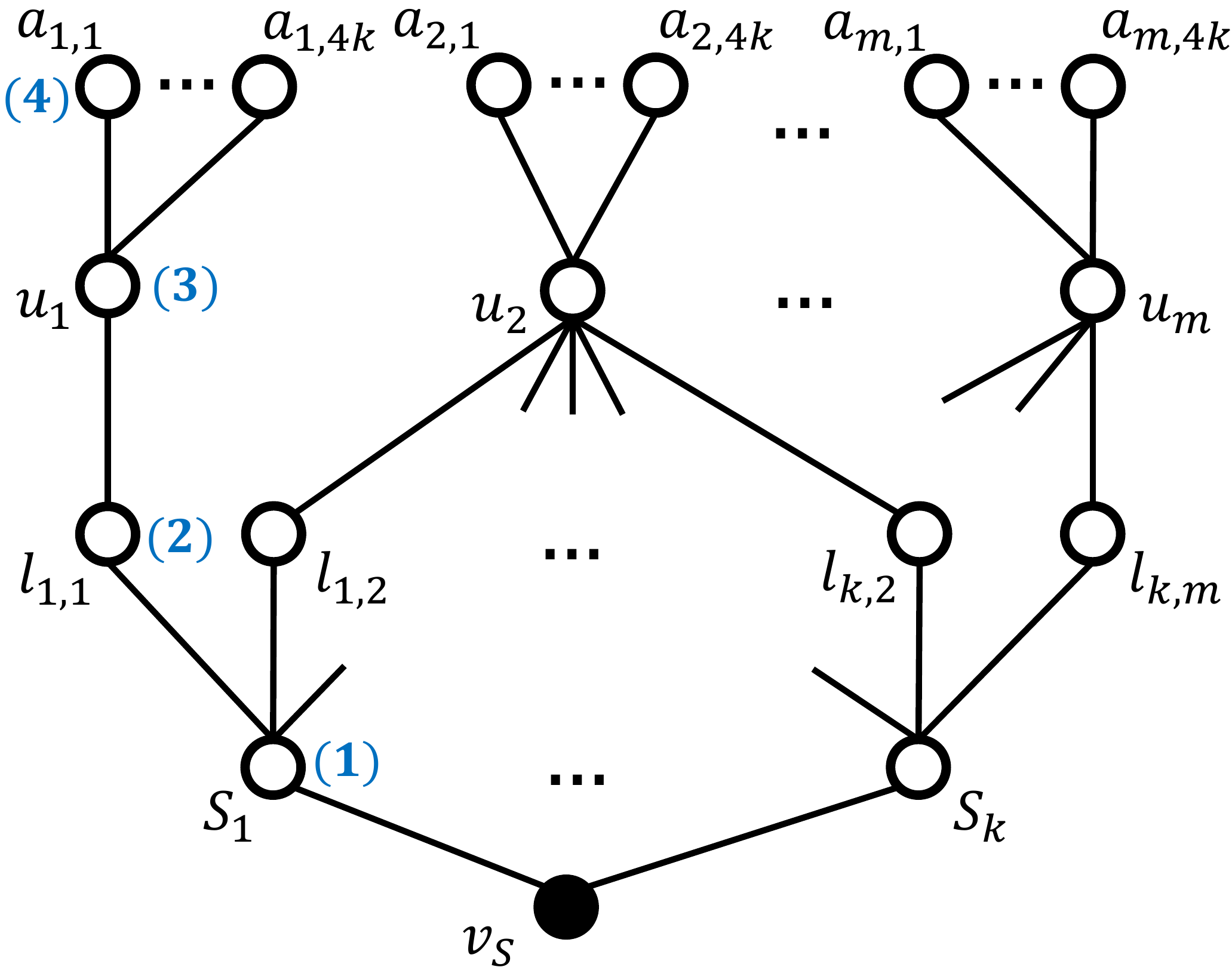}
\caption{The network used to reduce the 3-Set Cover problem to the Optimal Attack problem.
The seed node, $\vs$, is marked black.
Blue numbers in the parenthesis show the order of nodes that have to be activated before activating node $a_{1,1}$.}
\label{fig:npc}
\end{figure}

Now, consider the instance of the Optimal Attack problem in the form of $(G,\sr^*,\vs,T^*,r^*)$.
We will show next that an optimal solution to this instance corresponds to an optimal solution to the 3-Set Cover problem.
We begin by observing that the following holds for the times of activation of the nodes in $V$:
\begin{itemize}
\item $\et(a_{i,j}) = 1$ for every node $a_{i,j}$, as we have $d(a_{i,j}) + \sr^*(a_{i,j}) = 1$ and $|N(a_{i,j}) \cap I| \leq 1$;
\item $\et(u_i) \leq 5k$ for every $u_i \in U$, as we have $d(u_i) + \sr^*(u_i) = 5k$ and $|N(u_i) \cap I| \geq 1$;
\item $5mk+1 \leq \et(S_i) \leq 20mk+4$ for every $S_i \in S$, as we have $d(S_i) + \sr^*(S_i) = 20mk + 4$ and $1 \leq |N(S_i) \cap I| \leq 4$;
\item $10mk+2 \leq \et(l_{i,j}) \leq 20mk+4$ for every node $l_{i,j}$, as we have $d(l_{i,j}) + \sr^*(l_{i,j}) = 20mk + 4$ and $1 \leq |N(l_{i,j}) \cap I| \leq 2$.
\end{itemize}

We will now show that if there exists a solution $S^* \subset S$ to the given instance of the 3-Set Cover problem, then there exists a solution $\seq^*$ to the constructed instance of the Optimal Attack problem such that $\en(\seq^*,T^*) = r^* = b + (4k+2)m$.
Indeed, we can construct such solution by activating every node $S_i \in S^*$ ($b$ nodes activated in time $b(20mk+4)$), choosing for every $u_j \in U$ node $S_i \in S^*$ such that $u_j \in S_i$ and activating nodes $l_{i,j}$ and $u_j$ ($2m$ nodes activated in time $(20mk+4)m+5mk$), and finally activating all nodes $a_{i,j}$ ($4mk$ nodes activated in time $4mk$).

In what follows, let $\seq^*_T$ denote the maximum prefix of $\seq^*$ such that $\et(\seq^*_T) < T^*$. Now, we have to show that if there exists a solution $\seq^*$ to the constructed instance of the Optimal Attack problem, then there exists also a solution $S^* \subset S$ to the given instance of the 3-Set Cover problem.
To this end, we will show that for the prefix $\seq^*_T$ of any such solution $\seq^*$ it must be that, for every $u_j \in U$, there exists a node $S_i \in \seq^*_T$ such that $u_j \in S_i$ and $|\seq^*_T \cap S| \leq b$.
Notice that when this is the case we can obtain the solution to the given instance of the 3-Set Cover problem by taking $S^* = \seq^*_T \cap S$.

First, we observe that any sequence $\seq^*_T$, that does not contain all nodes $u_i$, cannot have the required number of $r^*$ activated nodes within the time limit.
Assume to the contrary, that there exists such a sequence.
Since there exists node $u_i$ that is not activated, neither of the nodes $a_{i,j}$ are activated.
Therefore this sequence has to activate $4k+1$ of the nodes $S_i$ or $l_{i,j}$ instead.
However, there are only $4k$ nodes $S_i$ and $l_{i,j}$.
Hence, sequence $\seq^*_T$ that is a solution of the constructed instance of the Optimal Attack problem must activate all nodes $u_j$.
In order to activate given node $u_j$ we need to activate at least one node $l_{i,j}$, and in order to achieve that we need to activate node $S_i$.
Therefore, because of the way we constructed the network, for every $u_j \in U$ there exists a node $S_i \in \seq^*_T$ such that $u_j \in S_i$.

Now we need to show that for $\seq^*_T$ being a solution of the constructed instance of the Optimal Attack problem we have $|\seq^*_T \cap S| \leq b$, \ie, that any sequence being solution cannot activate more than $b$ nodes $S_i$ within time limit.
As shown above, $\seq^*_T$ has to activate all $u_j$ nodes.
Notice that in order to activate $u_j$ we have to activate a node $l_{i,j}$ with only one activated neighbor (its only other neighbor being $u_j$).
Activating $m$ such nodes takes $m(20mk+4)$ time.
Hence, in order to activate more than $b$ nodes $S_i$ within time limit we would have to do it in time shorter (as we still have to activate nodes $u_j$) than $T^*-m(20mk+4)=b(20mk+4) + 9mk$.
Now, activating node $S_i$ (including the time necessary to activate neighboring nodes $l_{i,j}$) takes time:
\begin{itemize}
\item $20mk+4$ when activating with only one active neighbor $\vs$;
\item $\frac{20mk+4}{2}+(20mk+4)$ when activating with two active neighbors;
\item $\frac{20mk+4}{3}+2(20mk+4)$ when activating with three active neighbors;
\item $\frac{20mk+4}{4}+3(20mk+4)$ when activating with four active neighbors.
\end{itemize}
This is because, in order to be used to speed up the activation of node $S_i$, node $l_{i,j}$ has to be activated beforehand, in time $20mk+4$.
Hence, activating a single node $S_i$ takes at least $20mk+4$ time and it is not possible to activate more than $b$ nodes $S_i$ in time $b(20mk+4) + 9mk$.

This implies that the optimal solution to the constructed instance of the Optimal Attack problem must correspond to the optimal solution to the given instance of the 3-Set Cover problem, thus concluding the proof.
\end{proof}

\subsection{Optimal Strategies for Stars}
\label{app:stars}
For stars, we have the  following result:
\begin{theorem}
\label{thrm:star}
Let network $G=(V,E)$ be a star with center $a$ and peripheral nodes $b_1,\ldots,b_{n-1}$. Let $\sr$ be a particular distribution of security resources (assume that nodes $b_i$ are ordered non-decreasingly according to $\sr(b_i)$).
An optimal attack sequence is then:
\begin{itemize}
\item $\seq^* = \langle a, b_1, b_2, \ldots, b_{n-1} \rangle$ if $\sr(b_1) \geq \sr(a)+n-2$ (i.e., when $a$ has the lowest time of activation)
\item $\seq^* = \langle b_1, a, b_2, \ldots, b_{n-1} \rangle$ otherwise.
\end{itemize}
An optimal defense strategy against an optimal attack strategy is then:
\begin{itemize}
\item $\sr^*(a) = \frac{\SR-(n-1)(n-2)}{n}$ and $\sr^*(b_i) = \frac{\SR+n-2}{n}$ if $T \leq \frac{\SR+2(n-1)}{n}$,
\item $\sr^*(a) = 0$ and $\sr^*(b_i) = \frac{\SR}{n-1}$ if $\frac{\SR+2(n-1)}{n} < T \leq \frac{\SR+n-1}{n-1}$,
\item $\sr^*(a) = \SR$ and $\sr^*(b_i) = 0$ otherwise.
\end{itemize}
\end{theorem}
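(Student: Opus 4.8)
The plan is to treat the two players separately, exploiting that in a star the activation times are rigid. Every peripheral $b_i$ has $a$ as its only neighbour, so a non-seed $b_i$ can be activated only after $a$, and if some $b_i$ is the seed then the only node with an activated neighbour is $a$; hence $a$ must occupy position one or two of any sequence activating more than one node. Consequently each $b_i$ is always activated with exactly one active neighbour ($a$), and $a$ too is always activated with exactly one active neighbour, so by \eqref{cs:time-first}--\eqref{cs:time-second} the activation times do not depend on the order: $\et(a)=(n-1)+\sr(a)$ and $\et(b_i)=1+\sr(b_i)$. This turns the Optimal Attack Problem on a star into scheduling jobs of fixed length to maximise the number finished by the common deadline $T$, subject only to the constraint that job $a$ runs first or second. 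I would prove optimality by an adjacent-transposition exchange argument: ordering the peripherals non-decreasingly in $\sr(b_i)$ minimises the cumulative completion time of the $k$-th activated node for every $k$ at once, hence maximises $\en(\seq,T)$ for every $T$. It then remains to place $a$: comparing $\langle a,b_1,b_2,\dots\rangle$ with $\langle b_1,a,b_2,\dots\rangle$, their cumulative times agree from the third node on and both finish the second node at $\et(a)+\et(b_1)$, so they differ only in the first completion time, $\et(a)$ versus $\et(b_1)$. Thus $a$ goes first exactly when $\et(a)\le\et(b_1)$, i.e. $\sr(b_1)\ge\sr(a)+n-2$, and second otherwise, giving the stated dichotomy.

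\textbf{Defender.} For the defender I would first convert the attacker's best response into a formula for the objective. Since any feasible set of $k\ge 2$ nodes must contain $a$, the least time to activate $k$ nodes is $M(1)=\min\{\et(a),\min_i\et(b_i)\}$ and $M(k)=\et(a)+\sum_{j=1}^{k-1}\et(b_{(j)})$ for $k\ge 2$, where $\et(b_{(1)})\le\et(b_{(2)})\le\cdots$ are the sorted peripheral times; because $M$ is increasing the attacker captures exactly $\max\{k:M(k)\le T\}$ nodes, and the defender minimises this over $\sr\ge 0$, $\sum_v\sr(v)\le\SR$. Two allocation facts drive everything: (i) $M(1)$ is maximised by equalising all activation times, which forces $\et(a)=\et(b_i)=\frac{\SR+2(n-1)}{n}$ and uses the whole budget; and (ii) for each $k\ge 2$ the quantity $M(k)$ is maximised by placing the entire budget on the centre, which I would derive from $\sr(a)+\sum_{j=1}^{k-1}\sr(b_{(j)})\le\sr(a)+\frac{k-1}{n-1}\sum_i\sr(b_i)\le\SR$, with equality when the peripherals receive nothing.

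\textbf{Regimes and main obstacle.} From (i)--(ii) the three cases fall out. If $T\le\frac{\SR+2(n-1)}{n}$ the defender can raise the cheapest activation time up to $T$ and force a count of $0$, which the equalising allocation of the first case achieves. Above this threshold the count is at least $1$; the symmetric peripheral allocation $\sr(a)=0,\ \sr(b_i)=\SR/(n-1)$ keeps it at $1$ while $T\le\frac{\SR+n-1}{n-1}$, and for larger $T$ the centre-heavy allocation is needed and is optimal precisely because by (ii) it maximises $M(k)$ for every $k\ge 2$ simultaneously. I expect the defender's regime analysis, rather than the scheduling argument, to be the main obstacle. Specifically, one must (a) pair each proposed allocation with a matching lower bound on the count, coming from (i)--(ii), to establish genuine optimality and not mere feasibility; (b) track the budget hypothesis $\SR\ge(n-1)(n-2)$, which is exactly what makes $\sr^*(a)\ge 0$ in the first case and orders the two thresholds correctly, with the small-budget case instead forcing $\sr(a)=0$; and (c) fix a convention for activation exactly at time $T$ so that the closed and open endpoints of the three intervals remain consistent.
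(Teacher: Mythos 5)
Your proof is correct, and on the attacker's side it coincides with the paper's own argument: both establish that $a$ must occupy one of the first two positions, that activation times are consequently order-independent ($\et(a)=n-1+\sr(a)$, $\et(b_i)=1+\sr(b_i)$), and both order the peripherals by an exchange (adjacent-transposition) argument; your explicit comparison of the two placements of $a$ via the first completion time is a slightly cleaner version of the paper's remark that the first two elements can be swapped without changing the total. On the defender's side your route is genuinely tighter than the paper's. The paper argues the three cases informally: it solves the equalization equations for the first case, introduces the uniform-peripheral allocation as a fallback when $\SR<(n-1)(n-2)$ makes equalization infeasible, and for the last case merely asserts that putting all resources on $a$ maximizes the relevant total activation time. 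Your formulation via $M(k)$, together with fact (ii) --- the averaging bound $\sum_{j=1}^{k-1}\sr(b_{(j)})\le\frac{k-1}{n-1}\sum_i\sr(b_i)$ showing that all-on-centre maximizes $M(k)$ for every $k\ge 2$ simultaneously --- turns that assertion into a pointwise-domination argument that immediately yields optimality of the centre-heavy allocation whenever at least one activation is unavoidable, which covers both the middle and the last interval. Note one divergence of interpretation: the paper's proof reads the middle case as the small-budget regime ($\SR<(n-1)(n-2)$) in which uniform-peripheral still prevents any activation, whereas you read it as the large-budget regime in which the count is pinned at exactly $1$; your reading is the one consistent with the theorem's stated interval, since $\frac{\SR+2(n-1)}{n}<\frac{\SR+n-1}{n-1}$ holds precisely when $\SR>(n-1)(n-2)$, so the interval is empty otherwise. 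Your caveat about tracking the budget hypothesis is therefore well placed: when $\SR<(n-1)(n-2)$ the theorem's first case prescribes a negative, hence infeasible, $\sr^*(a)$ --- a gap in the paper's statement and proof, not in your argument.
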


In words, an optimal strategy of the attacker is to attack nodes in non-decreasing order of their activation times, with the additional constraint that the center of the star has to be one of the first two nodes in the sequence.
As for the defender strategy, if it is possible not to let the attacker activate even a single node given the time limit, then the defender should try to make the time of activation equal for all nodes (if possible) or all peripheral nodes (if there are not enough security resources).
Alternatively, the defender should assign all security resources to the central node (since again, it has to be one of the first two nodes in the sequence).

\begin{proof}

First we prove our claim about the strategy of the attacker. Notice that node $a$ has to be either first or second node in the attack sequence, as the attacker either starts with it, or it is the only neighbor of the starting node (if the attacker starts with one of the peripheral nodes).
Hence, it has activation time of $\et(a) = n - 1 + \sr(a)$. 
Similarly, any node $b_i$ is either first in the sequence or is activated when exactly one of its neighbors (central node $a$) is active.
Hence, its activation time is $\et(b_i) = 1 + \sr(b_i)$. If it is impossible to activate more than one node within time limit, the sequence $\seq^*$ is optimal because the first node has the lowest time of activation in the entire network.

Now consider a case where it is possible to activate at least two nodes within the time limit.
Assume to the contrary, that in an optimal attack sequence, $\seq$, we have $b_j$ activated before $b_i$ for $i < j$ (as mentioned above, node $a$ has to be either on the first or the second position in the sequence and we can swap these nodes without changing the total activation time).
However, such sequence can be improved by swapping elements $b_i$ and $b_j$.
The argument follows the same logic as in the proof of Theorem~2.

Now we move to proving our claim about the strategy of the defender. 
First, consider a case in which it is possible not to let attacker activate even a single node, \ie, $\forall_{v \in V} \et(v) \geq T$.
It is then optimal for the defender  to have the same activation time for all nodes (as the attacker will pick the one with lowest activation time as the first in sequence).
Hence, we have $\et(a) = \sr^*(a)+n-1 = \sr^*(b_i)+1 = \et(b_i)$ and $\sr^*(a)+(n-1)\sr^*(b_i)=\SR$.
After solving this set of equations we get $\sr^*(a) = \frac{\SR-(n-1)(n-2)}{n}$ and $\sr^*(b_i) = \frac{\SR+n-2}{n}$.
We then have $\et(a) = \et(b_i) = \frac{\SR+2(n-1)}{n}$.
Since we consider a case in which it is possible not to let attacker activate even a single node, we need to have $\et(a) = \et(b_i) = \frac{\SR+2(n-1)}{n} \geq T$.

Notice that it is possible that the defender does not have enough defense resources to make time of activation of all nodes equal (when $\SR<(n-1)(n-2)$).
She should then spread them uniformly among peripheral nodes, to maximize the minimal time of activation in the network.
We then have $\sr^*(a) = 0$ and $\sr^*(b_i) = \frac{\SR}{n-1}$.
Minimal time of activation in the network is then $\et(b_i) = \frac{\SR+n-1}{n-1}$ and in order for this strategy to be optimal we need to have $\et(b_i) = \frac{\SR+n-1}{n-1} \geq T$.

Now, consider the case in which the attacker is able to activate $k \geq 1$ nodes.
One of the first two nodes in the sequence has to be $a$.
Hence, assigning all security resources to $a$ either maximizes the time required to activate available number of nodes---if $k \geq 2$ as their total activation time is $\sr(a)+\sum_{i=1}^{k-1} \sr(b_i)$), or it maximizes the time that would be necessary to activate second node in the sequence---if $k=1$.
\end{proof}

\subsection{Optimal Defense Against a Subset of Strategies}
\label{app:milp}

In this section, we present a mixed-integer linear programming method to compute optimal defense strategies when the strategies of the attacker can be enumerated, \ie, when the set of the attacker's strategies is of reasonable size.

Assume that the attacker has at her disposal only a limited set of strategies, $\avs \subseteq \Seq(V)$.
We assume that each sequence in this set is a correct strategy of the attack, \ie, there is no attempt to activate nodes with no active neighbors.
In such case the problem of finding the optimal distribution of defense resources can be formulated as a mixed-integer linear programming problem.
The formulation is:
$$
\begin{array}{rll}
\max_{\sr_v, a_{\seq,i}, k} k & & \\
\text{subject to} \quad\quad\quad \sr_v & \geq 0 & \forall_{v \in V} \\
\sum_{v \in V} \sr_v & \leq \SR & \\
a_{\seq,i} & \in \{0,1\} & \forall_{\seq \in \avs} \forall_{i \in I} \\
\sum_{j \leq i} \frac{d(\seq_j)+\sr_{\seq_j}}{c_{\seq,j}} & \geq a_{\seq,i} T & \forall_{\seq \in \avs} \forall_{i \in I} \\
\sum_{i \in I} a_{\seq,i} & \geq k & \forall_{\seq \in \avs}
\end{array}
$$
\noindent where $I = \{1, \ldots, n\}$ and $c_{\seq,j}$ is the number of active neighbors of $\seq_j$ at the moment of its activation, when using attack sequence $\seq$ (we set $c_{\seq,1}=1$ for every $\seq \in \avs$).

The first two sets of constraints guarantee that we have a valid distribution of security resources. When the attacker is using attack sequence $\seq$, we want variable $a_{\seq,i}$ to be equal $1$ if and only if node $\seq_i$ is not activated within the time limit.
This is guaranteed with the fourth set of constraints. In particular, we have $a_{\seq,i}=1$ only when the total time of activation of the first $i$ nodes in the sequence (the left-hand side of the fourth constraint) exceeds the time limit, $T$.
The last set of constraints guarantees that $k$ is the minimum number of inactive nodes over all the choices of $\seq$ (the attacker minimizes this value, while the defender maximizes it).
The formulation has $\mathcal{O}(|\avs|n)$ constraints.

In the next section, we explore the efficiency of various heuristic strategies that may be used by both players.

\section{Attack and Defense Strategies based on Heuristics}

\noindent
Another approach that may be pursued by both the defender and the attacker is to resort to heuristic strategies.
In fact, one can think of many potential heuristics, depending on the network topology and the particular setting at hand.
Hence, in this section we propose a number of different heuristics that are inspired by a variety of scenarios and the results from the literature.

\subsection{Heuristic Strategies of the Defender}

\noindent
To find a suitable heuristic, we should take into account the information possessed by the players.
In particular, the defender does not know \textit{a priori}, where the attack is going to take place.
Hence, the heuristic strategy of assigning available security resources should be based on some properties of either the security game or the network as no other information is available at this stage. The formulas of the defender's strategies are presented in Table~\ref{tab:defender-formulas}.
In these formulas, $r_i$ is drawn uniformly at random from $[0,1]$, $s^*$ is the maximal $s$ such that $\sum_{i=1}^{n}\max(0,s-d(v_i)) \leq \SR$, and $c$ denotes a given centrality measure for centrality-based strategies.
\begin{table}[t]
\begin{tabular}{lcl}
\hline
Equality & & $\sr(v_i) = \max(0,s^*-d(v_i))$ \\
Uniform & \ \ \ \ \ \ \ \ \ \ \ \ \ \ \ \ \ \ \ \ & $\sr(v_i) = \frac{1}{n} \SR$ \\
Random & & $\sr(v_i) = \frac{r_i}{\sum_{i=1}^{n} r_i} \SR$ \\
High Centrality $c$ & & $\sr(v_i) = \frac{c(v_i)}{\sum_{i=1}^{n} c(v_i)} \SR$ \\
Low Centrality $c$ & & $\sr(v_i) = \frac{c^{-1}(v_i)}{\sum_{i=1}^{n} c^{-1}(v_i)} \SR$ \\
\hline
\end{tabular}
\caption{Formulas for the defender's heuristics.}
\label{tab:defender-formulas}
\end{table}
We now present an intuition behind each heuristic for the defender:
\begin{itemize}
\item \emph{Equality}---the defender tries to make the activation time equal for as many nodes as possible (as it may not be possible for all nodes). This heuristic is inspired by the work of Bachrach~\etal~\cite{bachrach2013contagion} who found that in their model the optimum is achieved when security levels are equal for all targets:
\item \emph{Uniform}---the security resources are distributed uniformly among all nodes. This heuristic is related to the previous one but now the resources are simply divided equally.
\item \emph{Random}---the security resources are distributed randomly among all nodes. This heuristic is a natural benchmark for other ones.
\end{itemize}

Next, we propose twelve heuristics for the defender that are based on the topology of the network.
The basic idea is to assign available security resources either proportionally or inverse proportionally to the centrality, \ie, the importance, of nodes in the network.
The following set of heuristics for the defender are based on the centrality measures:

\begin{itemize}
\item \emph{High Degree}---the defender focuses on defending nodes with high degrees.
The inspiration for this strategy can be found in a number of works that indicate that hubs are of the key importance in the diffusion process~\cite{pastor2001epidemic, dezsHo2002halting}.

\item \emph{Low Degree}---the defender focuses on defending the nodes with low degrees.
This strategy is inspired by the observation that some criminal groups tend to recruit lonely, socially-isolated people~\cite{ceballo2000neighborhood}.

\item \emph{High Betweenness}---the defender focuses on defending the nodes with high betweenness.
Betweenness centrality is considered particularly important in the context of financial networks, where it is arguably the most sensible measure of the systemic danger of an individual bank within the financial system~\cite{freixas1998contagion}.
In particular, banks with high betweenness centrality are able to bring down the entire financial system if consecutive occurrences of illiquidity materialize, given that there are no interventions by relevant authorities.
\end{itemize}

The next two strategies are inspired by the observation that nodes with low betweenness centrality and (typically) high closeness centrality play the role of the ``pulse-takers'' in their organizations~\cite{henderson2013department}.
They are easily accessible by other central nodes as well as to the rest of the network.
The pulse-takers are key to the preservation and the development of companies~\cite{Stephenson:2004} and as such are more likely to be a subject of various attacks, such as an advanced ransomware attack~\cite{Vanderburg:2017}.

\begin{itemize}
\item \emph{Low Betweenness}---the defender focuses on defending the nodes with low betweenness.

\item \emph{High Closeness}---the defender focuses on defending the nodes with high closeness.
\end{itemize}

We also consider the heuristic strategy focused on the nodes with low closeness.
In the social network context, such nodes tend to be vulnerable (easier to target) as more isolated and dependent on information from few neighbours~\cite{rana2015centrality}.

\begin{itemize}
\item \emph{Low Closeness}---the defender focuses on defending the nodes with low closeness.
\end{itemize}

Finally, we consider heuristics based on the Shapley value of each node (see Section~\ref{sec:preliminaries} for more details). As the Shapley value expresses synergy between nodes, it might prove effective in choosing which areas of the network to defend more intensively.
The Shapley value of a given game can be used to define centrality measure, where the centrality score of each node is the payoff assigned to it by the Shapley value.
In this article we consider the following heuristics based on the Shapey value:

\begin{itemize}
\item \emph{High/Low Shapley Value Degree}---the defender focuses on defending the nodes with high/low payoff in a game where the value of the coalition $C$ is the number of nodes in $C$ and those immediately reachable from $C$~\cite{michalak2013efficient}.
    
\item \emph{High/Low Shapley Value Closeness}---the defender focuses on defending the nodes with high/low payoff in a game where the value of the coalition $C$ is the number of nodes in $C$ and those not further than $\delta$ hops away (in our experiments we use $\delta=3$)~\cite{michalak2013efficient}.

\item \emph{High/Low Shapley Value Betweenness}---the defender focuses on defending the nodes with high/low payoff in a game where the value of the coalition $C$ is the percentage of the shortest paths between the nodes from outside $C$ controlled by the nodes in $C$~\cite{szczepanski2016efficient}.
\end{itemize}

\subsection{Heuristic Strategies of the Attacker}
\label{sec:attacker-strategies}

\noindent
Unlike the defender, who has to choose her moves \textit{a priori}, the attacker knows at each step of her decision-making process which nodes in the network have been already activated and which have not, as well as she knows the distribution of security resources chosen by the defender.
This immediately suggests the following two heuristics in which the attacker may use this information to her advantage:

\begin{itemize}
\item \emph{Greedy}---the heuristic in which the next target is a not-yet-activated node with the lowest time of activation; and
\item \emph{Majority}---the heuristic in which the next target is a not-yet-activated node with the highest number of activated neighbours.
\end{itemize}

In comparison to the Greedy heuristic, the Majority heuristic has a tendency to activate nodes with high degrees earlier in the process.
One can also mix both heuristics:

\begin{itemize}
\item \emph{Mixed}---a mix of the two previous heuristics that we will denote by Mixed$(p)$, where $p$ is the probability of using the Greedy heuristic and $1-p$ is the probability of using the Majority heuristic.
\end{itemize}

The above three heuristics were already studied in strategic network diffusion~\cite{alshamsi2018optimal} however not in the setting that involves a strategic defender or another player. 

A parallel can be drawn between using the Greedy and Majority heuristics and the tradeoff between \textit{exploration} and \textit{exploitation}~\cite{kaelbling1996reinforcement}.
In particular, the Greedy strategy can be considered the exploitation component of the setting, as it maximizes short-term gains.
At the same time, the Majority heuristic has a tendency to target high degree nodes, thus exploring new potential venues of an attack for the future, although the immediate cost of such an attack is usually higher than for the Greedy heuristic. Following this trail, we propose yet another two heuristics for the attacker that are inspired by the exploration-exploitation literature~\cite{sutton1998reinforcement}:

\begin{itemize}
\item \emph{Epsilon-Decreasing}---a heuristic that exhibits exploratory behaviour at the beginning and exploitative behaviour at the end of the process.
In more detail, we set the probability of selecting the target using the Greedy heuristic to $p = \frac{t}{T}$, where $t$ is the time elapsed since the beginning of the attack process and $T$ is the total time limit.
Otherwise, we select the target using the Majority heuristic.
\item \emph{Epsilon-First}---a heuristic that at first exhibits only exploratory behaviour, followed by only exploitative behaviour. In more detail, we select the target using the Majority heuristic if $t \leq \epsilon T$, where $\epsilon$ is the parameter of the heuristics, while $t$ and $T$ are defined as for the Epsilon-Decreasing heuristic. Otherwise we select the target using the Greedy heuristic.
\end{itemize}

In the next section, we will evaluate all the above heuristic strategies by the means of numerical simulations.

\section{Experimental Analysis}
\label{sec:experimental-analysis}

\noindent
We will now test the heuristics proposed in the previous section across different combinations of networks and strategies.

\subsection{Experiment Design}\label{sec:experiment-design}

\noindent
The setting of each of our experiments is characterised by (a) a network, $G$, (b) a strategy of the defender, $\sr$, and (c) a strategy of the attacker $\seq$. As for (a), we use randomly generated networks, as well as real-life datasets. As for (b)---\ie, the defender's strategy---we consider all the defense heuristics proposed in the previous section. 
In our experiments, we assume that the amount of security resources is $\SR = 10 n$ and the time limit is $T = n$.
These parameters' values allow us to clearly present trends appearing for different settings of our experiments. As for (c)---\ie, the attacker's strategy---we again consider all attack heuristics proposed in the previous section.
We benchmark them against the optimal attack strategy which we find using the dynamic-programming algorithm.
However, the dynamic-programming algorithm is time and memory intensive and we are able to apply it in practice only for smaller networks.
Furthermore, following the usual assumption in the security games literature~\cite{bachrach2013contagion, acemoglu2016network, lou2017multidefender}, we choose the seed node of the attacker that provides the highest number of activated nodes (breaking ties uniformly at random).

\paragraph{Network Generation Models} We use the following models to generate the networks:

\begin{itemize}
\item \emph{Preferential attachment} networks generated using the Barab{\'a}si-Albert model~\cite{barabasi1999emergence}.
Unless stated otherwise, in our experiments we add $d=3$ links with each new node and we set the size of the initial clique to $d$;

\item \emph{Scale-free} networks generated using the configuration model \cite{newman2003structure}.
Unless stated otherwise, in our experiments we assume the minimal degree to be $d_{min}=2$, the maximal degree to be $d_{max}=10$, and the configuration model parameter to be $\lambda=3$;

\item \emph{Random graphs} generated using the Erd{\H{o}}s-R{\'e}nyi model \cite{erdds1959random}.
Unless stated otherwise, in our experiments we assume the expected average degree to be $d=10$;

\item \emph{Small world} networks generated using the Watts-Strogatz model \cite{watts1998collective}.
Unless stated otherwise, in our experiments we assume the expected average degree to be $d=10$ and the probability of rewiring to be $p=\frac{1}{4}$;

\item \emph{Random trees} generated using Pr\"ufer sequences~\cite{prufer1918neuer}.
We use sequences where each element is chosen uniformly at random from set $1,\ldots,n$.
\end{itemize}

\paragraph{Real-Life Networks} In addition to the experiments on randomly generated networks, we also perform simulations for a number of real-life network structures. In particular:

\begin{itemize}
\item \textit{Karate Club}~\cite{zachary1977information}---the network of people attending karate classes in one of the American universities.
The network consists of 34 nodes and 78 edges.

\item \textit{Facebook (small)}~\cite{leskovec2012learning}---a small fragment of the social network of the users of Facebook.
The network consists of 61 nodes and 272 edges.

\item \textit{Facebook (medium)}~\cite{leskovec2012learning}---a medium fragment of the social network of the users of Facebook.
The network consists of 333 nodes and 2523 edges.

\item \textit{Facebook (large)}~\cite{leskovec2012learning}---a large fragment of the social network of the users of Facebook.
The network consists of 786 nodes and 14027 edges.

\end{itemize}

The four datasets can be considered examples of networks appearing in different applications of our model (they are standard social networks generated based on surveys and social media data) since all nodes are inactive at the beginning.

\subsection{Numerical Results}

\noindent
We now move to describing the results of our simulations.
Results for randomly generated networks with any combination of attacker's and defender's strategies are presented in Figures~\ref{fig:heat-80} and~\ref{fig:heat-1000}, whereas Figure~\ref{fig:best-bars} presents results under assumption that the attacker chooses the optimal strategy (in the case of smaller networks) or the best of considered heuristic strategies.
Results for real-life networks are presented in Figures~\ref{fig:heat-reallife} and \ref{fig:best-bars-realife-soc}. 
We describe results of each type of networks separately.

\begin{figure}[t]
\centering
\setlength\tabcolsep{0pt}
\begin{tabular}{m{.32\textwidth}m{.32\textwidth}m{.32\textwidth}}
\multicolumn{1}{c}{Preferential attachment networks} &
\multicolumn{1}{c}{Scale free networks} &
\multicolumn{1}{c}{Random graph networks} \\
\includegraphics[width=\linewidth]{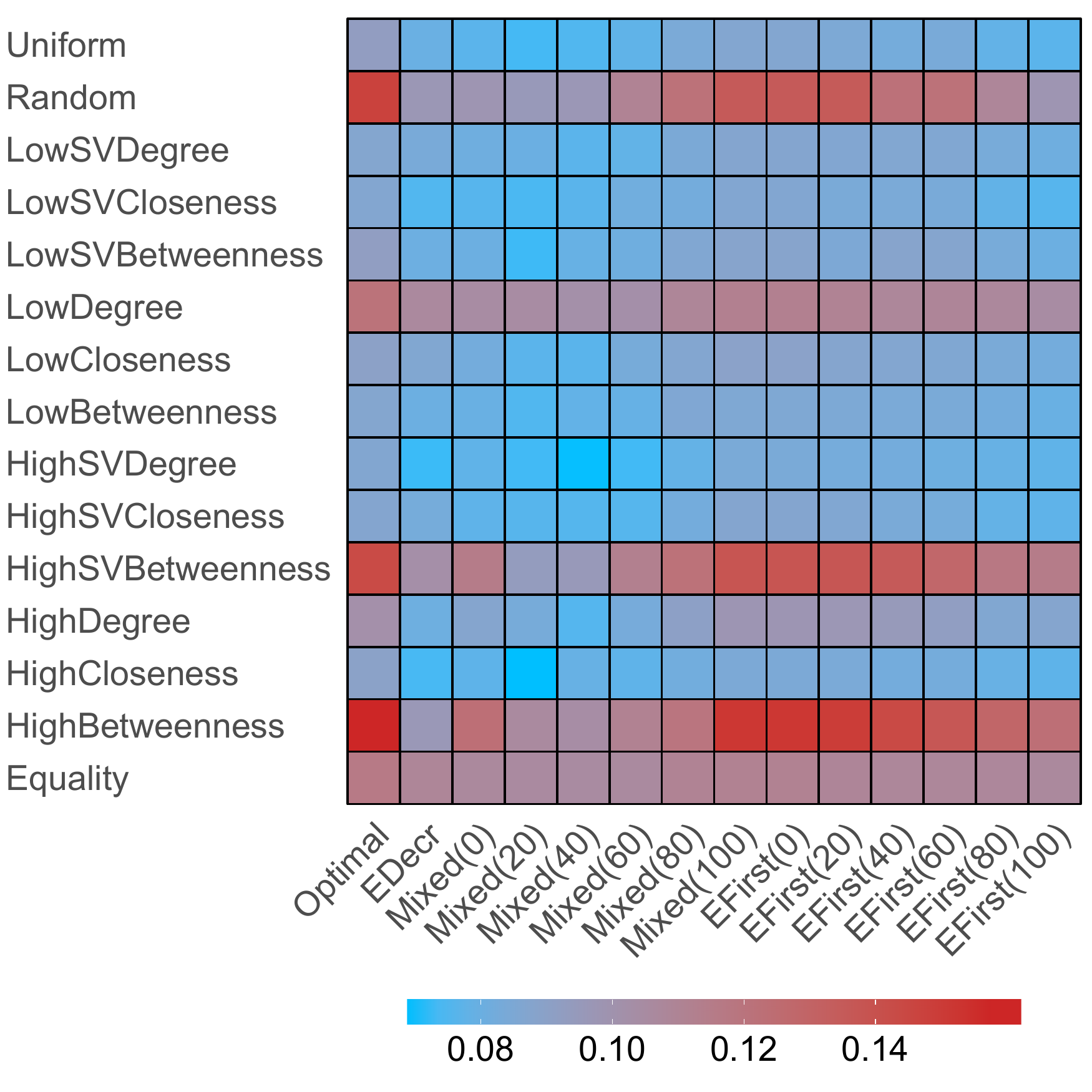} &
\includegraphics[width=\linewidth]{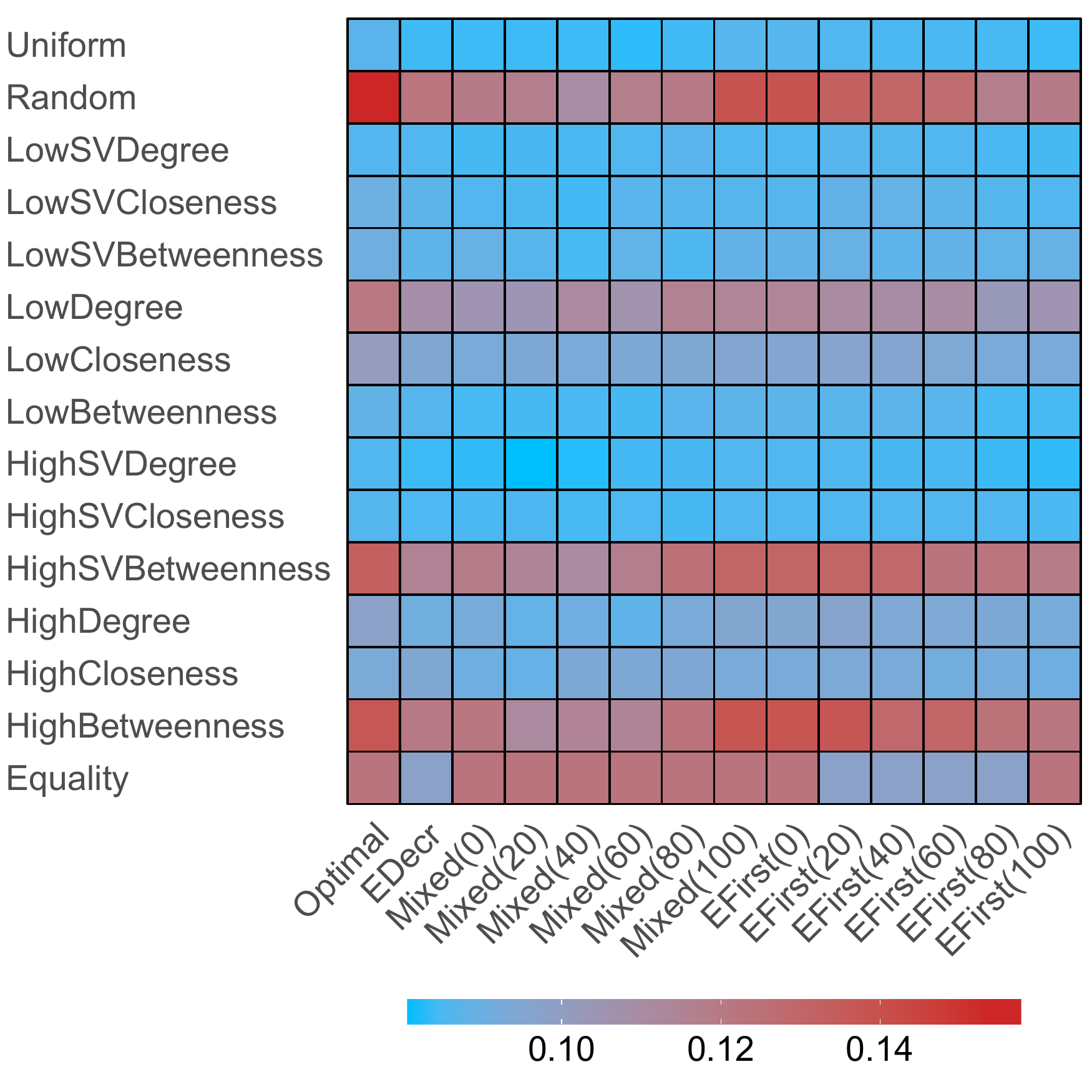} &
\includegraphics[width=\linewidth]{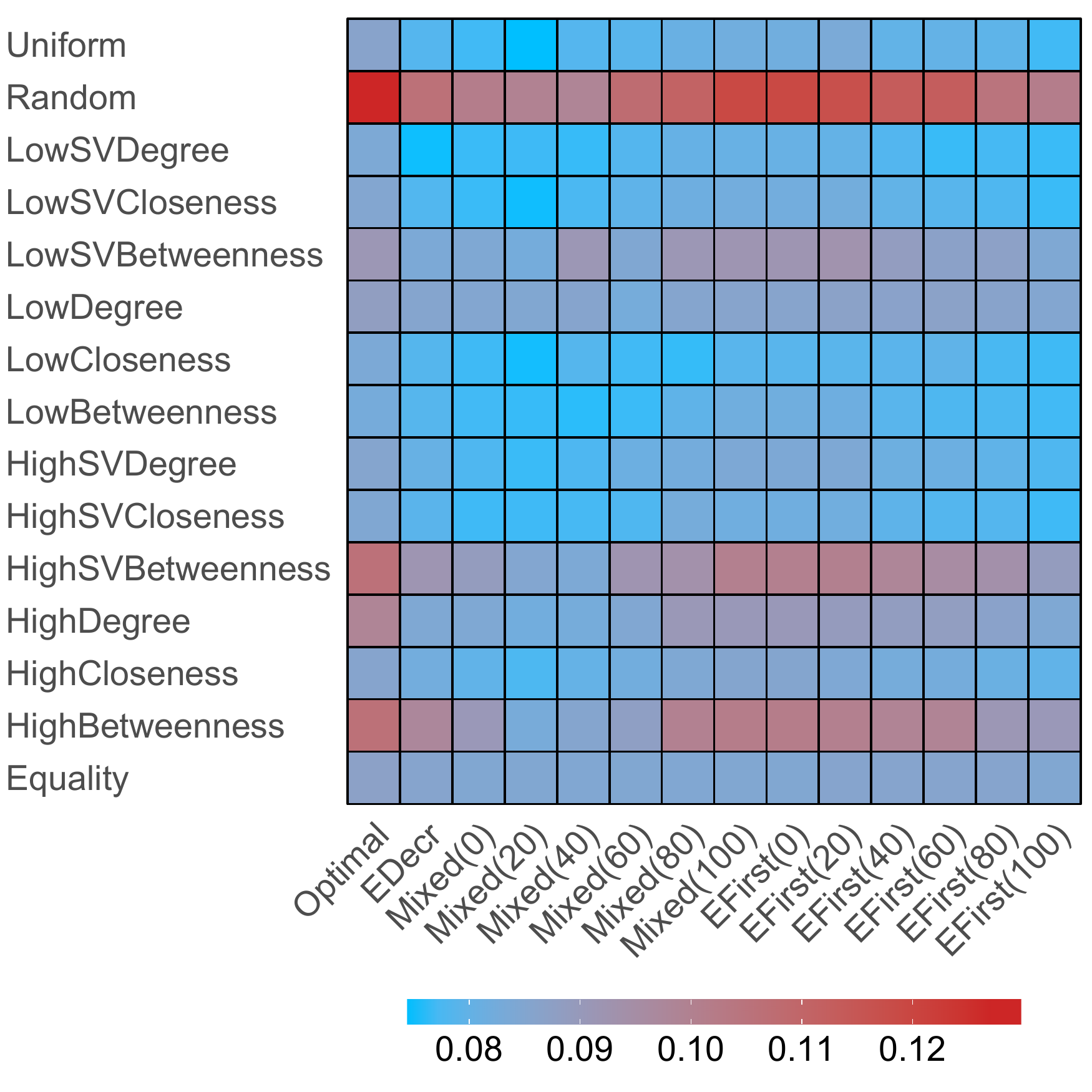} \\
\end{tabular}
\begin{tabular}{m{.32\textwidth}m{.32\textwidth}}
\multicolumn{1}{c}{Small world networks} &
\multicolumn{1}{c}{Random trees} \\
\includegraphics[width=\linewidth]{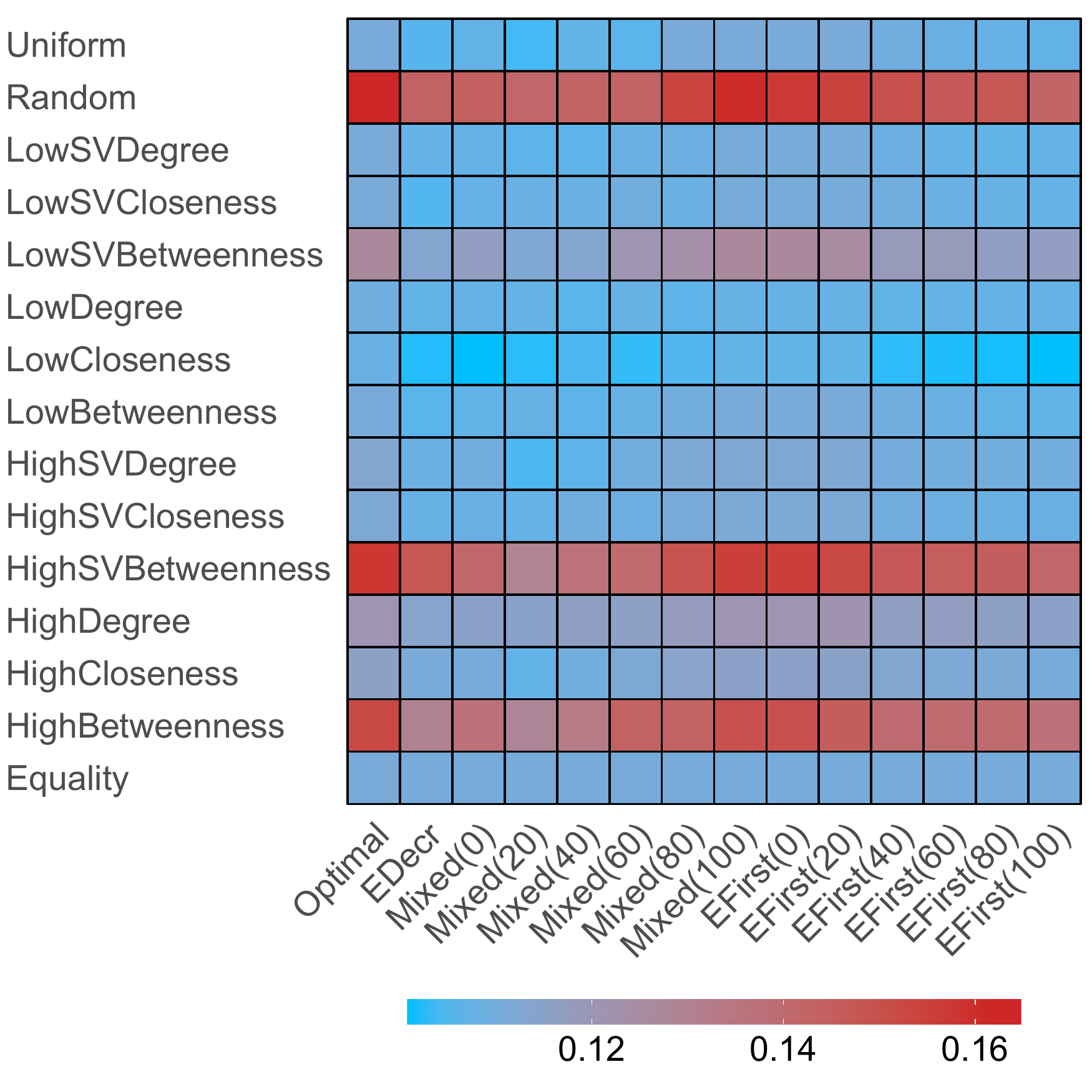} &
\includegraphics[width=\linewidth]{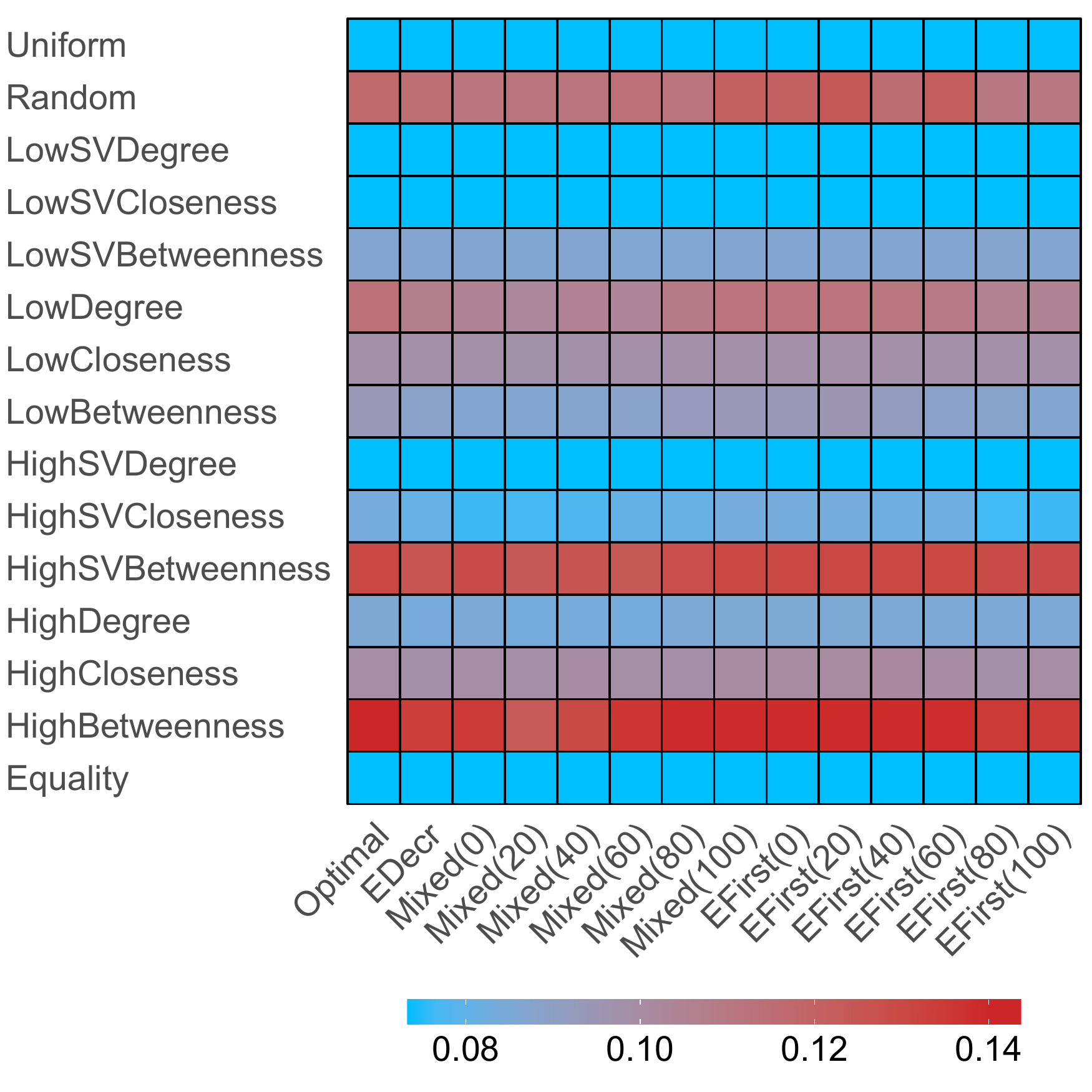} \\
\end{tabular}
\caption{
Comparison of effectiveness of defender's strategies for different attacker's strategies on networks with $80$ nodes.
Color of each cell represents the expected percentage of nodes successfully activated by the attacker.
Results are taken as an average over $100$ simulations, with a new network generated for each simulation using one of the models.
}
\label{fig:heat-80}
\end{figure}

\begin{figure}[t]
\centering
\setlength\tabcolsep{0pt}
\begin{tabular}{m{.32\textwidth}m{.32\textwidth}m{.32\textwidth}}
\multicolumn{1}{c}{Preferential attachment networks} &
\multicolumn{1}{c}{Scale free networks} &
\multicolumn{1}{c}{Random graph networks} \\
\includegraphics[width=\linewidth]{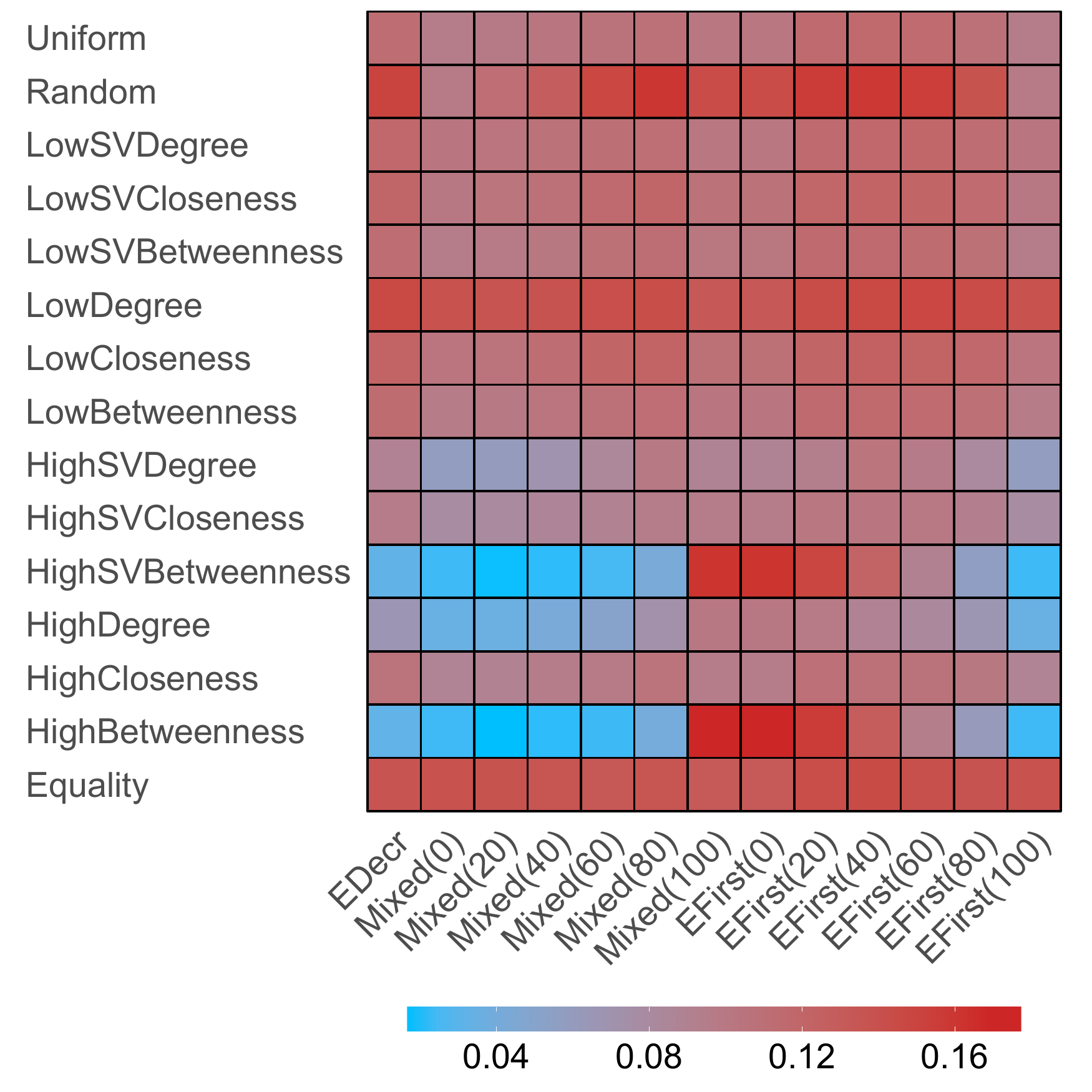} &
\includegraphics[width=\linewidth]{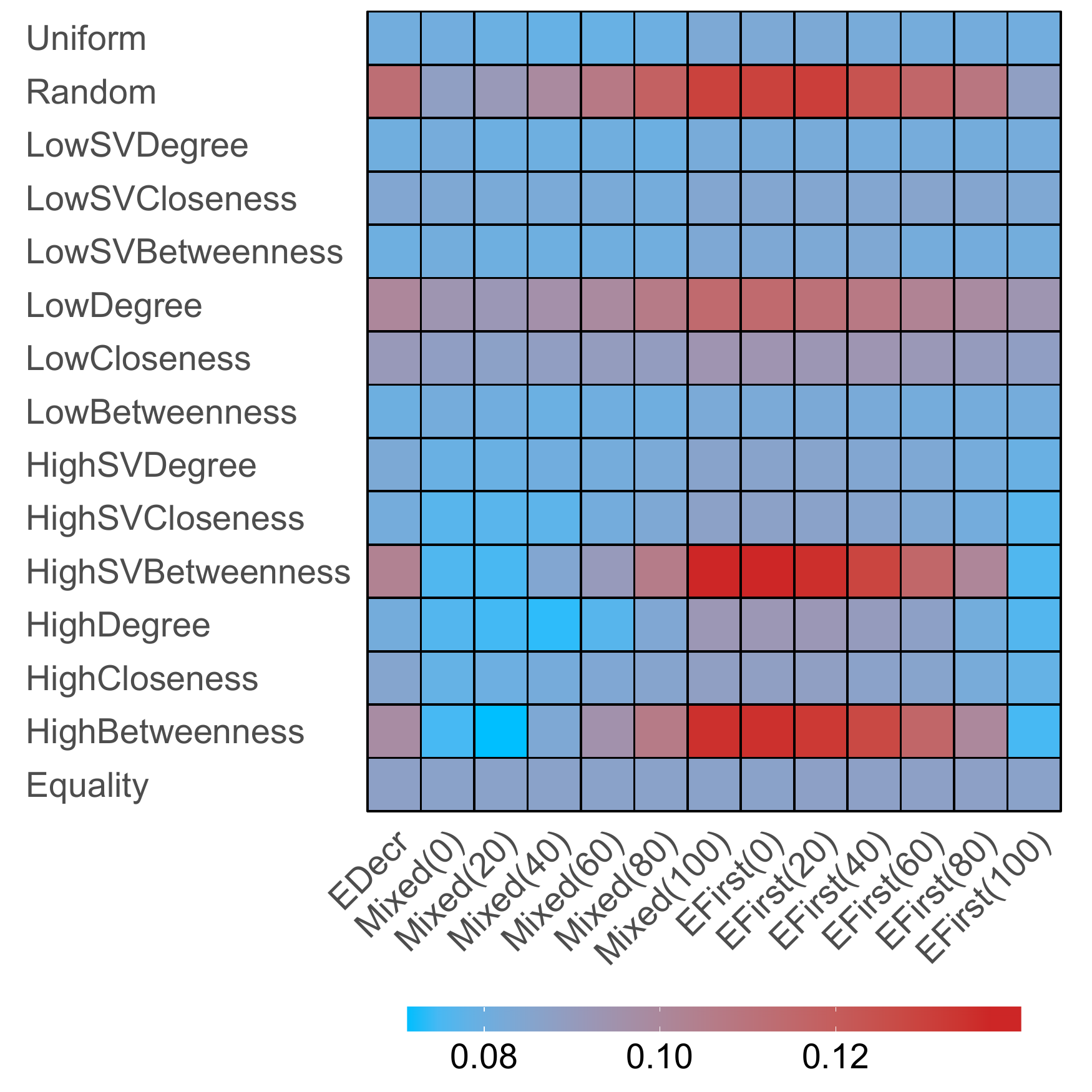} &
\includegraphics[width=\linewidth]{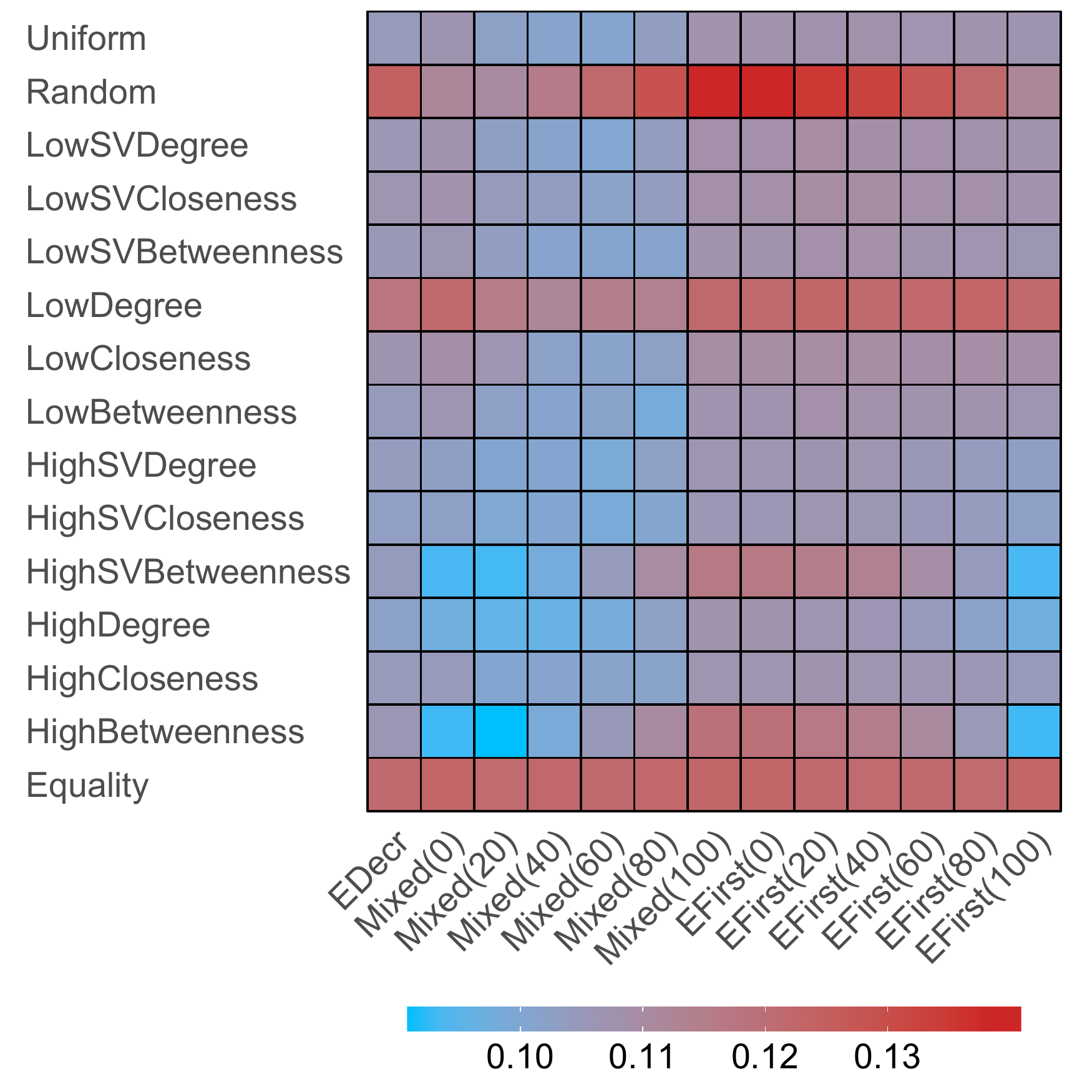} \\
\end{tabular}
\begin{tabular}{m{.32\textwidth}m{.32\textwidth}}
\multicolumn{1}{c}{Small world networks} &
\multicolumn{1}{c}{Random trees} \\
\includegraphics[width=\linewidth]{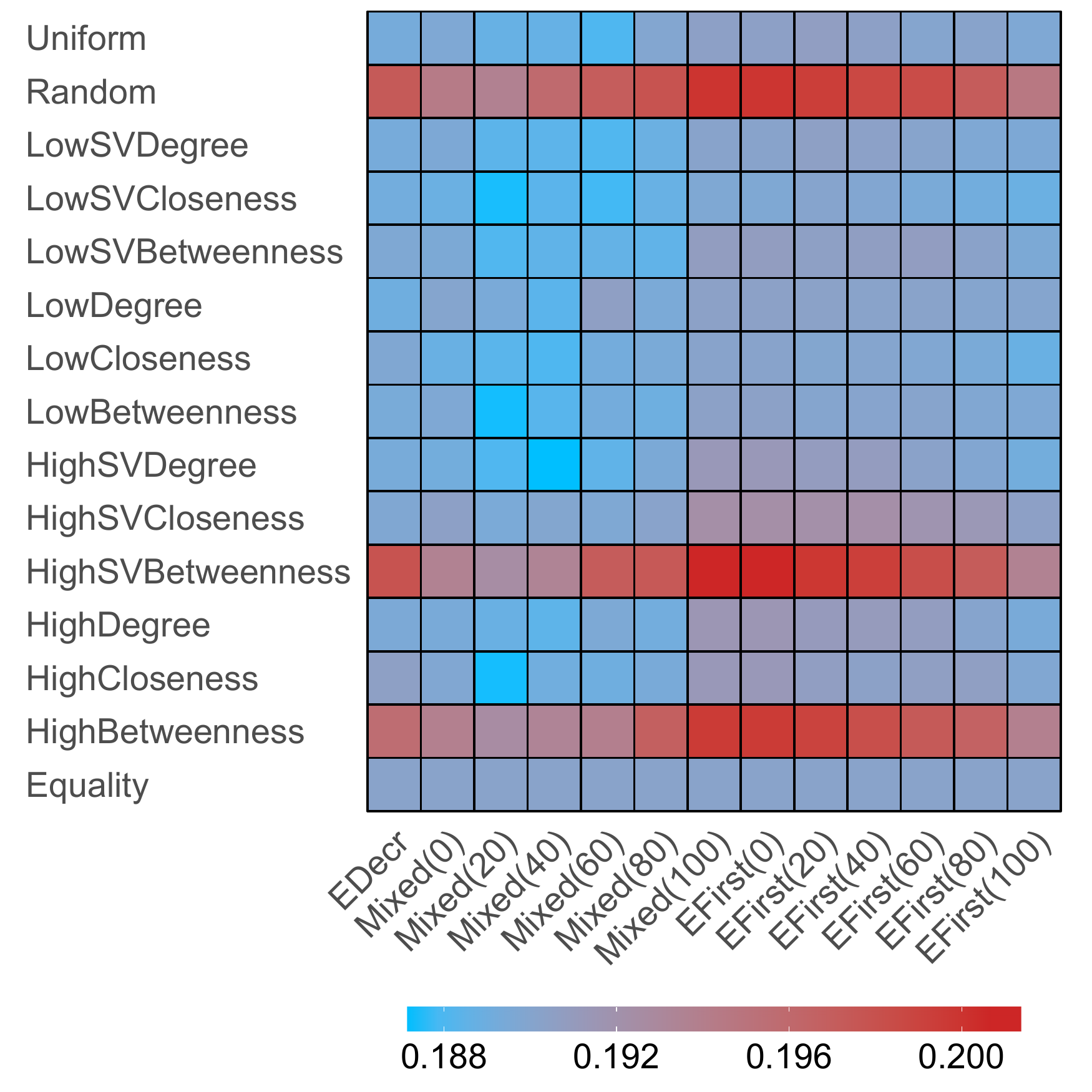} &
\includegraphics[width=\linewidth]{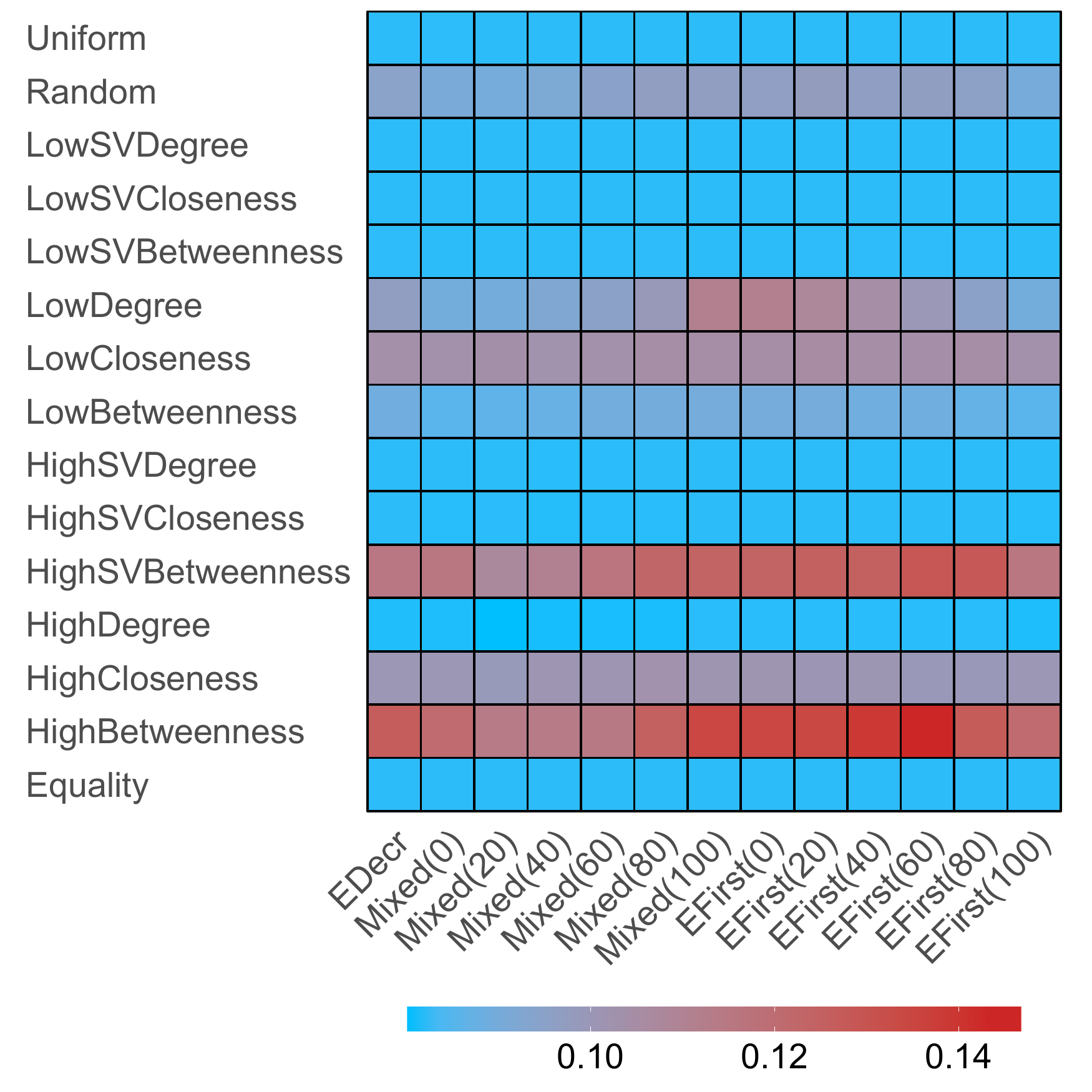} \\
\end{tabular}
\caption{
Comparison of effectiveness of defender's strategies for different attacker's strategies on networks with $1000$ nodes.
Color of each cell represents the expected percentage of nodes successfully activated by the attacker.
Results are taken as an average over $100$ simulations, with a new network generated for each simulation using one of the models.
}
\label{fig:heat-1000}
\end{figure}

\begin{figure}[t]
\centering
\setlength\tabcolsep{0pt}
\begin{tabular}{m{.5\textwidth}m{.5\textwidth}}
\multicolumn{1}{c}{Networks with $80$ nodes (against optimal)} &
\multicolumn{1}{c}{Networks with $80$ nodes (against best heuristic)} \\
\includegraphics[width=\linewidth]{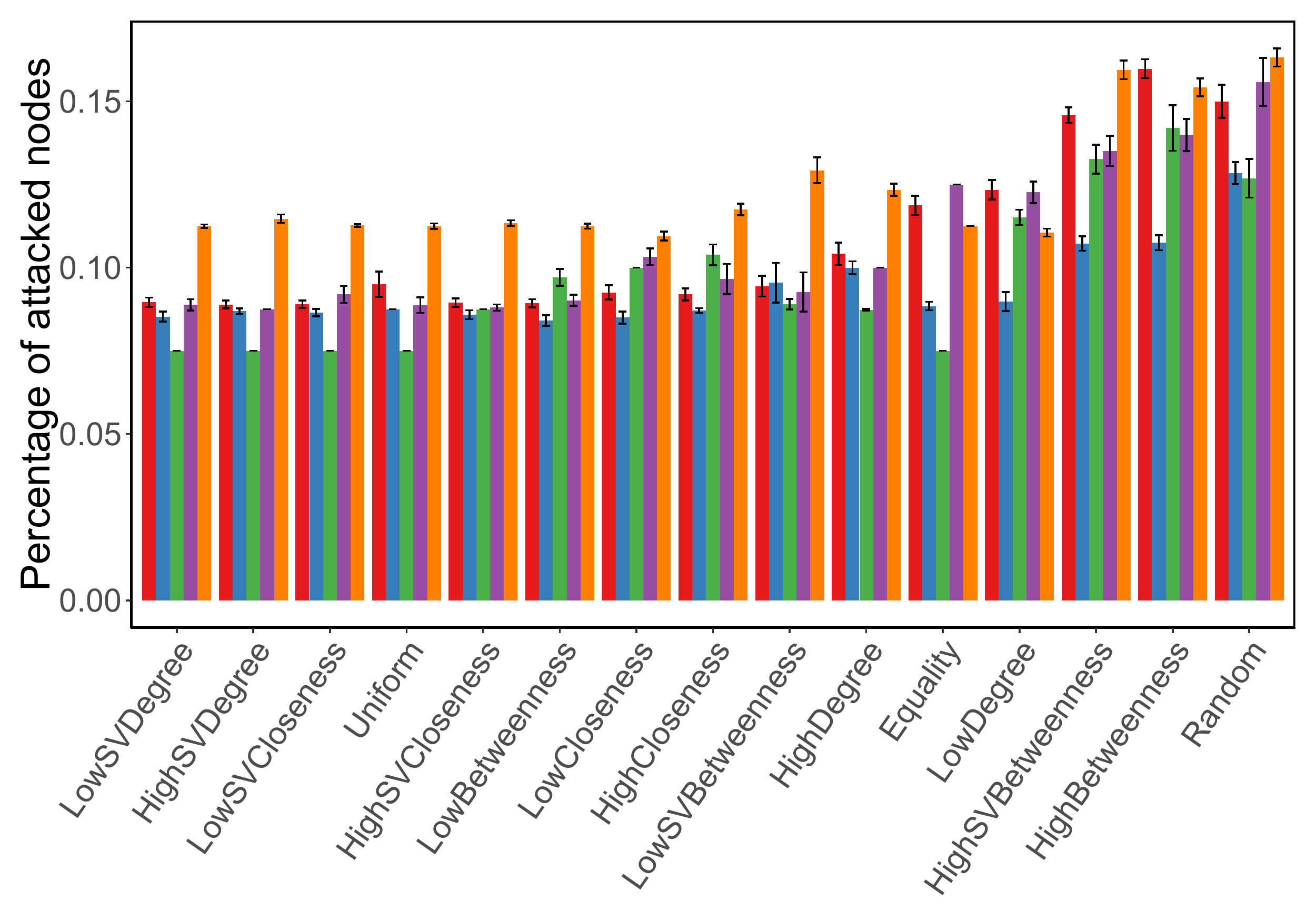} &
\includegraphics[width=\linewidth]{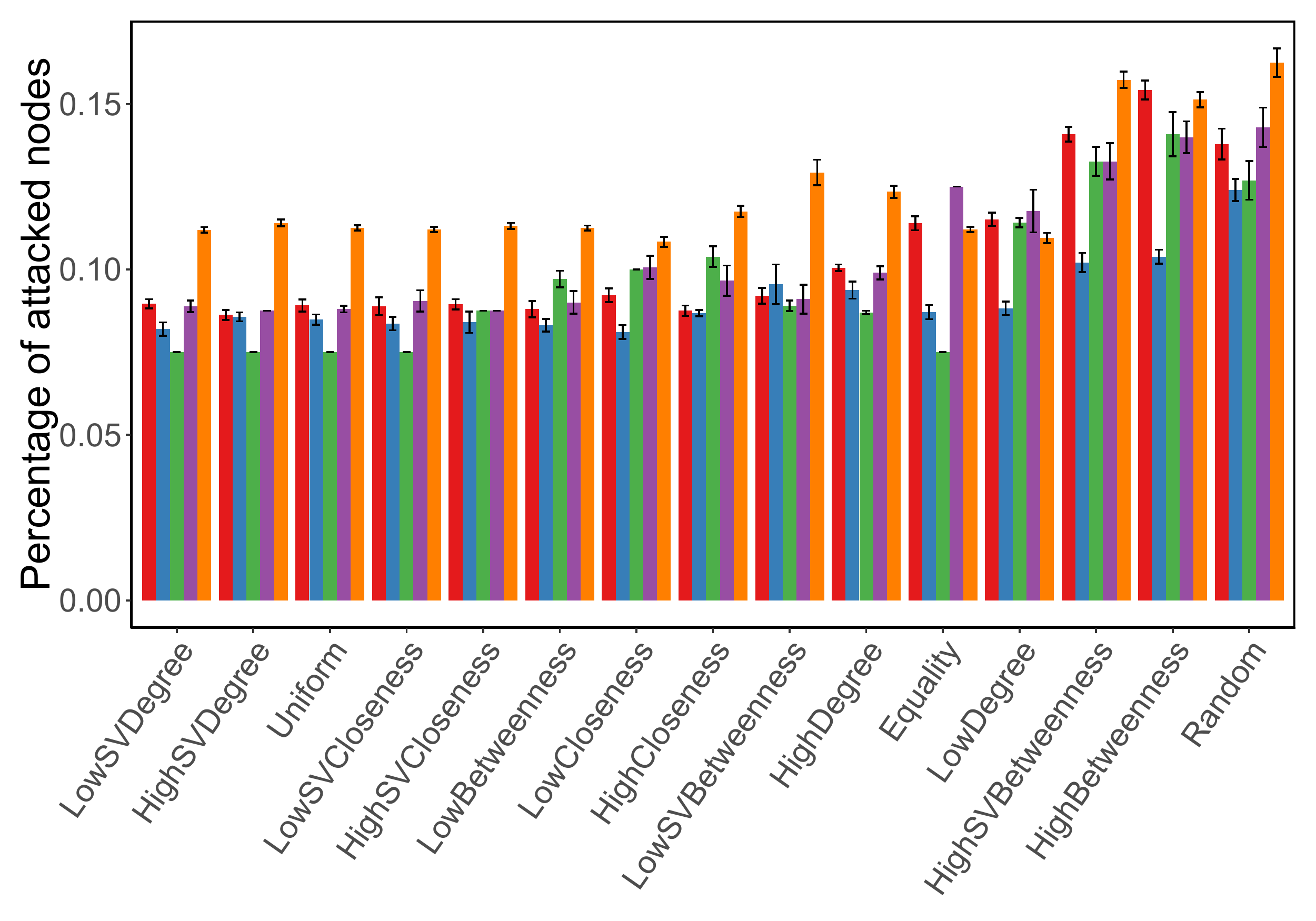} \\
\end{tabular}
\begin{tabular}{m{.5\textwidth}}
\multicolumn{1}{c}{Networks with $1000$ nodes (against best heuristic)} \\
\includegraphics[width=\linewidth]{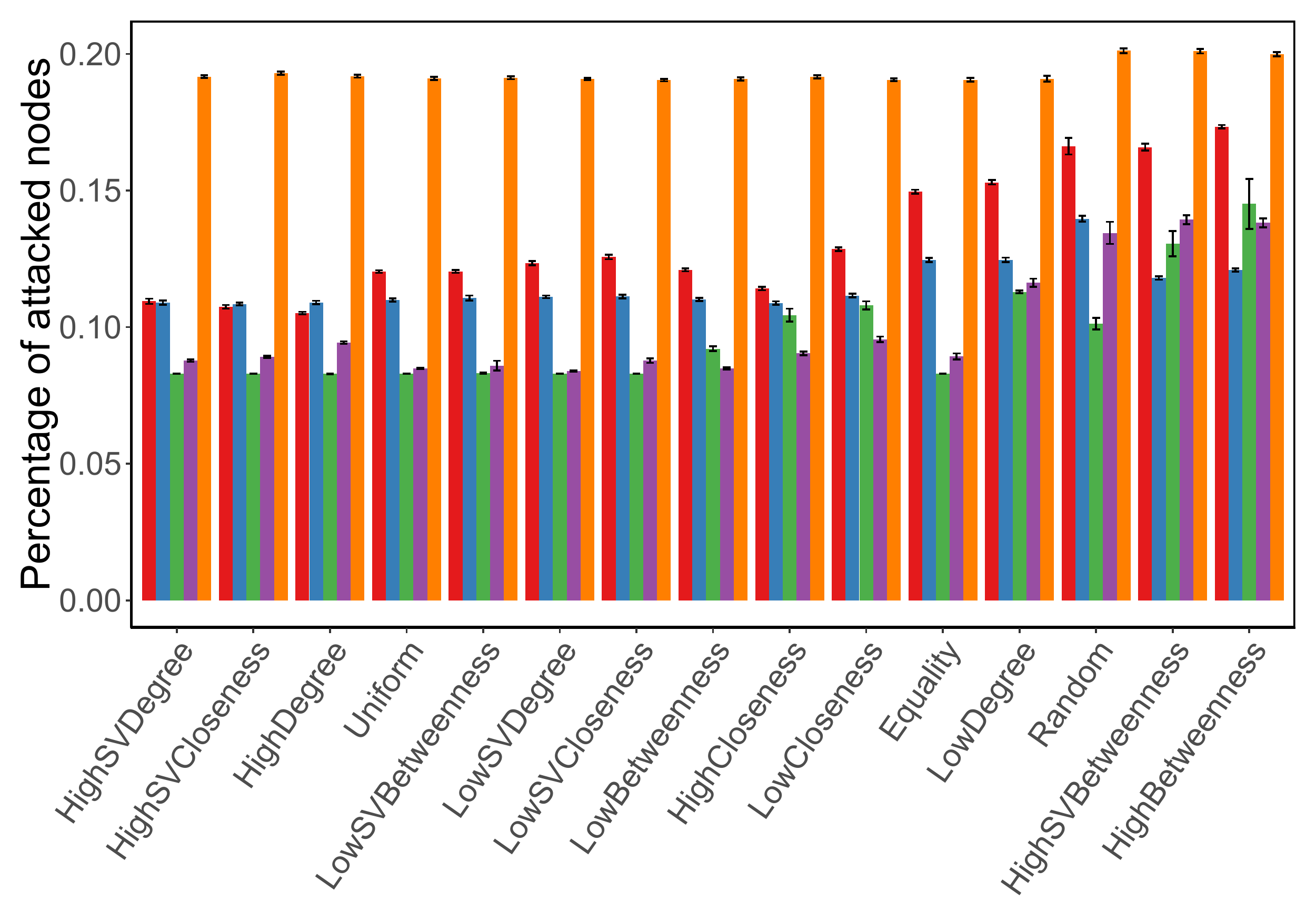} \\
\end{tabular}
\begin{tabular}{m{\textwidth}}
\multicolumn{1}{c}{\includegraphics[width=.7\linewidth]{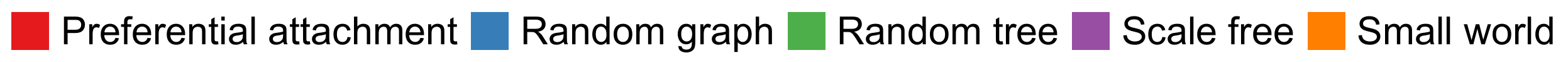}} \\
\end{tabular}
\caption{
Comparison of effectiveness of defender's strategies, under assumption that the attacker uses optimal strategy (for networks with $80$ nodes) or the best heuristic considered by us (for networks with $1000$ nodes).
Results are taken as an average over $100$ simulations, with a new network generated for each simulation using one of the models.
Defender's strategies are sorted from most to least effective on average.
Error bars represent $95\%$ confidence intervals.
}
\label{fig:best-bars}
\end{figure}

\noindent \paragraph{Small networks} We will first describe results for networks with $80$ nodes.
We generate networks of this size to be able to compare the performance of heuristic solutions against the optimal strategy of the attacker.

Interestingly, the Uniform strategy is a relatively effective way of defending the network.
At the same time, the Equality strategy usually gets considerably worse results than the Uniform strategy.
This seems to be the case because, while the Equality strategy aims to compensate differences in the time of activation between nodes with different degrees, the Uniform strategy keeps hubs harder to activate than other nodes, discouraging the attacker from targeting them.

Out of the betweenness centrality-based defender strategies, the Low Betweenness strategy is always more effective than its High Betweenness counterpart.
The efficiency of the Low Betweenness strategy might seem surprising at first.
However, low betweenness centrality values are characteristic of not only peripheral nodes in the network, but also of the members of dense network structures (\eg, in a clique the betweenness centrality of all nodes is zero) that are typically high return targets for the attacker, as the activation of each of them affects many peers.
In fact, the Low Betweenness strategy in many cases is one of the most effective ways of defending the network.

Out of the defender strategies based on closeness centrality, their relative effectiveness depends on the network model under consideration.
The High Closeness strategy is more effective in majority of the simulations.
Nevertheless, they usually achieve similar performance and choosing the right closeness-based strategy is a highly effective way of defending the network.
Interestingly, these results can be connected to the role of nodes with low betweenness centrality and high closeness centrality as the ``pulse-takers'' of their organizations~\cite{henderson2013department}.

The High Degree strategy, focusing its defense efforts on hubs, is also relatively effective, as it discourages the attacker from activating high-degree nodes giving access to large parts of the network.
Low Degree and Random strategies are less effective in most cases.

Out of the strategies based on the Shapley value, the High Shapley Value Betweenness is the least effective in all cases, similarly to its basic counterpart, \ie, the High Betweenness strategy.
The Low Shapley Value Betweenness achieves better results, although on average it is still inferior to the rest of the strategies based on the Shapley value.
Both High and Low Shapley Value Degree, as well as both High and Low Shapley Value Closeness are usually among the best defender's heuristic, with the exact order depending on the network structure and the strategy of the attacker.
Interestingly, they are consistently more efficient than the basic centrality measures, indicating the importance of the synergy effect (expressed by the Shapley value) in planning the defense of the network.

From the attacker's point of view, if she is unable to compute the optimal solution, in most cases using the Greedy strategy with a high probability is the best choice and close to the optimal one.
In most cases pure Majority and pure Greedy strategies give better results than adding a small percentage of the other strategy into the mix.
The Epsilon-First strategy generally achieves better results than simple mix, while the Epsilon-Decreasing strategy is in most cases worse than both pure Majority and pure Greedy strategies.

As for the comparison between different network generation models in terms of attack potential, the small-world networks are on average the easiest to attack, as short expected distance between nodes allows to quickly reach less protected parts of the network.
The networks with exponential degree distributions (\ie, preferential attachment and scale-free networks) are on average easier to attack than random graphs and trees.
This may be due to the fact that hubs in networks with exponential degree distributions are more connected, and activating them gives easy access to a large part of the network.
The random trees are the most difficult target for the attacker.
They offer very limited choice when guiding the attack, as any two nodes are connected with but a single path.

We will focus on the choice of the best defense heuristic if the attacker utilizes her optimal strategy, which is the worst possible situation for the defender.
Results for this scenario are presented in Figure~\ref{fig:best-bars}.
The best defender's choice depends on the generation model of the network.
In the case of the random graphs, the Low Closeness and Low Betweenness heuristics give the best results (even though Low Closeness is on average worse than its high closeness counterpart when considering all different attacker's strategies), with a number of Shapley value-based heuristics closely following.
In case of the scale-free networks, the heuristics based on high scores of Shapley value degree and Shapley value closeness centrality measures provide the best results for the defender.
When the preferential attachment networks are considered, the High Shapley Value Degree and the Low Shapley Value Closeness heuristics offer the best protection.
For the small-world networks Low Closeness and Low Degree heuristics are the most effective, closely followed by the Low Shapley Value Degree strategy.
Finally, the random trees can be defended most efficiently by the High Shapley-Value Based Degree strategy.
As it can be seen, in most cases the heuristics utilizing Shapley value-based degree and closeness centralities are valid choices, albeit not always the best ones.

It might be interesting to compare the effectiveness ranking for small networks when the attacker uses the optimal strategy (top-left plot in Figure~\ref{fig:best-bars}), and when she uses the best heuristic (top-right plot in Figure~\ref{fig:best-bars}).
As it can be seen, the ranking of the defender strategies remains almost identical, with the only difference being the Uniform strategy getting slight edge over Low Shapley Value Closeness strategy.
Other trends also remain largely consistent between the two settings.


\noindent \paragraph{Large networks} We now discuss results for larger networks, consisting of $1000$ nodes.
Because of time and memory complexities we are unable to compute the optimal solution for networks of this size.
However, we present results of the heuristic algorithms to observe how trends in them change with the size of the network.

Most of our observations made for the smaller networks remain valid.
One notable difference in terms of defender strategies is an increase of the effectiveness of High Degree strategy.
In many cases it now outperforms most of the other strategies and the difference is especially striking in the case of the preferential attachment networks.
In this type of networks the expected degree of hubs greatly increases with size of the network, hence the attacker's advantage from activating a hub and gaining access to a large part of the network is also much higher.
Because of this, focusing defender's efforts on protecting high degree nodes significantly improves effectiveness of the defense.

Another notable difference is how the effectiveness of High Betweenness strategy greatly depends on the strategy of the attacker.
While it has considerable variance in effectiveness even for smaller networks, for larger networks it achieves both the best and the worst results throughout all defender strategies.
It is particularly efficient to use the High Betweenness strategy against attacker strategies utilizing Majority with high probability, while we observe much worse performance against Greedy-intensive attacker strategies.
A possible explanation is that in larger networks, especially these with scale-free properties, the distribution of betweenness centrality values among nodes is more heterogenous.
Majority strategy, ignoring actual time of activation of the nodes, is unable to avoid attacking high betweenness, strongly defended nodes, while Greedy strategy is able to do so.

Another difference is that the Epsilon-Decreasing strategy, usually achieving poor performance for small networks, in large networks is doing better in comparison to Mixed and Epsilon-First strategies.
In many cases it has better results than pure Majority or pure Greedy strategies, depending on the defender's strategy.
Similarly, for Mixed and Epsilon-First strategies a particular parametrization often achieves better performance than pure Greedy and Majority strategies.

While the small-world structure is the most difficult to defend for both small and large networks, the difference between it and the other models of generation is especially striking for the networks with $1000$ nodes.
In fact, the minimal number of successfully attacked nodes for small-world networks is greater than the maximal number of successfully attacked nodes for all other types of network structure.

Since we are unable to compute the optimal strategy of the attacker for large networks, we will now analyze the choice of the defender's heuristic under assumption that the attacker will use the best response out of her strategies considered in the paper (in most cases a particular parameterization of the Epsilon-First strategy).
The results for this scenario are presented in Figure~\ref{fig:best-bars}.
In the case of the random graphs, a few heuristics have a very similar effectiveness, namely High Shapley Value Closeness, High Closeness, High Shapley Value Degree and High Degree (the difference between the best and the worst of them is smaller than a single active node).
For the scale-free network, the Low Shapley Value Degree heuristic offers the best performance, with the Low Betweenness heuristic being the second.
In the case of both the preferential attachment networks and the random trees, the best choice of the defender is to use the High Degree heuristic.
Finally, for the small-world networks the best performance is provided by the Low Shapley Value Closeness heuristic.

\begin{figure}[tbh]
\centering
\setlength\tabcolsep{0pt}
\begin{tabular}{m{.25\textwidth}m{.25\textwidth}m{.25\textwidth}m{.25\textwidth}}
\multicolumn{1}{c}{Karate Club} &
\multicolumn{1}{c}{Facebook (small)} &
\multicolumn{1}{c}{Facebook (medium)} &
\multicolumn{1}{c}{Facebook (large)} \\
\includegraphics[width=\linewidth]{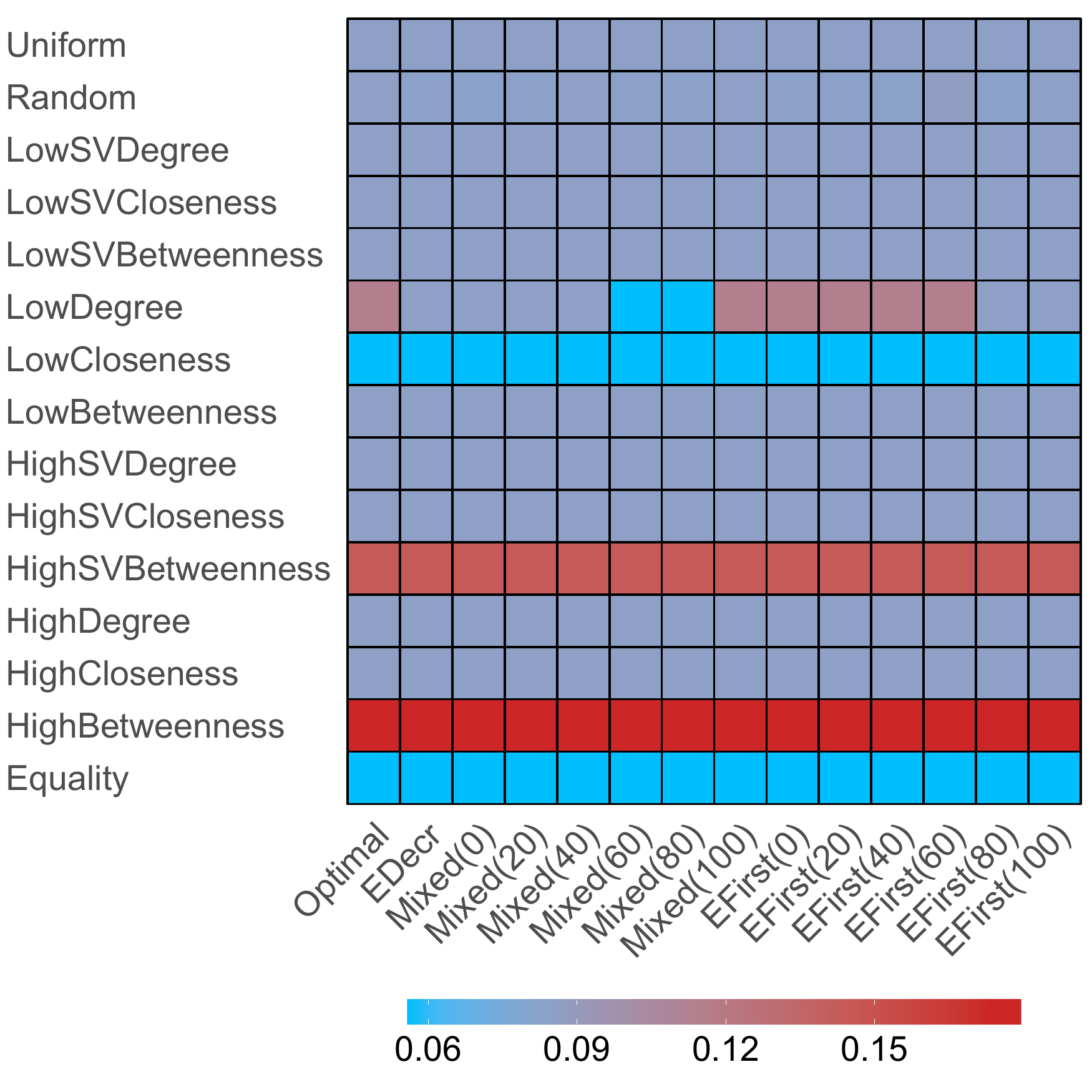} &
\includegraphics[width=\linewidth]{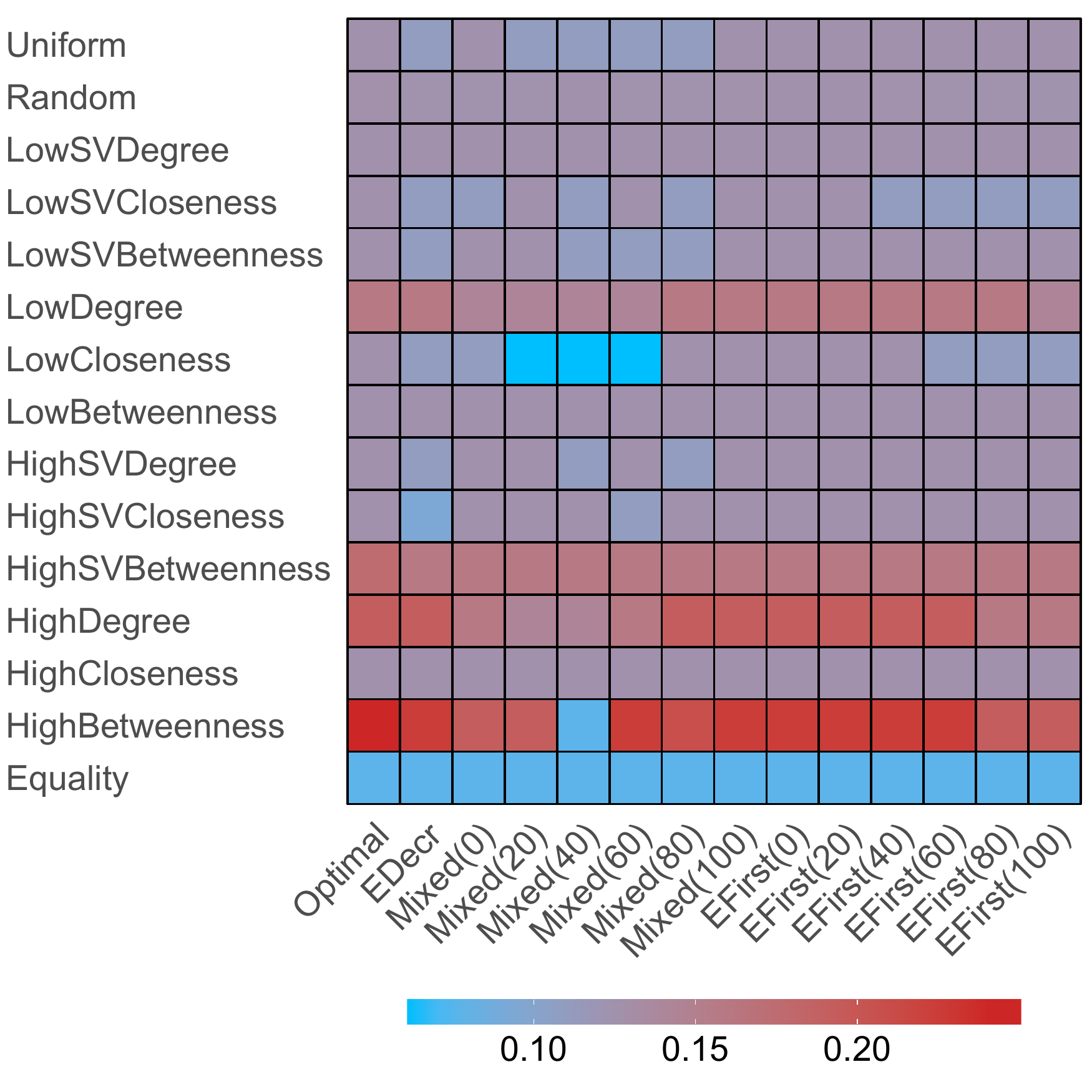} &
\includegraphics[width=\linewidth]{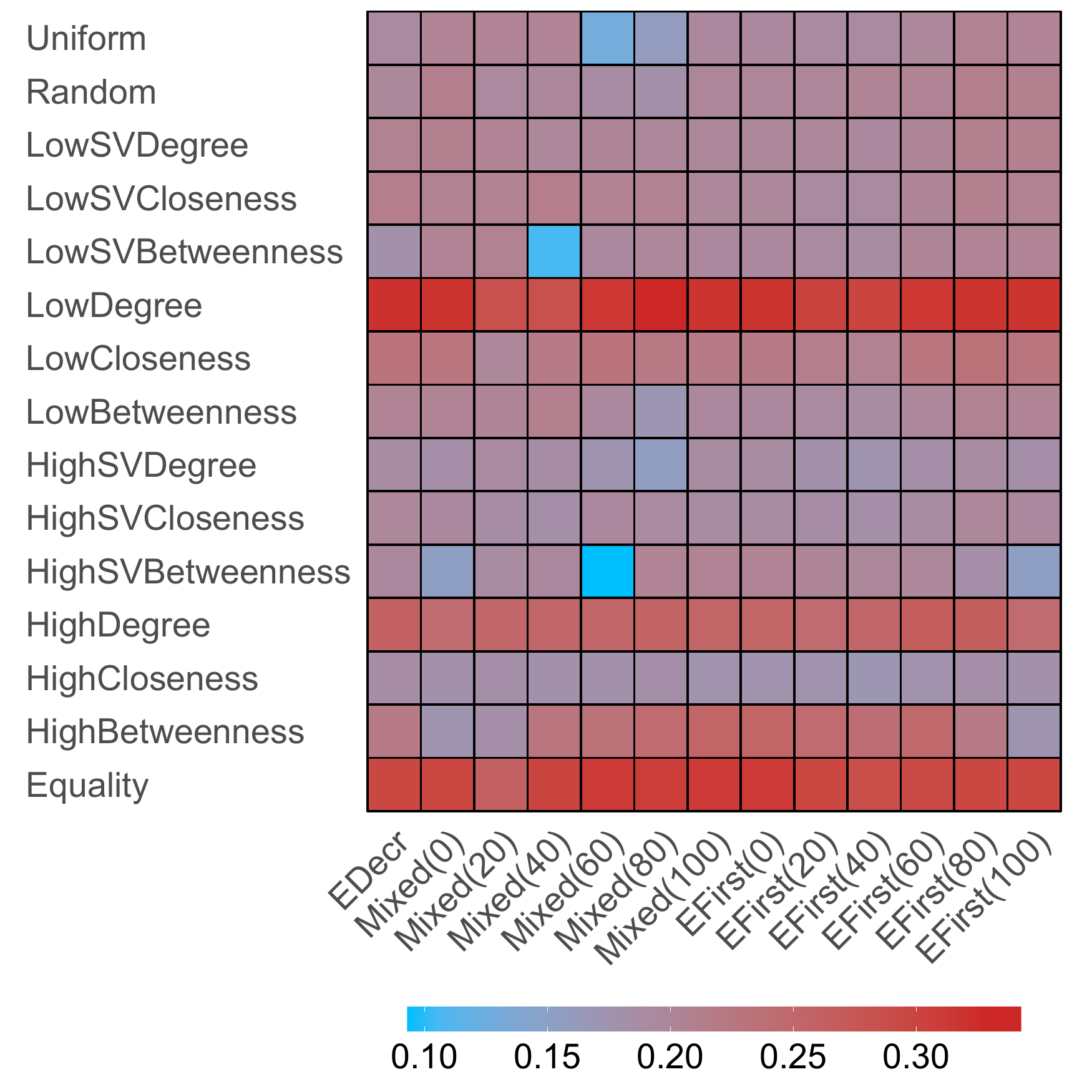} &
\includegraphics[width=\linewidth]{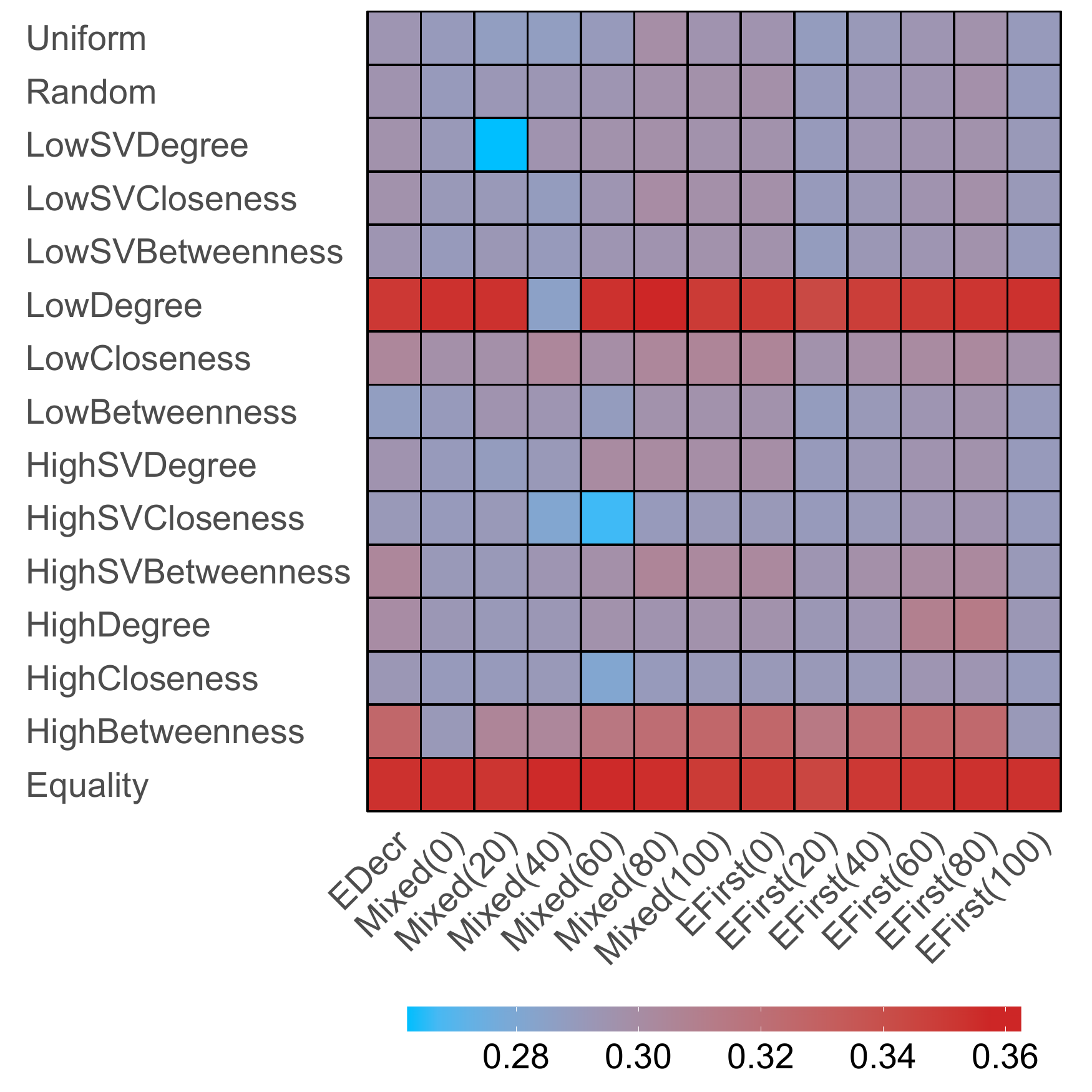} \\
\end{tabular}
\caption{
Comparison of effectiveness of defender's strategies for different attacker's strategies.
Color of each cell represents the expected percentage of nodes successfully activated by the attacker.
}
\label{fig:heat-reallife}
\end{figure}

\noindent \paragraph{Real-life networks}
Results of the simulations on real-life networks are presented in Figures~\ref{fig:heat-reallife} and~\ref{fig:best-bars-realife-soc}. 
Notice that we computed the Optimal attack strategy only for networks with at most $80$ nodes.

In general, the results for real-life networks show similar trends to those for the randomly generated datasets.
In a number of situations we observe a sharp increase or a sharp decrease in the effectiveness of a given attacker strategy for a particular parametrisations (see, e.g., the Mixed(20) attacker strategy against the Low Shapley Value Degree defense in the large Facebook fragment), which is caused by the fact that, unlike for the randomly generated network, the results are not averaged over multiple runs, but rather computed for a single structure.

The comparison of results for different sizes of the networks indicates that, in general, smaller networks are easier to defend.
For almost all strategies of the defender their relative effectiveness decreases with the number of nodes (\ie, the percentage of successfully attacked nodes grows with the number of nodes).

As for the effectiveness of the defense heuristics, Random, Low Degree, High Betweenness, and High Shapley Value Betweenness are among the worst methods of defending the network, just as for the random networks.

In a situation where the total number of nodes that can be attacked is small, maximizing the cost of attacking the first node (which is the main principle behind the Equality strategy) shows its strength.

\begin{figure}[tbh]
\centering
\setlength\tabcolsep{0pt}
\begin{tabular}{m{.5\textwidth}m{.5\textwidth}}
\multicolumn{1}{c}{Against the optimal strategy} &
\multicolumn{1}{c}{Against the best heuristic} \\
\includegraphics[width=\linewidth]{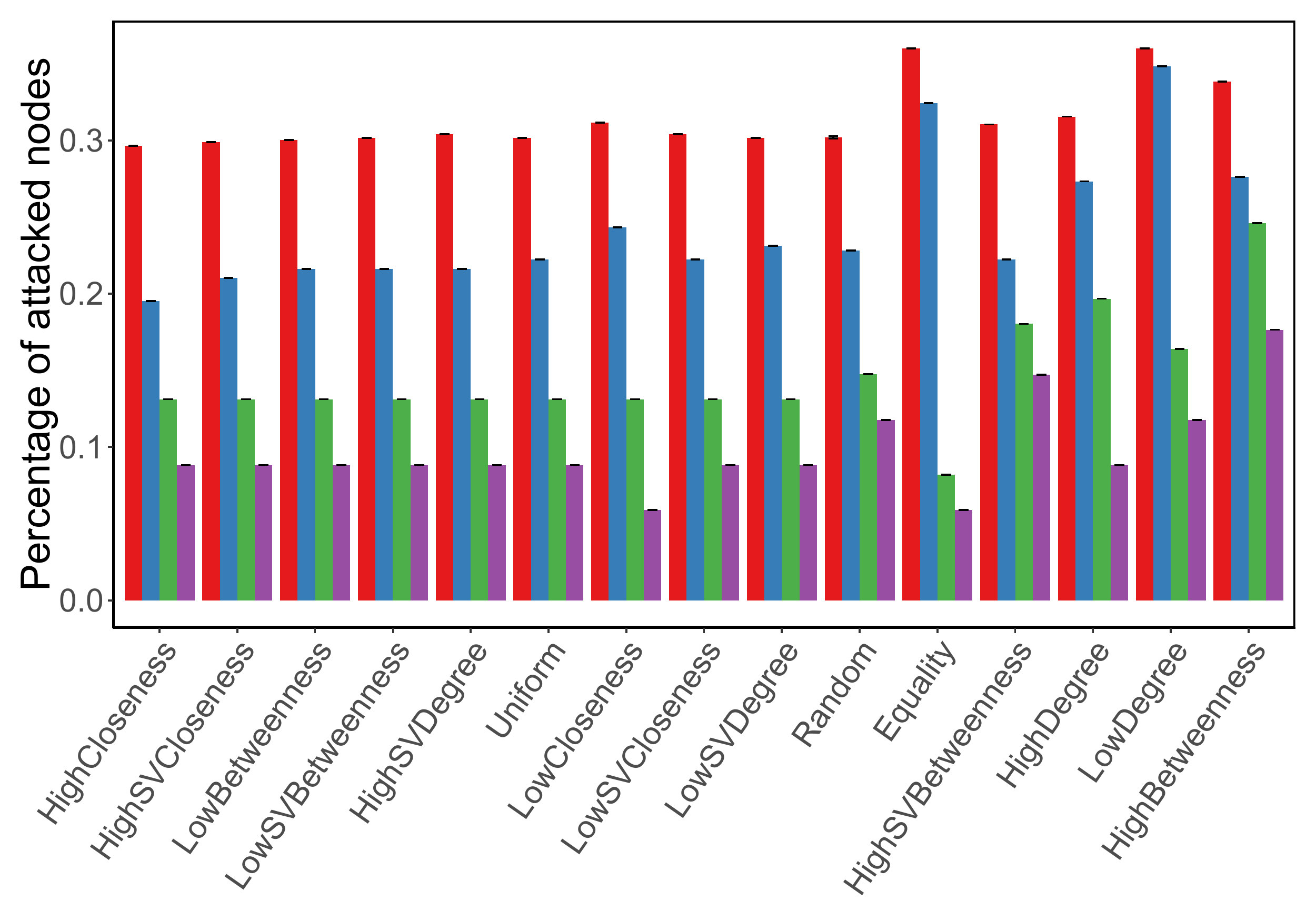} &
\includegraphics[width=\linewidth]{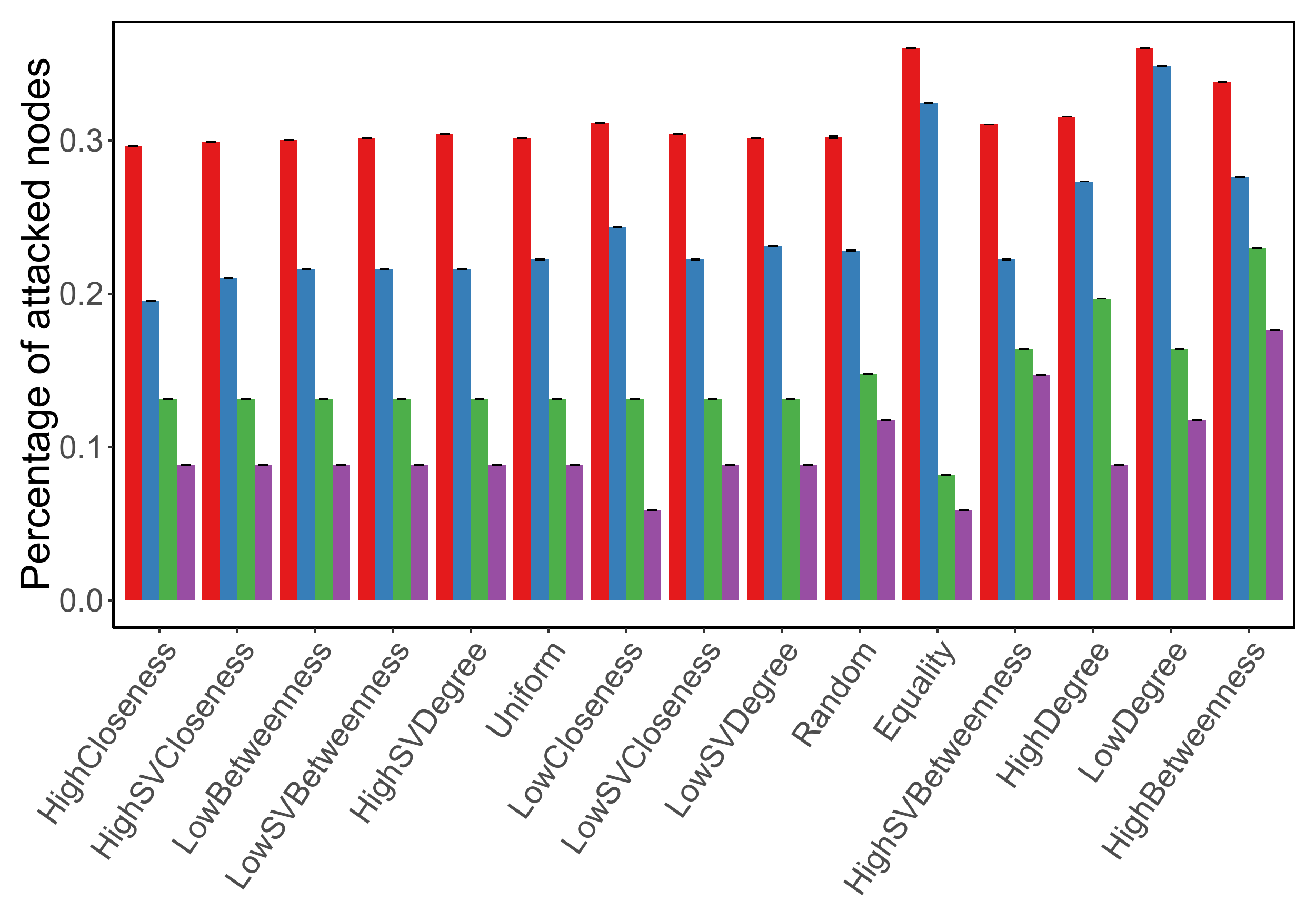} \\
\multicolumn{2}{c}{\includegraphics[width=.6\linewidth]{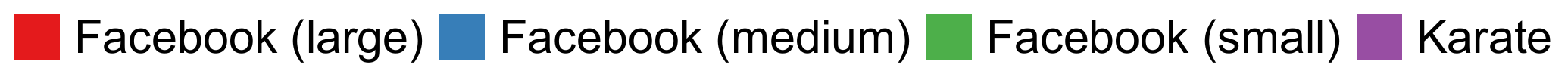}} \\
\end{tabular}
\caption{
Comparison of effectiveness of defender's strategies for real-life social networks, under assumption that the attacker uses optimal strategy or the best heuristic considered by us.
Defender's strategies are sorted from most to least effective on average.
}
\label{fig:best-bars-realife-soc}
\end{figure}


\subsection{Alternative Experimental Settings}

We also conducted a number of other experimental studies, the results of which are presented in the appendices.
Appendix~\ref{app:tiny-networks} presents the results of experiments with the mixed-integer linear programming strategy of the defender described in Section~\ref{app:milp}.
In Appendix~\ref{app:multiple-seeds} we consider a setting in which the attacker can choose multiple seeds at the beginning of the process.
Appendix~\ref{app:repeated-setting} presents a series of security diffusion games in which the defender updates her distribution of security resources based on the previous choices of the attacker.
Finally, in Appendix~\ref{app:alternative-models} we consider alternative forms of the function used to compute the time of activation of the network nodes.

\section{Discussion and Conclusions}

\noindent
We considered the problem of defending the network from a sequential attack that is strategically planned and controlled by the attacker, rather than one that spreads due to a stochastic process.
We proved that finding an optimal way to spread the attack is an NP-complete task in a general case, which suggests that the attacker may not always utilize the best way of diffusion.
We found optimal strategies for simple network structures and formulated the problem of finding an optimal way of defending the network as a mixed-integer linear programming.
For large and complex networks, we experimentally analyzed ways of defending the network from various types of attacks.
We found that the highest utility of the defender is achieved by utilizing one of Shapley value-based network centrality measures.
In most cases the attacker can achieve high utility by using the greedy strategy with a high probability.

One of the fundamental assumptions underlying our study is that of centralised defense, i.e., of a single authority who decides on the security level of each node. An alternative assumption that has been often considered in the network literature is that the level of defense can be chosen individually by each node in a network \cite{goyal2015interaction, kunreuther2003interdependent}.
In most of these settings, each target is autonomous and can decide whether or how to defend itself~\cite{goyal2015interaction}.
Therefore, this body of literature typically uses game-theoretic methods to identify equilibria of the game between attackers and defenders when it is possible to efficiently identify best strategies of all players. Such decentralized defense in interdependent security games fits scenarios, where individuals deliberate about the incentive to invest in protection such as vaccination, airline security or fire protection and act upon that \cite{kunreuther2003interdependent}. Our study, in turn, focuses the  scenarios with a central defender who is in charge of protecting all connected targets. This one defender can therefore build a resilient network or centrally assign defense resources across nodes~\cite{goyal2014attack, dziubinski2017you}. Arguably, centralized defenders can perform better in finding global defense solutions and hence can be best adopted in high-stake problems whereas decentralized defense gives nodes the option of deciding whether to protect themselves or not and therefore unincentivized nodes might put a whole network at a risk.  

In a broader context, our work can be applicable and extended to other types of sequential attacks on a network~\cite{nguyen2009stochastic}, that are used to model, for instance, competitive adoption of a product~\cite{broecheler2010scalable} and protecting computer networks~\cite{durkota2014computing}.

As for the ideas for future work, one possible venue is more in-depth analysis of the exploration-exploitation techniques.
In this work we considered only two simple techniques.
Using more intricate learning techniques~\cite{hu1998multiagent} might yield better results in adjusting a strategy of the attacker.
Another interesting problem would be to consider a setting where, instead of playing just one game, players can face each other off multiple times on the same network and they modify their strategies as the process goes on.
Finding an equilibrium of this repeated game may lead to a better understanding of the dynamics of defending a network against a strategic continuous attack.

\section*{Acknowledgments}

\noindent This work was funded by the Cooperative Agreement between the Masdar Institute of Science and Technology (Masdar Institute), Abu Dhabi, UAE and the Massachusetts Institute of Technology (MIT), Cambridge, MA, USA-Reference 02/MI/MIT/CP/11/07633/GEN/G/00.
Also, this article was supported by the Polish National Science Centre grants no. 2015/17/N/ST6/03686 and 2016/23/B/ST6/03599.

\clearpage

\bibliographystyle{abbrv}
\bibliography{bibliography}

\appendix

\clearpage

\section{Optimal Strategies for Cliques}
\label{app:cliques}

\begin{theorem}
\label{thrm:clique}
Let network $G=(V,E)$ be a clique, \ie, $\forall_{v,w\in V}(v,w)\in E$. Given a particular distribution of security resources $\sr$, an optimal attack sequence is $\seq^*$, where nodes are ordered non-decreasingly according to assigned security resources, \ie, $\forall_{i<j}\sr(\seq^*_i) \leq \sr(\seq^*_j)$.

An optimal defense strategy against an optimal attack strategy is to spread security resources uniformly, \ie, $\forall_{v\in V}\sr^*(v)=\frac{\SR}{n}$.
\end{theorem}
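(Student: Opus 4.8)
The plan is to analyse the two halves of the statement separately, exploiting in both cases the fact that in a clique every node has degree $n-1$ and, after the seed is fixed, each newly attacked node is adjacent to all previously captured ones. Hence, if the attacker follows a sequence $\seq$, the $i$-th attacked node has exactly $i-1$ activated neighbours (the seed has none), so by~(\ref{cs:time-first})--(\ref{cs:time-second}) its activation time is $(n-1)+\sr(\seq_1)$ for the seed and $\frac{(n-1)+\sr(\seq_i)}{i-1}$ for $i\ge 2$. Summing over a prefix, the time to capture the first $k$ nodes is $S_k = (n-1)W_k + \sum_{i=1}^{k} w_i\,\sr(\seq_i)$, where $w_1=w_2=1$, $w_i=\frac{1}{i-1}$ for $i\ge 3$, and $W_k=\sum_{i\le k} w_i$. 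The two structural features I would rely on throughout are that the weights $w_i$ are non-increasing in the position $i$, and that the ordering-independent term $(n-1)W_k$ is identical for every sequence.

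For the attacker's claim I would run the exchange argument already invoked in Theorem~\ref{thrm:star}. Suppose an optimal sequence contains an inversion, i.e. positions $i<j$ with $\sr(\seq_i)>\sr(\seq_j)$. Swapping the two nodes (every permutation is a feasible attack path in a clique) changes each prefix time $S_\ell$ with $\ell\ge j$ by $(w_i-w_j)\bigl(\sr(\seq_j)-\sr(\seq_i)\bigr)\le 0$, strictly decreases it for $i\le\ell<j$, and leaves prefixes of length below $i$ untouched. Since no prefix time ever increases, the swap cannot reduce the number of nodes captured within $T$; iterating eliminates all inversions and yields the non-decreasing order, while the equality $w_1=w_2$ accounts for the interchangeability of the first two positions.

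For the defender's claim I would first note that, against the optimal sorted attack, the attacker captures the largest $k$ with $S_k\le T$ and necessarily uses the $k$ smallest security values, so $S_k=(n-1)W_k+\sum_{i=1}^{k} w_i\,\sr_{(i)}$ where $\sr_{(1)}\le\cdots\le\sr_{(n)}$ are the sorted resources. The heart of the argument is to show that the uniform allocation simultaneously maximizes $S_k$ for every $k$. Because $(w_i)$ is non-increasing and $(\sr_{(i)})$ is non-decreasing, Chebyshev's sum inequality gives $\sum_{i=1}^{k} w_i\,\sr_{(i)}\le \tfrac{W_k}{k}\sum_{i=1}^{k}\sr_{(i)}$, and since the $k$ smallest values have average at most the global average $\tfrac{1}{n}\sum_{v}\sr(v)\le\tfrac{\SR}{n}$, this yields $\sum_{i=1}^{k} w_i\,\sr_{(i)}\le \tfrac{\SR}{n}W_k$, with equality precisely when all resources are equal. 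Thus $S_k\le (n-1)W_k+\tfrac{\SR}{n}W_k$ for every $k$, a bound attained by $\sr^*\equiv\SR/n$.

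Because the uniform allocation maximizes every prefix time $S_k$, it forces the threshold $\max\{k:S_k\le T\}$ to its smallest possible value, hence minimizes the number of nodes the attacker can capture; as the game is zero-sum this is exactly the optimal defense. The step I expect to be the main obstacle is the defender's half: one must recognise that optimising against a single target size $k$ is insufficient and that a single allocation must dominate across all prefix lengths at once. The Chebyshev bound is what makes this simultaneous domination transparent, and confirming that the attacker indeed draws the $k$ smallest values---a direct consequence of the first half---is the key link that allows the two parts to be combined.
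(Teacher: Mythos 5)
Your proposal is correct, and it splits naturally into a half that matches the paper and a half that does not. For the attacker, you run the same exchange argument as the paper (the one also invoked there for Theorem~\ref{thrm:star}): assume an inversion $\sr(\seq_i)>\sr(\seq_j)$ with $i<j$ and swap. You are in fact somewhat more careful than the paper, which only computes the change in the sum of the two swapped activation times (covering prefixes of length $\ell\geq j$) and leaves implicit that the intermediate prefixes $i\leq\ell<j$ also do not increase; your prefix-sum bookkeeping with the weights $w_1=w_2=1$, $w_i=\frac{1}{i-1}$ makes the ``number captured within $T$ cannot drop'' conclusion airtight. The defender half is where you genuinely diverge. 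The paper argues by contradiction with a local smoothing step: if an optimal allocation has $\sr(\seq^*_i)<\sr(\seq^*_{i+1})$ at adjacent positions $i>1$ of the attacker's sorted sequence, replacing both values by their average strictly increases the sum of those two activation times, so no optimal allocation can have unequal adjacent values. You instead prove a global bound: Chebyshev's sum inequality applied to the non-increasing weights against the non-decreasing sorted resources gives $S_k\leq\bigl(n-1+\tfrac{\SR}{n}\bigr)W_k$ for every prefix length $k$, and the uniform allocation attains this bound simultaneously for all $k$. Your route buys two things the paper's argument leaves informal: first, it makes explicit that the defender must dominate across \emph{all} prefix lengths at once---so there is no need to re-verify that the attacker's best response to the perturbed allocation is still the sorted order, a point the paper's averaging step quietly relies on; second, it handles the seed position automatically, whereas the paper's exchange equalizes resources only at positions $2,\ldots,n$ of the attack sequence, and full uniformity must be recovered from the side remark that the first two positions are interchangeable. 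The paper's perturbation is more elementary (no named inequality), but yours is direct rather than by contradiction and closes these loose ends.
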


\begin{proof}
Since the network is a clique, the activation time of $i$-th node in sequence $\seq$ is $\frac{n-1+\sr(\seq_i)}{\max(1,i-1)}$. This is because $\seq_i$ has $n-1$ neighbors, $i-1$ of which have been already activated).

First we prove our claim about the strategy of the attacker.

We prove it by contradiction.
Assume to the contrary, that in an optimal attack sequence $\seq$ we have $\sr(\seq_i) > \sr(\seq_j)$ for $i < j$ such that $i > 1$ and $j > 2$ (notice that we can always swap first and second element of the sequence as the sum of their activation times remains the same).
However, such a sequence can be improved by swapping elements $i$ and $j$.
Let $\et_{ij}$ be the time of activation of $\seq_i$ and $\seq_j$ before such a swap, and $\et_{ji}$ after the swap (note that the activation times of all other nodes remain the same after the swap). We have that (notice that $\max(1,i-1)=i-1$ and $\max(1,j-1)=j-1$ for $i > 1$ and $j > 2$):
$$
\begin{aligned}
\et_{ji}-\et_{ij} =& \left(\frac{n-1+\sr(\seq_j)}{i-1}+\frac{n-1+\sr(\seq_i)}{j-1}\right)- \left(\frac{n-1+\sr(\seq_i)}{i-1}+\frac{n-1+\sr(\seq_j)}{j-1}\right) \\
=& \frac{\sr(\seq_i)-\sr(\seq_j)}{j-1} - \frac{\sr(\seq_i)-\sr(\seq_j)}{i-1} < 0.
\end{aligned}
$$

Therefore, for $i < j$ such that $i > 1$ and $j > 2$ we always have $\sr(\seq_i) \leq \sr(\seq_j)$.

We will now prove our claim about the strategy of the defender. The proof is by contradiction.
Assume to the contrary, that in an optimal distribution of security resources $\sr$ we have $i>1$ such that $\sr(\seq^*_i) \neq \sr(\seq^*_{i+1})$, where $\seq^*$ is an optimal attack sequence (again, notice that we can always swap the first and the second elements in the attack sequence as the sum of their activation times remains the same).
Since $\seq^*$ is an optimal attack sequence we have that $\sr(\seq^*_i) < \sr(\seq^*_{i+1})$.
However, such $\sr$ can be improved by setting $\sr(\seq^*_i)$ and $\sr(\seq^*_{i+1})$ to their average.
Let $\et_{1}$ be time of activation of $\seq^*_i$ and $\seq^*_{i+1}$ before the change, and $\et_{2}$ after the change (the activation times of all other nodes remain the same after the change).
We have:
$$
\begin{aligned}
\et_{2}-\et_{1} =& \left(\frac{n-1+\frac{\sr(\seq^*_i)+\sr(\seq^*_{i+1})}{2}}{i-1}+\frac{n-1+\frac{\sr(\seq^*_i)+\sr(\seq^*_{i+1})}{2}}{i}\right)- \left(\frac{n-1+\sr(\seq^*_i)}{i-1}+\frac{n-1+\sr(\seq^*_{i+1})}{i}\right) \\
=& \frac{\sr(\seq^*_{i+1})-\sr(\seq^*_i)}{2(i-1)}-\frac{\sr(\seq^*_{i+1})-\sr(\seq^*_i)}{2i} = \frac{\sr(\seq^*_{i+1}) - \sr(\seq^*_i)}{2i(i+1)} > 0.
\end{aligned}
$$

Therefore, in an optimal distribution of security resources we have $\sr(\seq^*_i) = \sr(\seq^*_{i+1})$ for $i>1$.
\end{proof}

\section{The Optimal Attack for Trees}
\label{app:trees}

\begin{theorem}
\label{thrm:tree-attacker}
Let network $G$ be a tree. The optimal strategy of the attacker can be computed in polynomial time.
\end{theorem}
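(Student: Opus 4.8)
The plan is to exploit the acyclicity of trees to show that the activation time of every node is a fixed constant independent of the attack order, after which the problem collapses to a weighted connected-subtree selection that admits a standard tree dynamic program.

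First I would establish the key structural lemma: \emph{in a tree, every node $v$ has activation time $\et(v) = d(v) + \sr(v)$, independent of the order in which nodes are attacked.} The seed node satisfies this directly by equation~\eqref{cs:time-first}. For any non-seed node $v$, note that at the moment $v$ is activated the set $I$ of already-activated nodes is a connected subtree, since each node of a valid attack sequence attaches to a previously activated neighbour and prefixes of such sequences are connected. If $v$ had two neighbours in $I$, then together with a path joining them inside the connected set $I$ this would close a cycle, contradicting that $G$ is a tree. Hence $|N(v) \cap I| = 1$ exactly, and equation~\eqref{cs:time-second} yields $\et(v) = d(v)+\sr(v)$.

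Given this lemma, I would reformulate the optimisation. Because each node's cost is fixed, the total time to activate a set of nodes equals the sum of their individual costs and is independent of the activation order. Moreover, along any valid sequence the set of nodes activated within the limit $T$ is exactly a prefix of that sequence (cumulative times are strictly increasing), hence a connected subtree containing the seed $\vs$; conversely, any connected subtree $S \ni \vs$ can be activated in its entirety, in say BFS order from $\vs$, within cumulative time $\sum_{v \in S}\big(d(v)+\sr(v)\big)$. Therefore the Optimal Attack problem on a tree is equivalent to finding a maximum-cardinality connected subtree $S$ with $\vs \in S$ subject to $\sum_{v \in S}\big(d(v)+\sr(v)\big) \leq T$.

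I would then solve this equivalent problem by a tree knapsack. Rooting the tree at $\vs$, I define $f_v[j]$ as the minimum total cost of a connected subtree contained in the subtree rooted at $v$, containing $v$, and having exactly $j$ nodes, with $f_v[1]=d(v)+\sr(v)$. Processing the children $c$ of $v$ one at a time, the update is $f_v[j] \leftarrow \min_{0 \le t \le j-1}\big(f_v[j-t]+f_c[t]\big)$, where $f_c[0]=0$ encodes the option of not descending into $c$; the answer is the largest $j$ with $f_{\vs}[j]\le T$. The standard amortised analysis of such subtree-size-bounded merges gives running time $\mathcal{O}(n^2)$, and if the attacker is also free to choose the seed, repeating the procedure over all $n$ candidate roots (or a rerooting argument) keeps the total cost polynomial. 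The conceptual crux, and the only genuinely non-routine step, is the structural lemma; once order-independence of activation times is in hand the remainder is a textbook weighted tree DP. The points demanding the most care are verifying the exact equality $|N(v) \cap I| = 1$ from acyclicity and justifying the two-way correspondence between time-bounded activated prefixes and connected subtrees of bounded total cost.
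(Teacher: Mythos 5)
Your proposal is correct and follows essentially the same route as the paper's proof: the same key observation that in a tree every non-seed node has exactly one active neighbour at the moment of activation (so its cost is the fixed weight $d(v)+\sr(v)$, independent of order), the same reduction to finding a largest connected subtree containing the seed with total weight within the time limit, and the same child-by-child knapsack dynamic program, which is exactly the paper's Algorithm~\ref{alg:tree-attack}. Your write-up is in places more careful---you justify the one-active-neighbour claim via acyclicity, state the two-way sequence--subtree correspondence explicitly, and your amortised $\mathcal{O}(n^2)$ bound sharpens the paper's $\mathcal{O}(n^3)$ analysis---but these are refinements of the same argument rather than a different approach.
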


\begin{proof}

In Algorithm~\ref{alg:tree-attack}, we present the pseudocode of the algorithm for computing the optimal attacker's strategy on a tree starting from an \textit{a priori} given node. It performs the exhaustive search of all possible strategies of the attacker by traversing the tree in a bottom-up fashion.
In the pseudocode, $V_{botup}$ denotes the sequence of nodes in $V$ in a bottom-up order (when $\vs$ is the root), $c(v)$ denotes the sequence of the children of $v$, $c_i(v)$ denotes the $i$-th node in this sequence, and $t(v)$ denotes the size of the subtree with $v$ as the root.
We denote the concatenation of sequences using the $\oplus$ symbol.

\begin{algorithm}[tbh!]
\setstretch{1.1}
\LinesNumbered
\DontPrintSemicolon
\SetAlgoNoEnd
\SetAlgoNoLine
\KwIn{Tree $G=(V,E)$, distribution of security resources $\sr$, time limit $T$, starting node $\vs$.}
\KwOut{Optimal attack sequence starting with node $\vs$}
\For{$v \in V$}{
	\For{$k \in \langle 1, \ldots, n\rangle$}{
		$x[v,k] \gets \infty$\;
	}
}
\For(\tcp*[f]{loop over all nodes in bottom-up order})
{$v \in V_{botup}$}{
	$\tau[v,0] \gets 0$\;
	$\gamma[v,0] \gets \emptyset$\;
	$\tau[v,1] \gets d(v)+\sr(v)$\;
	$\gamma[v,1] \gets \langle v \rangle$\;
	\If(\tcp*[f]{if $v$ is not a leaf aggregate results from children})
	{$d(v) > 1$}{
		\For(\tcp*[f]{choose targets only from the subtree of the first child})
		{$k \in \langle 1, \ldots, t(c_1(v))\rangle$}{
			$y[1,k] \gets d(v)+\sr(v) + \tau[c_1(v),k-1]$\;
			$q[1,k] \gets \langle v \rangle \oplus \gamma[c_1(v),k-1]$\;
		}
		$l \gets t(c_1(v)) + 1$\;
		\For(\tcp*[f]{choose targets from the subtrees of the first $m$ children})
		{$m \in \langle 2, \ldots, |c(v)|\rangle$}{
			$l \gets l + t(c_m(v))$\;
			\For(\tcp*[f]{loop over the number of attacked nodes})
			{$k \in \langle 1, \ldots, l\rangle$}{
				$y[m,k] \gets \infty$\;
				\For(\tcp*[f]{loop over the number of attacked nodes we select from the subtrees of the first $m-1$ children})
				{$i \in \langle 1, \ldots, k\rangle$}{
					$\tilde{y} \gets y[m-1,i]+\tau[c_m(v),k-i]$\;
					\If{$\tilde{y} < y[m,k]$}{
						$y[m,k] \gets \tilde{y}$\;
						$q[m,k] \gets q[m-1,i] \oplus \gamma[c_m(v),k-i]$\;
					}
				}
			}
		}
		\For(\tcp*[f]{transfer results to tables $\tau$ and $\gamma$})
		{$k \in \langle 2, \ldots, t(v)\rangle$}{
			$\tau[v,k] \gets y[|c(v)|,k]$\;
			$\gamma[v,k] \gets q[|c(v)|,k]$\;
		}
	}
}
\Return $\max_{k:\tau[\vs,k] < T}\gamma[\vs,k]$
\caption{Finding the optimal attack sequence for a tree.}
\label{alg:tree-attack}
\end{algorithm}

One can notice that, in the case of a tree, every node other than the seed node has exactly one active neighbor at the moment of activation.
Hence, its time of activation is always $d(v)+\sr(v)$.
In what follows, we will call this value the \textit{weight} of node $v$.
The task of finding the optimal strategy of the attacker boils down to finding the largest subtree with the sum of weights lower than $T$. 
Algorithm~\ref{alg:tree-attack} achieves this goal by filling tables $\tau$ and $\gamma$.

Entry $\tau[v,k]$ contains the weight of the lightest subtree with $k$ nodes rooted at $v$, while $\gamma[v,k]$ contains the nodes in this subtree (in the order of activation).
Computing values of $\tau[v,0]$ and $\tau[v,1]$ (as well as $\gamma[v,0]$ and $\gamma[v,1]$) is trivial.
If node $v$ is not a leaf, we compute $\tau[v,k]$ and $\gamma[v,k]$ for $k>1$ by filling tables $y$ and $q$.

Entry $y[m,k]$ contains the weight of the lightest subtree with $k$ nodes rooted at $v$, constructed by using only the descendants of the first $m$ children of $v$.
Entry $q[m,k]$ contains nodes in this subtree (in the order of activation).
In lines~18-22, we check all possible values of $y[m,k]$ by taking $i$ nodes from the first $m-1$ children of $v$ (and $v$ itself) and adding to them $k-i$ nodes from the subtree of the $m$-th child of $v$.

Instructions in lines~11-12 as well as instructions in lines~24-25 are run $\mathcal{O}(n^2)$ times, since the loop in line~4 is run exactly $n$ times and, for any $v$, we have $t(v) \leq n$.
One can notice that the body of the loop in line~14 is run exactly $n-1$ times in total (because it is a loop over children of node $v$ and every node other than $\vs$ has exactly one parent).
Since both $l \leq n$ and $k \leq n$, instructions in lines~19-22, driving the time complexity of the algorithm, are run $\mathcal{O}(n^3)$ times. 
Hence, the time complexity of Algorithm~\ref{alg:tree-attack} is $\mathcal{O}(n^3)$.
The memory complexity of Algorithm~\ref{alg:tree-attack} is $\mathcal{O}(n^2)$ if we represent sequences in tables $\gamma$ and $q$ as trees visited in a prefix order.
\end{proof}

\section{Dynamic Programming Algorithm for Finding Optimal Attack}
\label{app:dynamic}

In this section, we present a dynamic-programming algorithm to find the optimal strategy of the attacker.
Its pseudocode can be found in Algorithm~\ref{alg:dynamic-programming}.
The algorithm computes the strategy that yields the highest possible utility of the attacker against a given distribution of security resources.
The $\oplus$ symbol denotes the concatenation of sequences.

\begin{algorithm}[tbh]
\setstretch{1.1}
\LinesNumbered
\DontPrintSemicolon
\SetAlgoNoEnd
\SetAlgoNoLine
\KwIn{The network $(V,E)$, the defender's distribution of security resources $\sr$, the seed node $\vs \in V$, and the attacker's time limit $T$.}
\KwOut{Attacker's sequence of activation starting with $\vs$ of maximal length given the time limit.}
\For{$C \subseteq V$}{
	$\et^*[C] \gets \infty$\;
}
$\et^*[\{\vs\}] \gets 0$\;
$\seq^*[\{\vs\}] \gets \langle\vs\rangle$\;
\For{$k=1,\ldots,n-1$}{
	\For{$C \subset V : (|C|=k) \land (\et^*[C] < \infty)$}{
		\For{$v \in V : (v \notin C) \land (N(v) \cap C \neq \emptyset)$}{
			$\Delta \et \gets \frac{d(v)+\sr(v)}{|N(v) \cap C|}$\;
			\If{$(\et^*[C]+\Delta \et < \et^*[C\cup\{i\}]) \land (\et^*[C]+\Delta \et < T)$}{
				$\et^*[C\cup\{i\}] \gets \et^*[C]+ \Delta \et$\;
				$\seq^*[C\cup\{i\}] \gets \seq^*[C] \oplus \langle i\rangle$\;
			}
		}
	}
	\If{$\lnot \exists_{C \subseteq V}|C|=k+1 \land \et^*[C] < \infty$}{
		\Return $\seq^*[\argmin_{C \subseteq V : |C|=k}\et^*[C]]$\;
	}
}
\Return $\seq^*[V]$\;
\caption{A dynamic-programming algorithm for the optimal strategy of the attacker.}
\label{alg:dynamic-programming}
\end{algorithm}

In lines~1-4 we initialize data structures, \ie, array $\et^*$ holding the minimal time necessary to activate a given set of nodes and array $\seq^*$ holding the sequence allowing to activate a given set of nodes in this minimal time.
In $k$-th run of loop in line~5 we fill both arrays for these subsets of size $k+1$ that can be activated within time limit.
We do it by iterating over the subsets of size $k$ that can be activated within time limit in loop in line~6 and trying to activate next node in loop in line~7.
In line~8 we compute the time necessary to active node $v$ when previously we activated nodes in $C$ in the best possible way.
In lines~9-11 we update the way of activating nodes $C \cup \{v\}$ if activating nodes in $C$ first and node $v$ after that is better then currently known best solution.
We end the algorithm either when it is no longer possible to activate any set of the size $k+1$ (in lines~12-13) or when we find a way of activating the entire network (in line~14).

Time complexity of the algorithm is $\mathcal{O}(2^n n^2)$, as for every of the $\mathcal{O}(2^n)$ subsets of the set of nodes we might need to iterate over $\mathcal{O}(n)$ of its neighbors and for each of them compute their time of activation in time $\mathcal{O}(n)$.
Notice that arrays can be implemented using hash tables.
Given the exponential time complexity of the algorithm, it is only suitable for small networks.

\section{Experimental Results with Mixed-Integer Linear Programming}
\label{app:tiny-networks}

\noindent
To test the effectiveness of mixed-integer linear programming in finding the optimal strategy of the defender we perform experiments on randomly generated networks.
In order to avoid selecting arbitrary set of strategies $\avs$ available to the attacker, we allow the attacker to choose any strategy.
However, under these assumptions, we are able to solve mixed-integer linear programming instance efficiently only for very small networks.
A potential idea for future work is studying whether there exists a way of finding a set $\avs$ of limited size that is still guaranteed to always contain the optimal attacker strategy. 

Figures~\ref{fig:best-bars-milp} and~\ref{fig:heat-milp} present our results for networks with $6$ nodes and average degree $2$, taken as an average over $100$ simulations.
Mixed-integer linear programming were solved using \textit{lp\_solve} solver version $5.5.2.5$.

As it can be seen, distribution of security resources computed using mixed-integer linear programming is considerably more successful in mitigating the attack than most of the other defense strategy, with centrality-based heuristics closely following.
This suggests that, at least in the case of simple structures of very small networks, centrality measures may be close to capturing the complexity of the problem of finding the optimal defense.

Another difference in comparison to results for larger networks is that all considered strategies of the attacker achieve very similar results.
Very small number of nodes results in fairly limited variety of valid strategies available to the attacker.

\begin{figure}[ht]
\centering
\setlength\tabcolsep{0pt}
\begin{tabular}{m{.5\textwidth}m{.5\textwidth}}
\multicolumn{1}{c}{Against optimal strategy} &
\multicolumn{1}{c}{Against best heuristic} \\
\includegraphics[width=\linewidth]{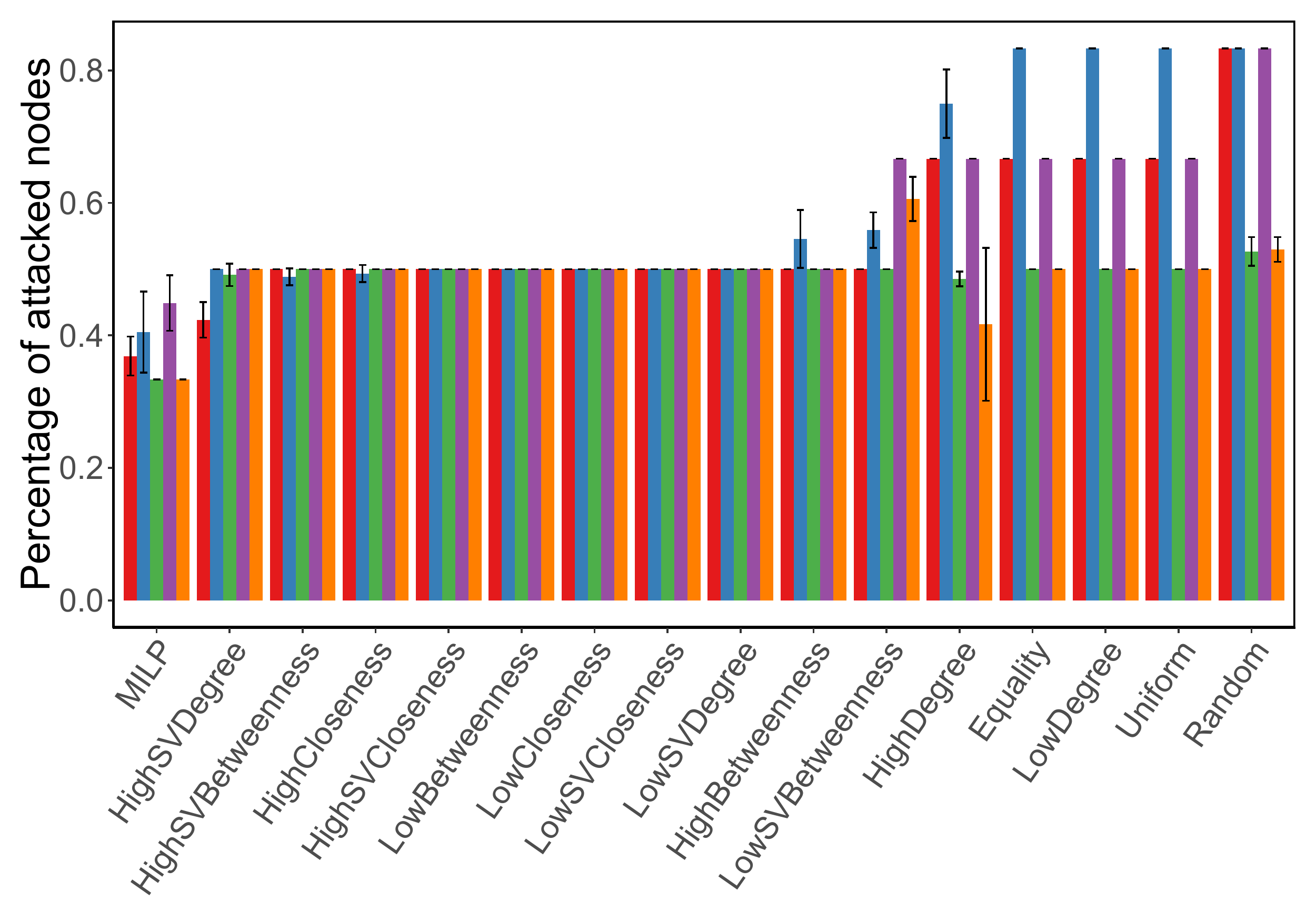} &
\includegraphics[width=\linewidth]{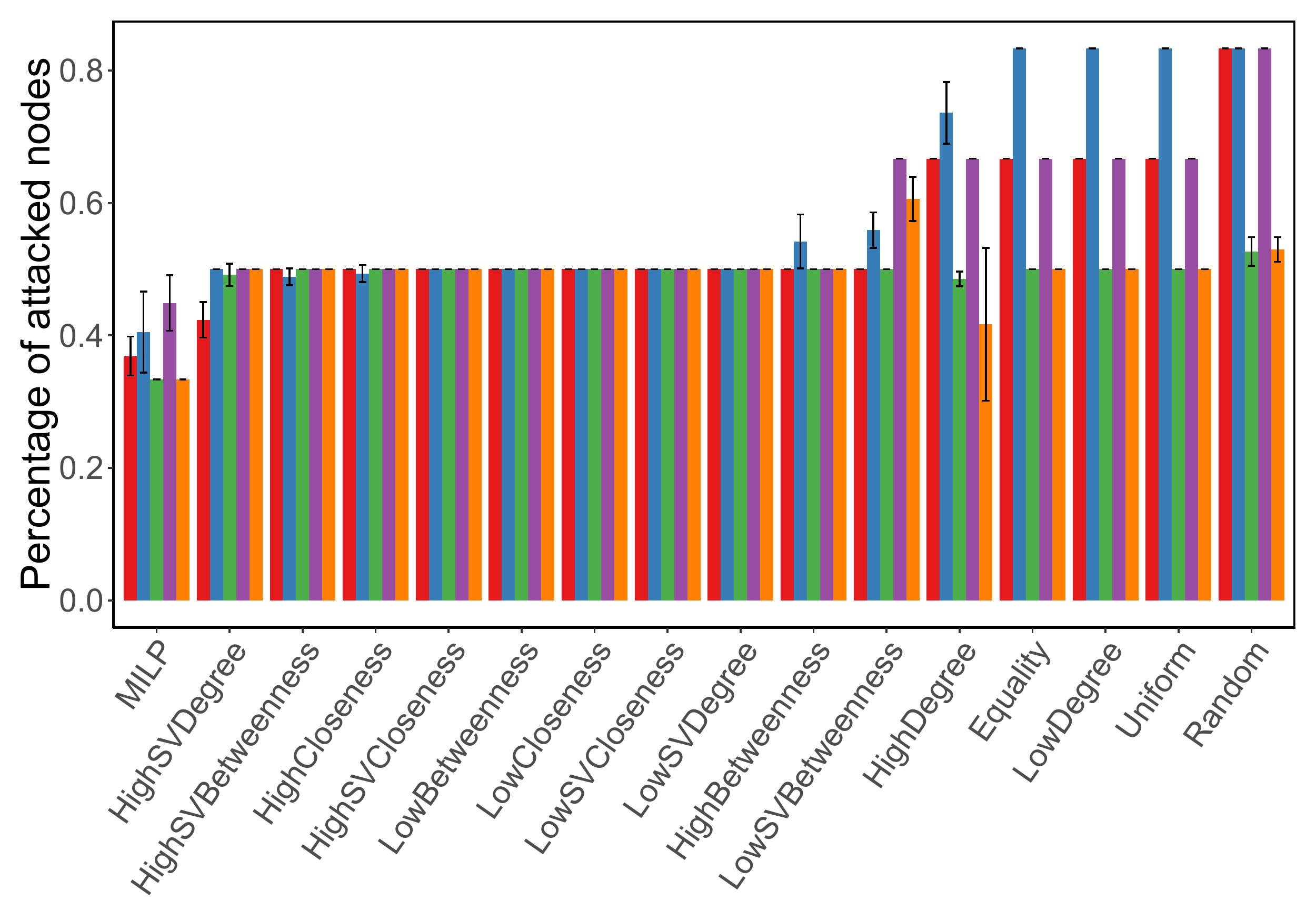} \\
\multicolumn{2}{c}{\includegraphics[width=.7\linewidth]{figures/plots/best-bars/random-legend}} \\
\end{tabular}
\caption{
Comparison of effectiveness of defender's strategies, under assumption that the attacker uses optimal strategy or the best heuristic considered by us.
Results are taken as an average over $100$ simulations, with a new network generated for each simulation using one of the models.
Defender's strategies are sorted from most to least effective on average.
Error bars represent $95\%$ confidence intervals.
}
\label{fig:best-bars-milp}
\end{figure}

\begin{figure}[t]
\centering
\setlength\tabcolsep{0pt}
\begin{tabular}{m{.32\textwidth}m{.32\textwidth}m{.32\textwidth}}
\multicolumn{1}{c}{Preferential attachment networks} &
\multicolumn{1}{c}{Scale free networks} &
\multicolumn{1}{c}{Random graph networks} \\
\includegraphics[width=\linewidth]{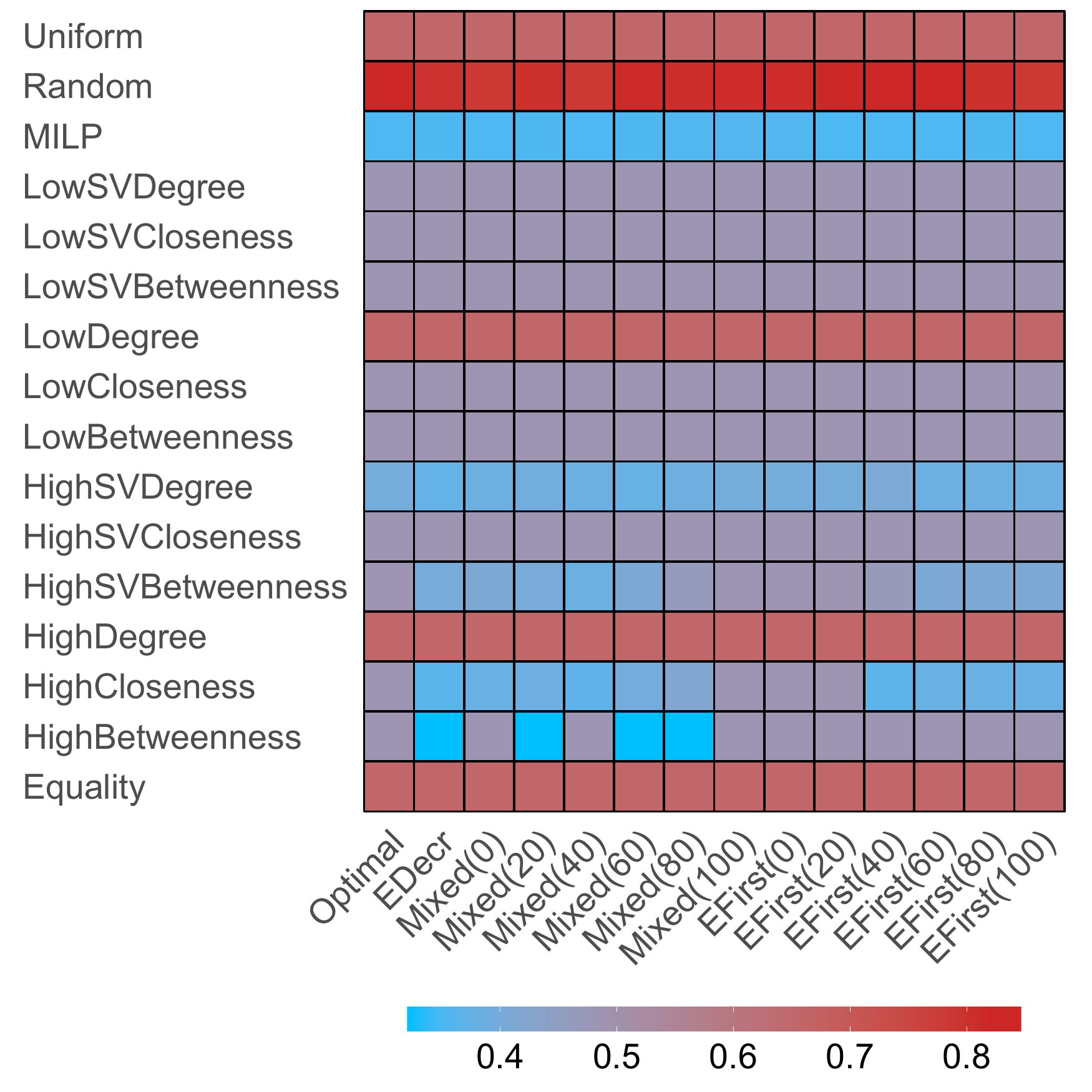} &
\includegraphics[width=\linewidth]{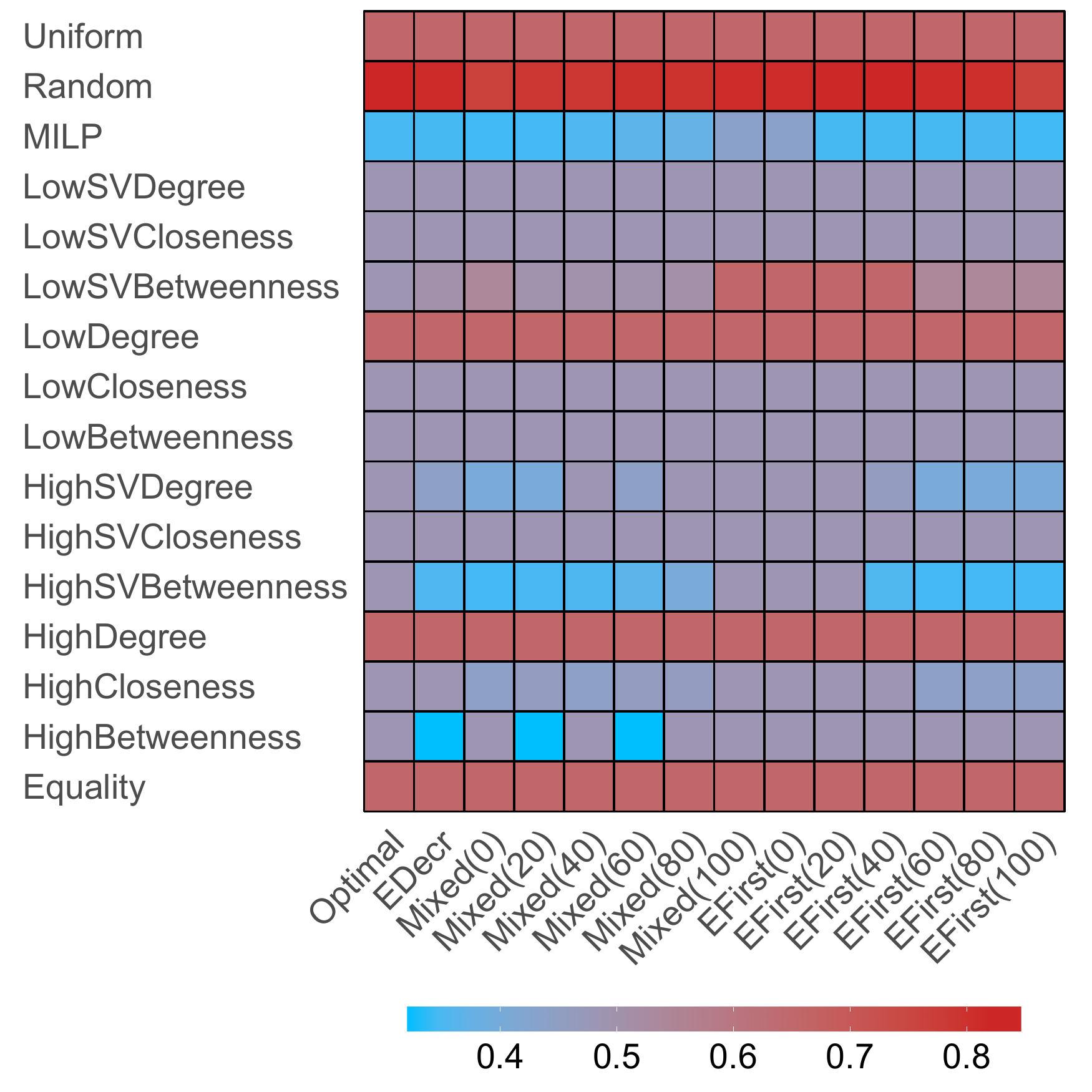} &
\includegraphics[width=\linewidth]{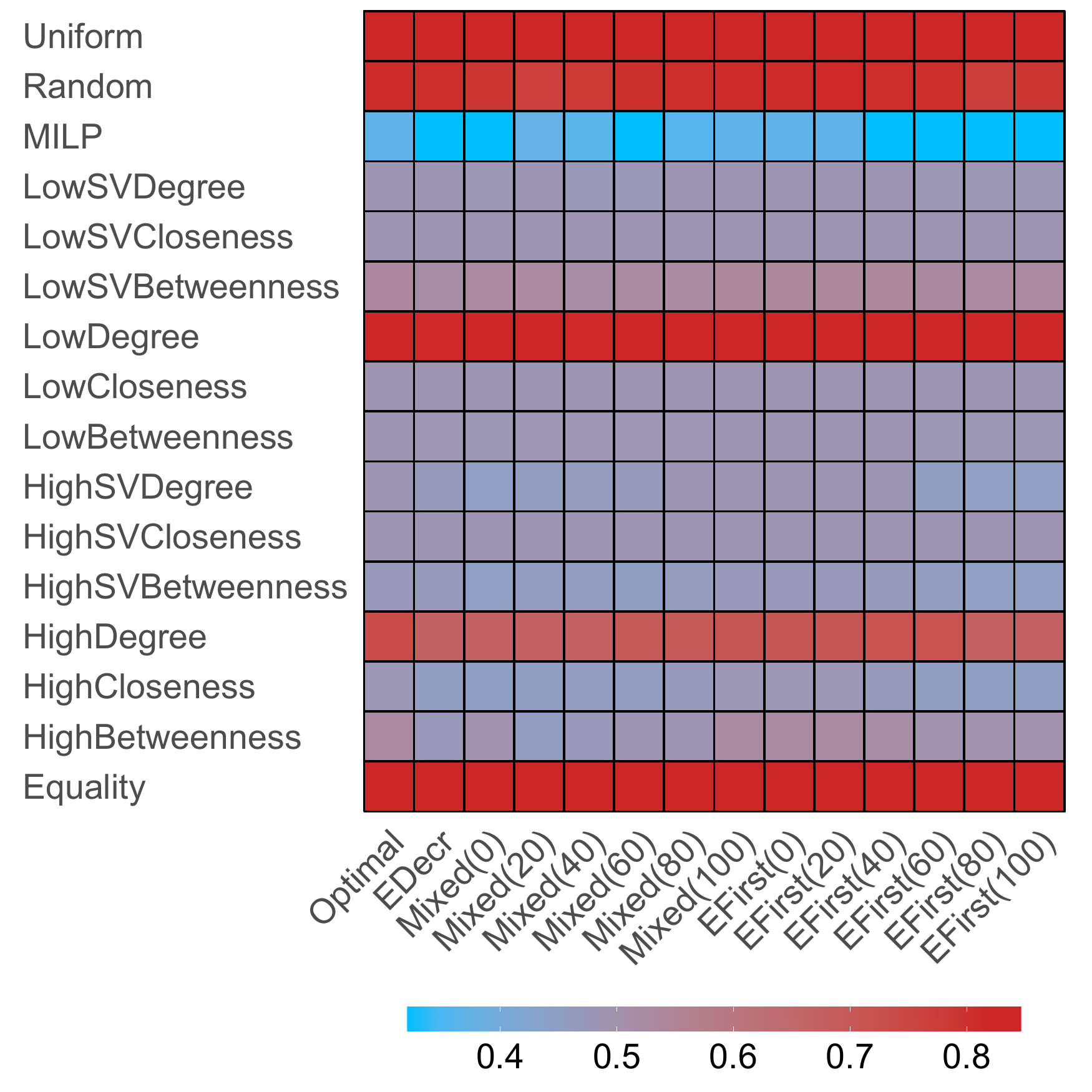} \\
\end{tabular}
\begin{tabular}{m{.32\textwidth}m{.32\textwidth}}
\multicolumn{1}{c}{Small world networks} &
\multicolumn{1}{c}{Random trees} \\
\includegraphics[width=\linewidth]{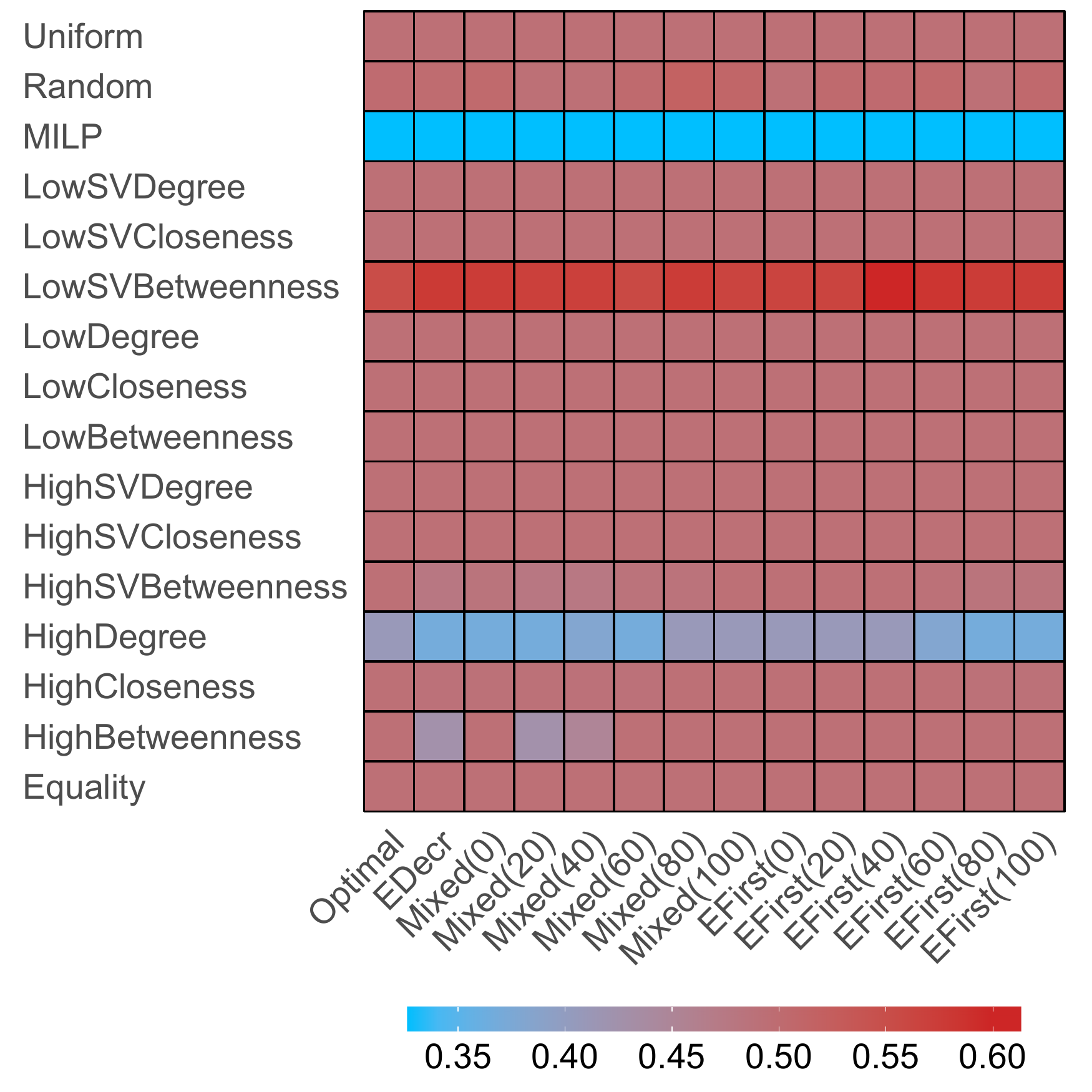} &
\includegraphics[width=\linewidth]{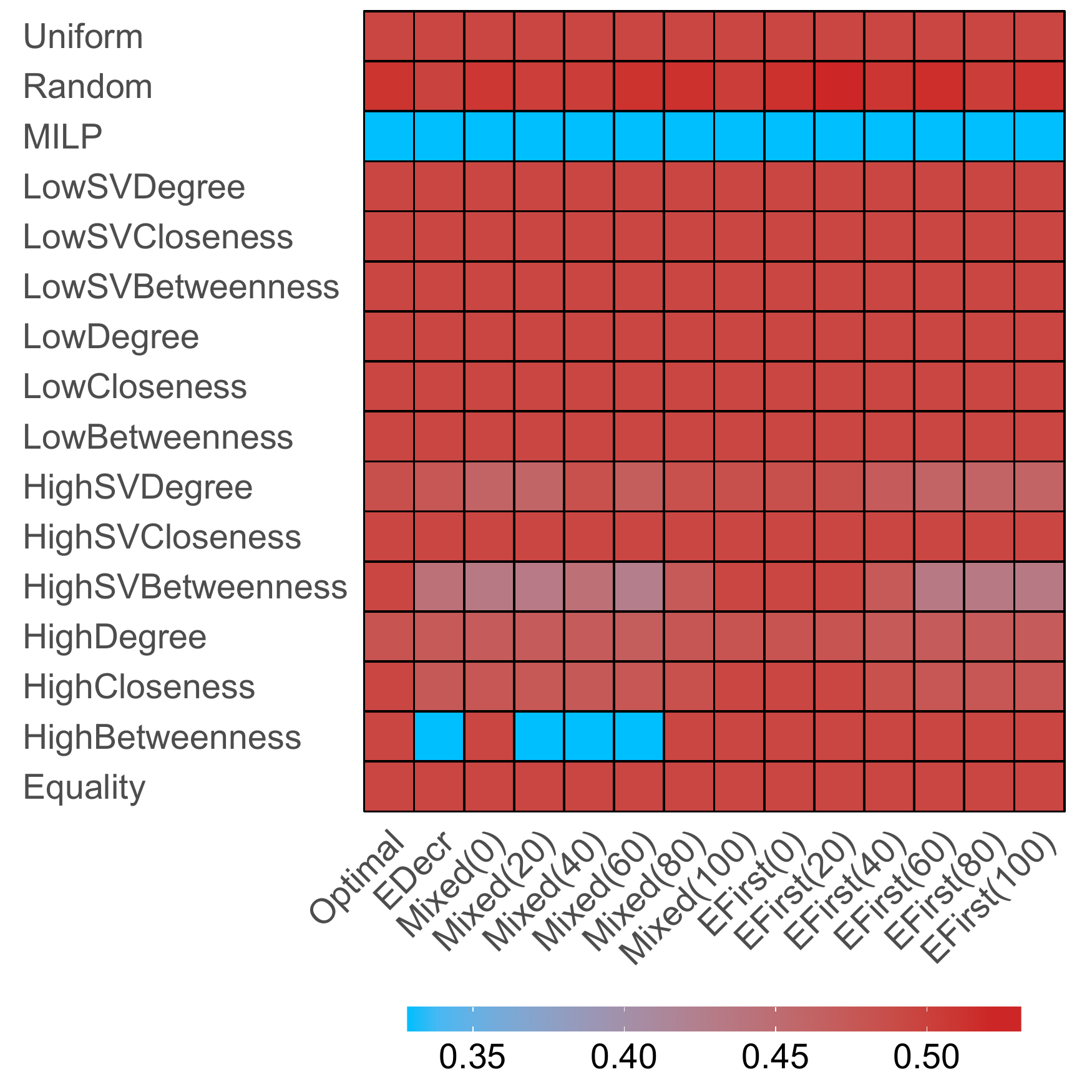} \\
\end{tabular}
\caption{
Comparison of effectiveness of defender's strategies for different attacker's strategies.
Color of each cell represents the expected percentage of nodes successfully activated by the attacker.
Results are taken as an average over $100$ simulations, with a new network generated for each simulation using one of the models.
}
\label{fig:heat-milp}
\end{figure}

\section{Experimental Results with Multiple Seeds}
\label{app:multiple-seeds}

The model of strategic diffusion~\cite{alshamsi2018optimal} allows only one seed node.
We now investigate the potential advantages of activating multiple seeds at the beginning of the process.
We assume the time of activating each subsequent seed to be computed using the same formula as for the first seed in the basic version of our model, \ie, $\et(v) = d(v) + \sr(v)$ for $v$ that is a seed node.
We choose the subsequent seeds using the same method as when choosing the first seed in the basic version of the model, \ie, selecting those that provide the highest number of activated nodes (breaking ties uniformly at random).
In the remainder of the process we follow the same procedure as in the basic version of the model, \ie, we choose the subsequent targets from the nodes with at least one active neighbour.

Figures~\ref{fig:heat-multi-80-1} and~\ref{fig:heat-multi-80-2} present the results of our experiments.
As it can be seen, forcing the attacker to pick multiple seeds increases the effectiveness of the attack in most cases, as the expected percentage of successfully attacked nodes in higher in columns with two or three seeds activated (colour of the hetmap cells is closer to red).
An interesting exception from this rule are the random graph networks generated using Erd{\H{o}}s-R{\'e}nyi model.
The uniform structure of these networks does not offer much advantage in terms of activating additional seeds.

It is also worth noting that the general trends in data are preserved when changing the number of seeds, \eg, strategies of the defender that are effective in experiments with a single seeds tend to also be effective in a setting with multiple seeds.
Finally, although the difference in results when using multiple seeds is notable, the magnitude of the change is not particularly substantial, \eg, the difference between the percentage of successfully attacked nodes with only one seed and with two seeds is on average lesser than $0.01$.

\begin{figure}[b]
\centering
\setlength\tabcolsep{0pt}
\begin{tabular}{m{.03\textwidth}m{.32\textwidth}m{.32\textwidth}m{.32\textwidth}}
&
\multicolumn{1}{c}{One seed} &
\multicolumn{1}{c}{Two seeds} &
\multicolumn{1}{c}{Three seeds} \\
\rotatebox{90}{Preferential attachment networks} &
\includegraphics[width=\linewidth]{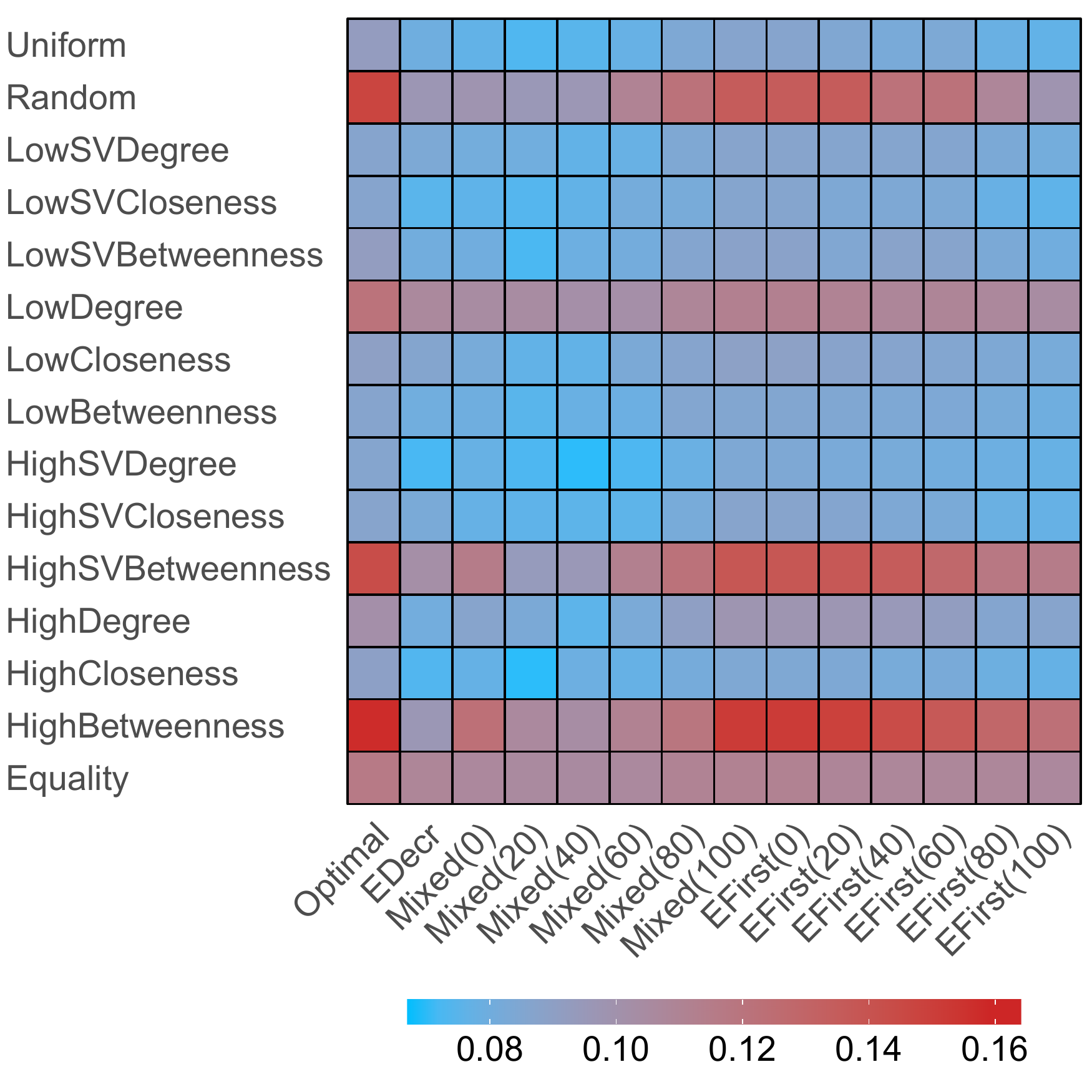} &
\includegraphics[width=\linewidth]{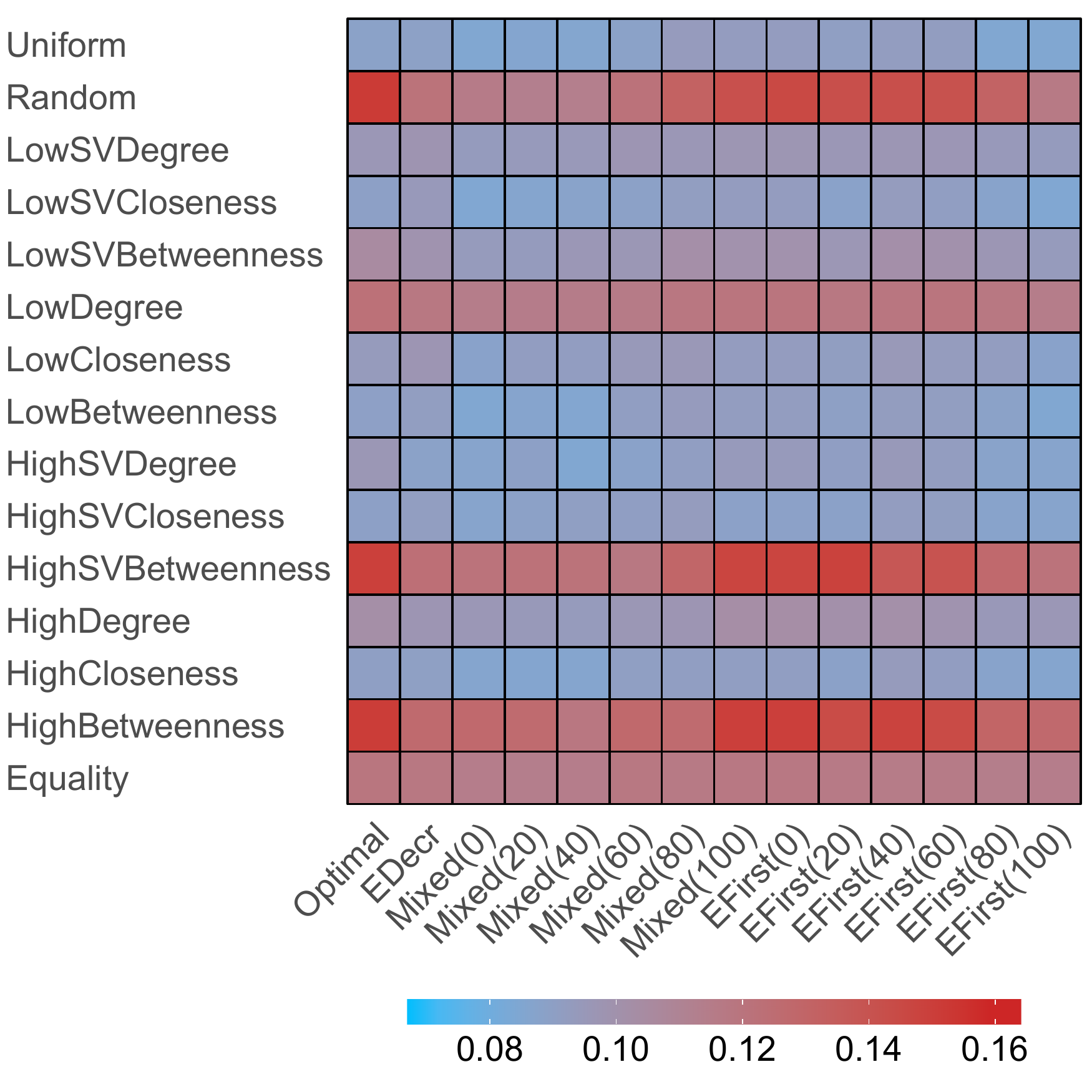} &
\includegraphics[width=\linewidth]{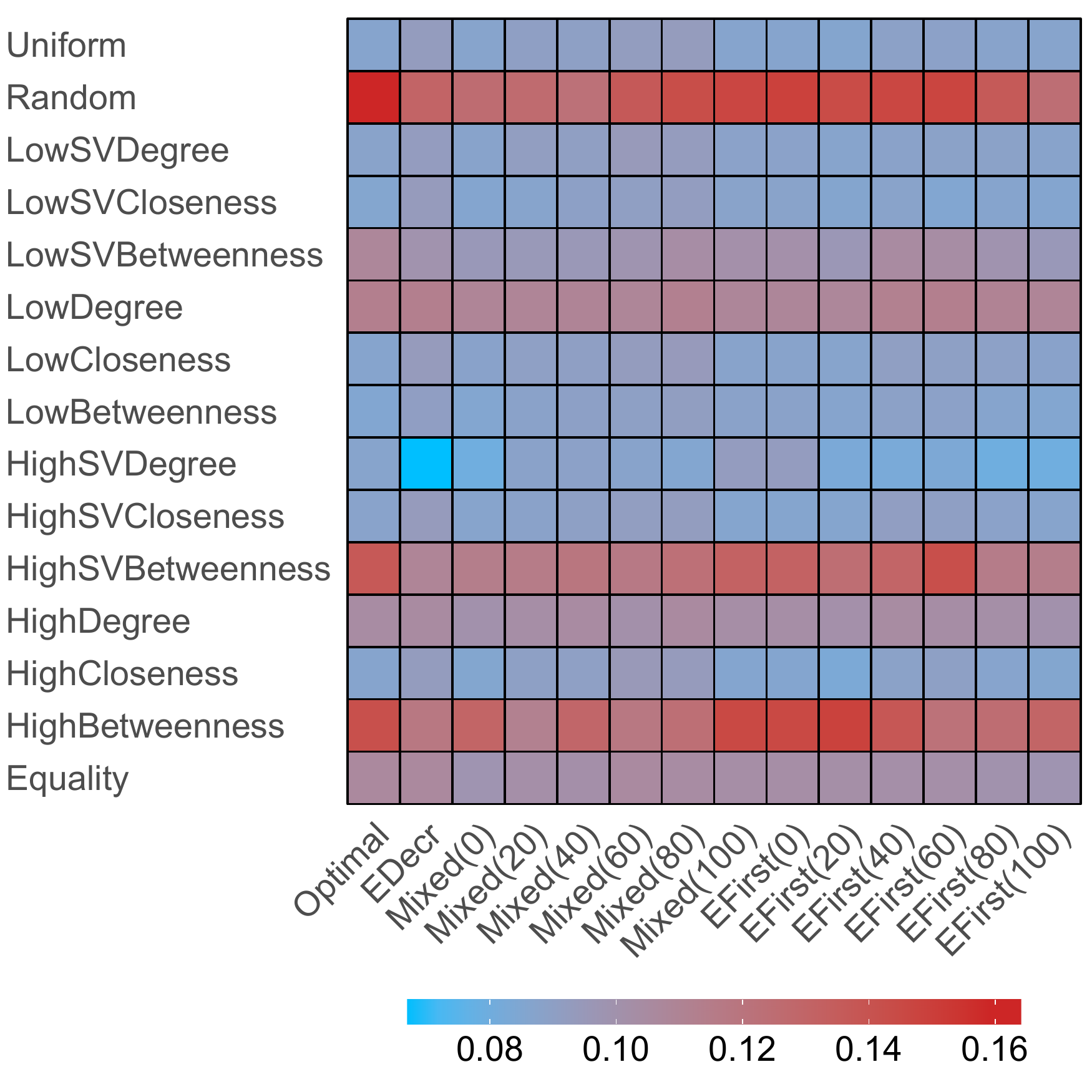} \\
\rotatebox{90}{Scale free networks} &
\includegraphics[width=\linewidth]{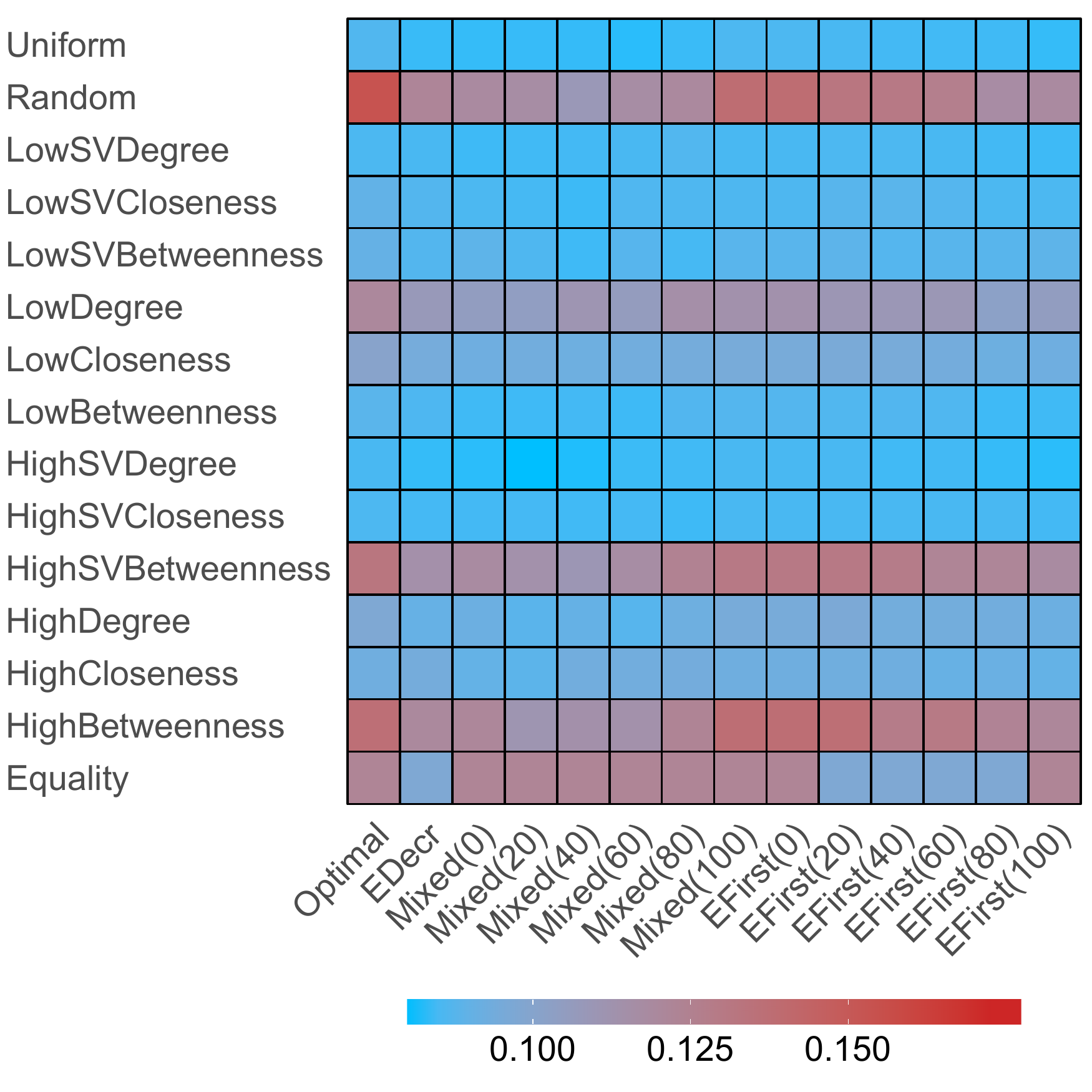} &
\includegraphics[width=\linewidth]{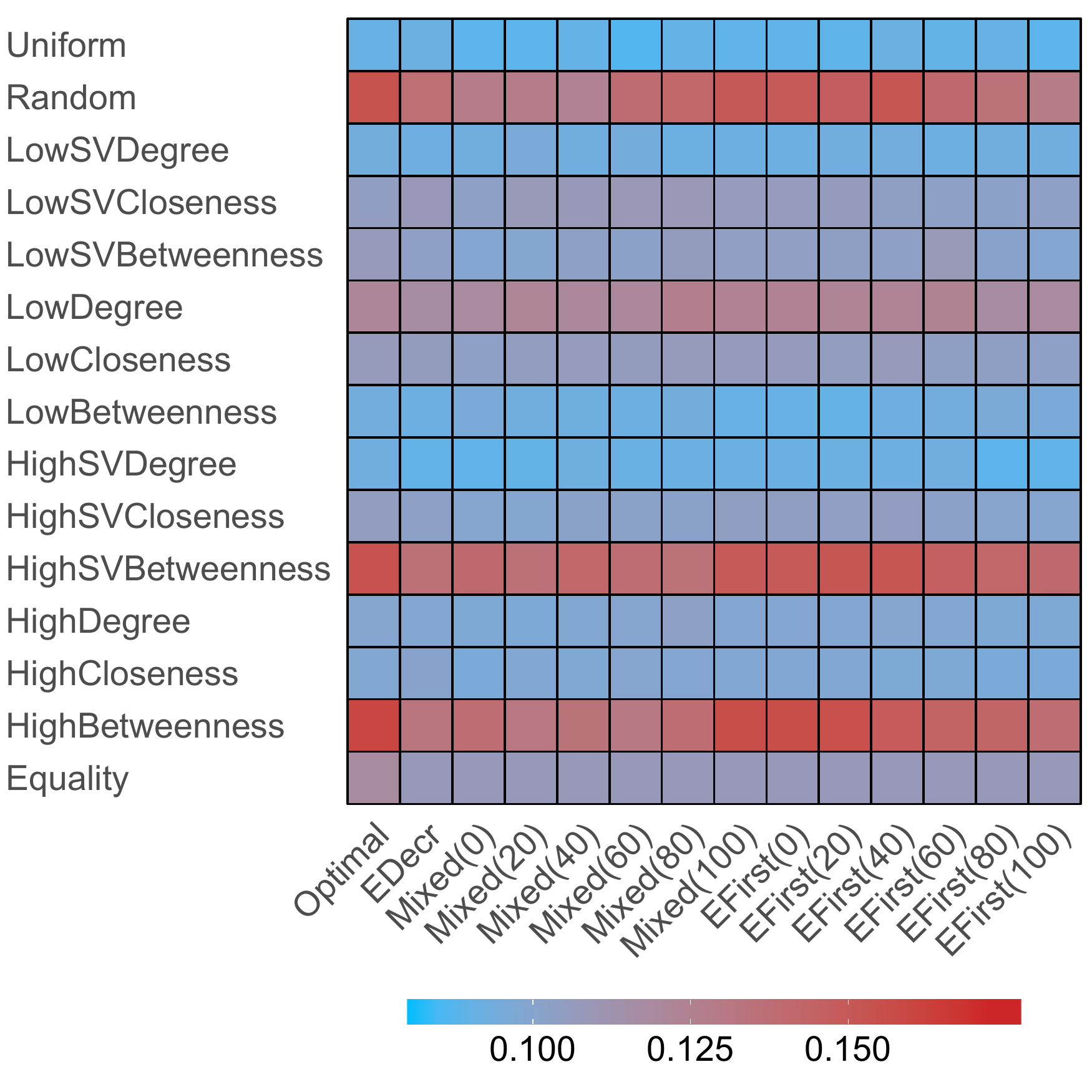} &
\includegraphics[width=\linewidth]{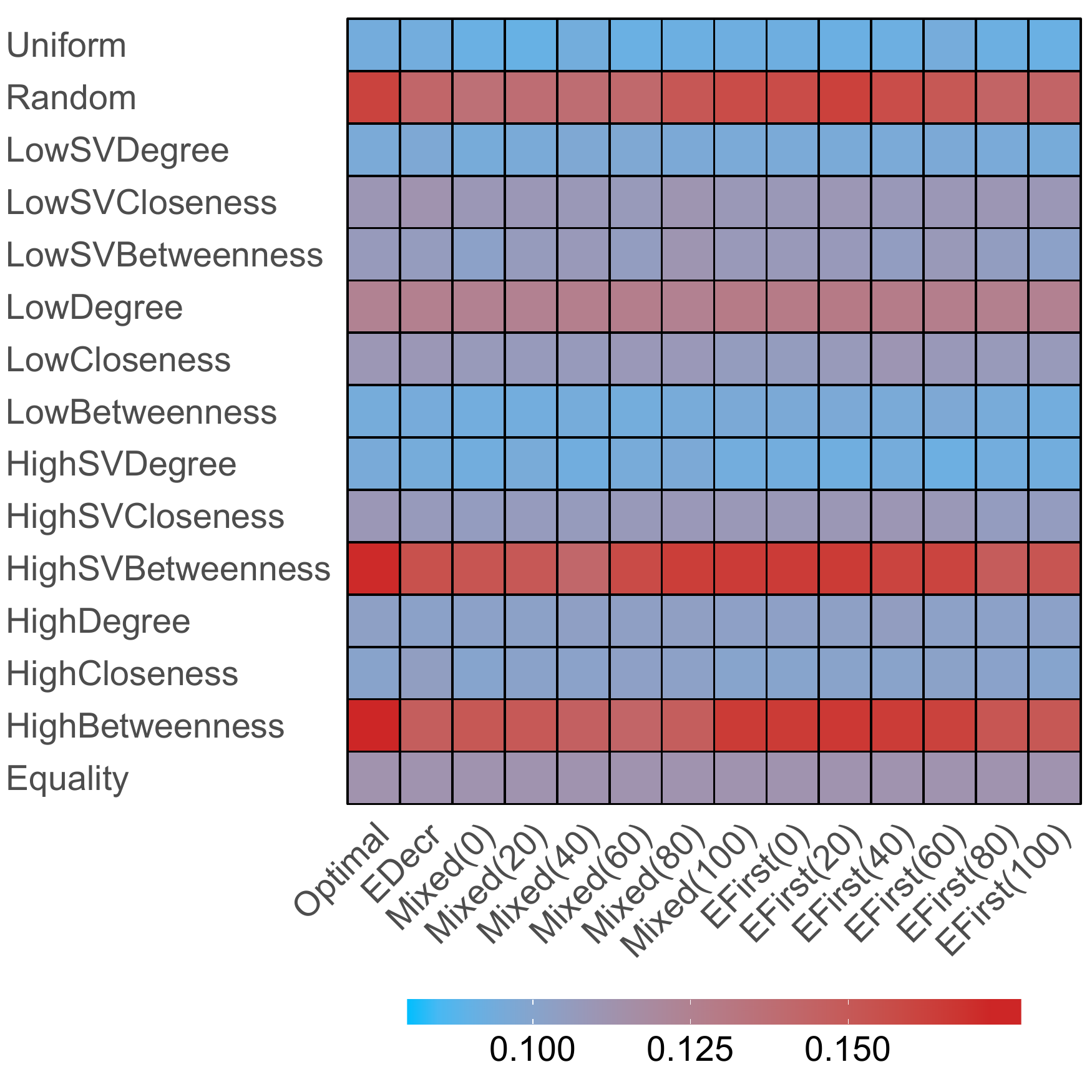} \\
\rotatebox{90}{Random graph networks} &
\includegraphics[width=\linewidth]{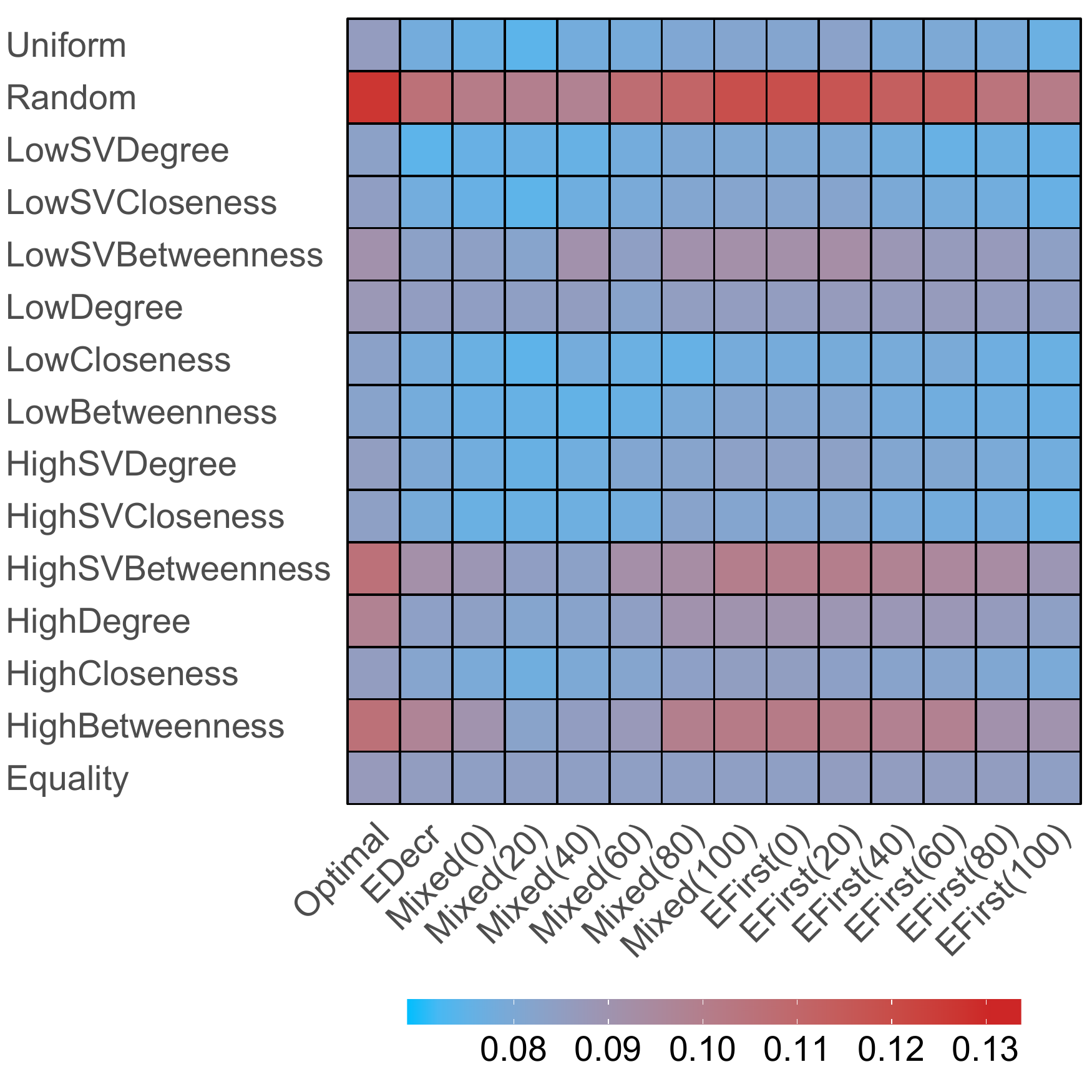} &
\includegraphics[width=\linewidth]{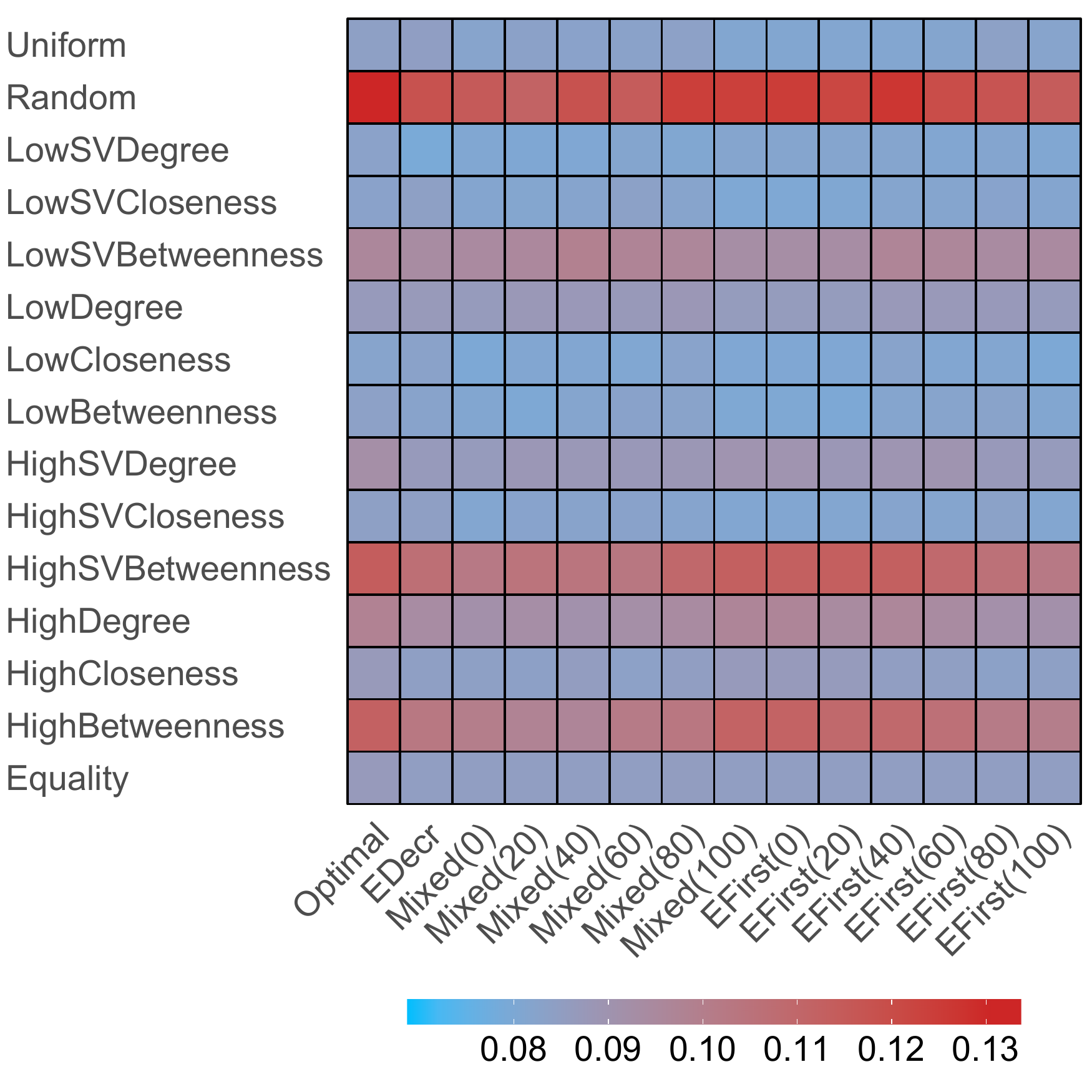} &
\includegraphics[width=\linewidth]{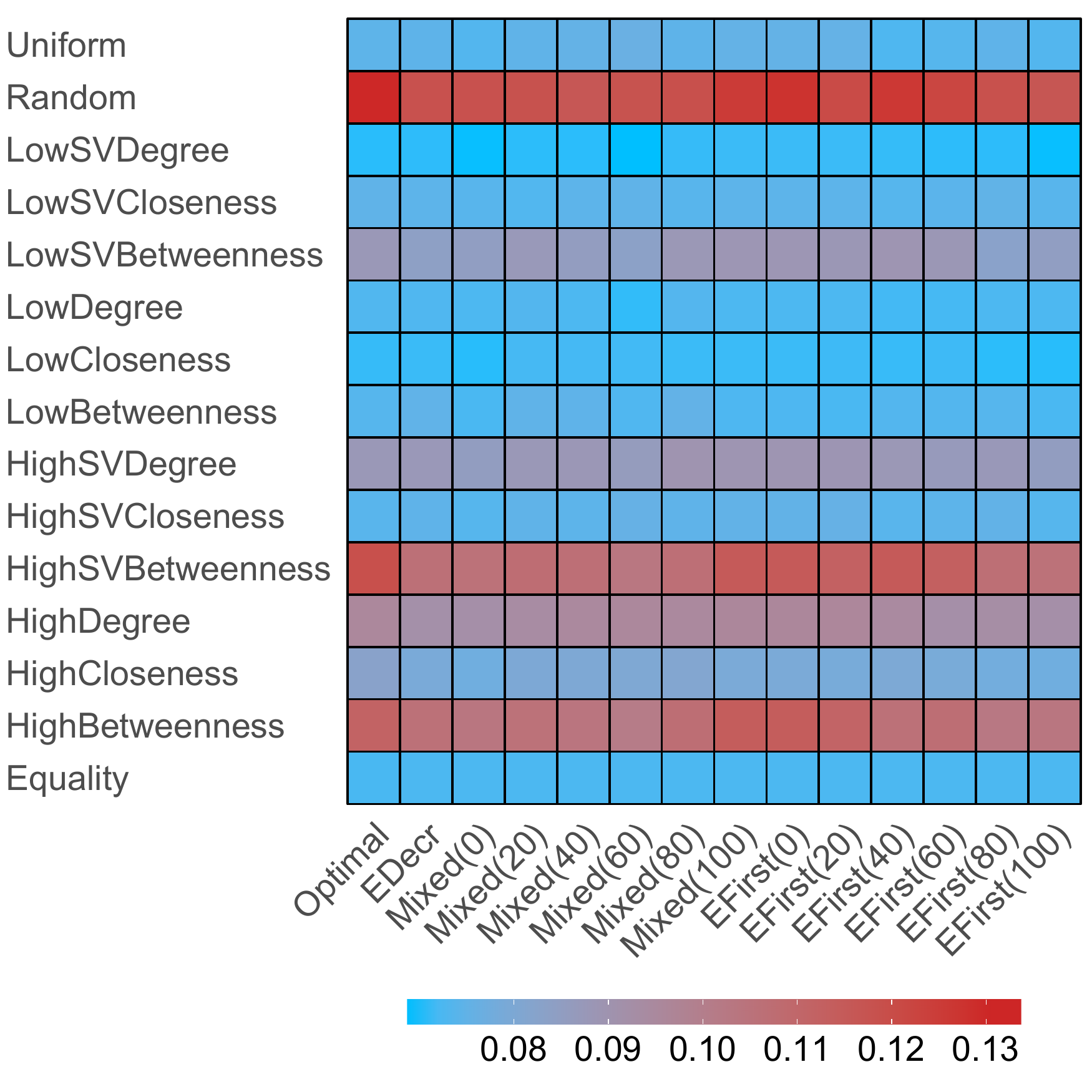} \\
\end{tabular}
\caption{
Comparison of effectiveness of defender's strategies for different attacker's strategies on networks with $80$ nodes, when attacker uses one, two, and three seeds.
Color of each cell represents the expected percentage of nodes successfully activated by the attacker.
Results are taken as an average over $100$ simulations, with a new network generated for each simulation using one of the models.
Scales are fixed for each network type to allow comparison between the number of seeds.
}
\label{fig:heat-multi-80-1}
\end{figure}

\begin{figure}[b]
\centering
\setlength\tabcolsep{0pt}
\begin{tabular}{m{.03\textwidth}m{.32\textwidth}m{.32\textwidth}m{.32\textwidth}}
&
\multicolumn{1}{c}{One seed} &
\multicolumn{1}{c}{Two seeds} &
\multicolumn{1}{c}{Three seeds} \\
\rotatebox{90}{Small world networks} &
\includegraphics[width=\linewidth]{figures/plots/multiHeat/ba-80-3-1-heatMulti} &
\includegraphics[width=\linewidth]{figures/plots/multiHeat/ba-80-3-2-heatMulti} &
\includegraphics[width=\linewidth]{figures/plots/multiHeat/ba-80-3-3-heatMulti} \\
\rotatebox{90}{Random trees} &
\includegraphics[width=\linewidth]{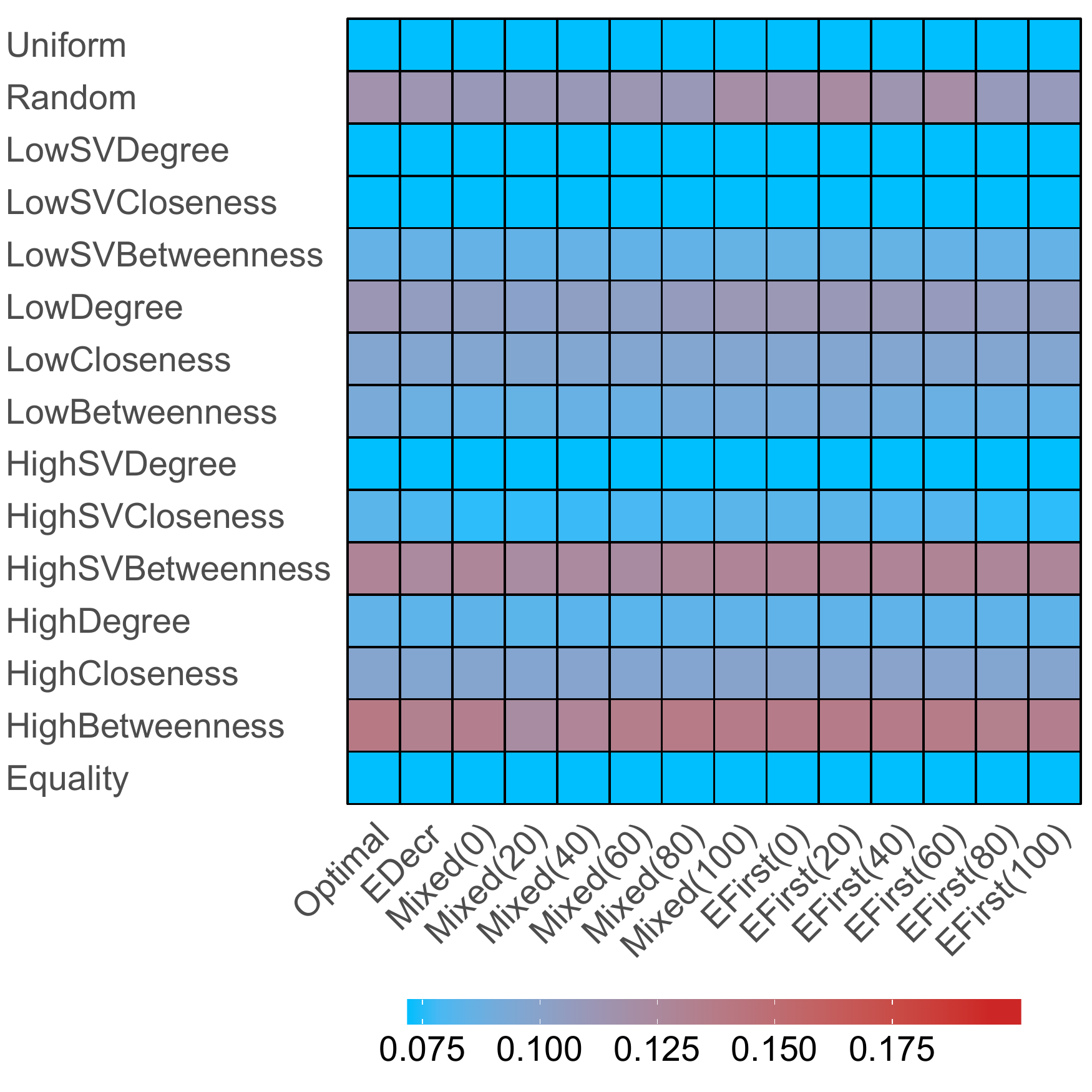} &
\includegraphics[width=\linewidth]{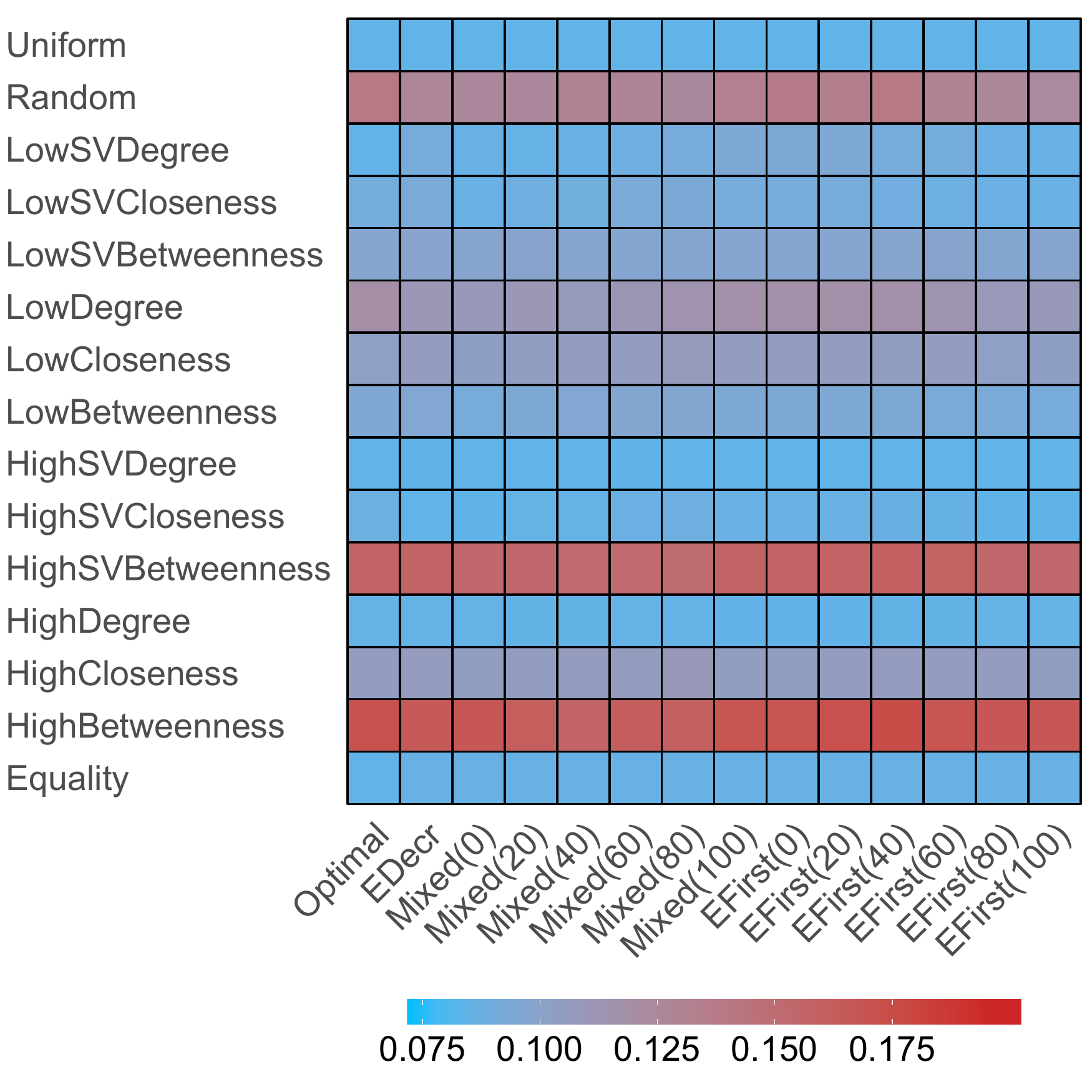} &
\includegraphics[width=\linewidth]{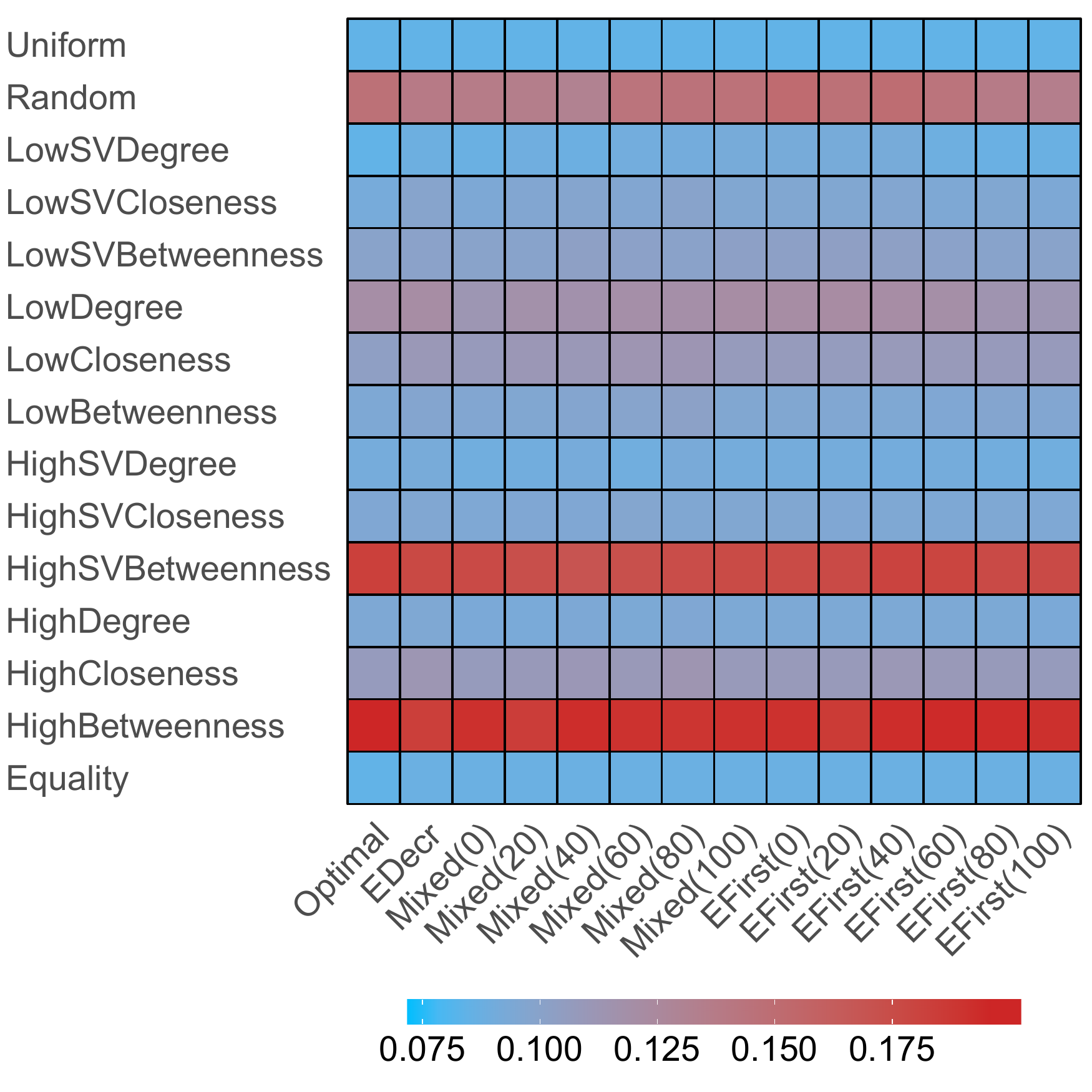} \\
\end{tabular}
\caption{
Comparison of effectiveness of defender's strategies for different attacker's strategies on networks with $80$ nodes, when attacker uses one, two, and three seeds.
Color of each cell represents the expected percentage of nodes successfully activated by the attacker.
Results are taken as an average over $100$ simulations, with a new network generated for each simulation using one of the models.
Scales are fixed for each network type to allow comparison between the number of seeds.
}
\label{fig:heat-multi-80-2}
\end{figure}

\section{Repeated Game Setting}
\label{app:repeated-setting}

We now consider a process in which the security diffusion game is played multiple times between the same two players.
We investigate whether the knowledge about the attacker's activities can help the defender achieve better performance.

We consider a process consisting of rounds.
In each round, the defender and the attacker play a single instance of the security diffusion game on the same network.
We assume that the attacker always utilizes the Optimal strategy, while the defender updates her distribution of security resources according to results of the previous round.
The total amount of security resources remains the same.

In what follows let $\sr_j(v_i)$ denote the amount of security resources assigned to node $v_i$ in round $j$, and let $I_j$ denote the set of nodes activated in round $j$.

The value of $\sr_1$ is given by the strategy used by the defender in the first round, which is one of the strategies described for the single game setting.
For every subsequent round $j > 1$ the defender updates the security resources assignment as follows:
$$
\sr_j(v_i) = \frac{\omega_j(v_i)}{\sum_{i=1}^{n} \omega_j(v_i)}\SR
$$
\noindent where:
$$
\omega_j(v_i) =
\begin{cases}
r_{i,j} \sr_{j-1}(v_i) & \text{ if } v_i \in I_{j-1} \\
\sr_{j-1}(v_i)  & \text{otherwise} \\
\end{cases}
$$
with $r_{i,j}$ being drawn uniformly at random from interval $[1,2]$.
In other words, the defender increases the amount of security resources assigned to nodes that were activated in the previous round.

Figure~\ref{fig:repeated} presents results for our repeated game setting.
Each data point is taken as an average over $100$ simulations.
Colored areas represent $95\%$ confidence interval.

As it can be seen, updating the strategy that initially is not particularly successful leads to reducing the number of nodes captured by the attacker.
However, in case of the defender strategies that are the most effective in the first round, moving the security resources to the nodes that got successfully attacked not only does not help, but it actually makes the network more vulnerable, \ie, in the following rounds the average number of captured nodes increases.
By moving the security resources to nodes that were attacked in the previous round, the defender makes other nodes more vulnerable, which is then exploited by the attacker.

Interestingly, in most cases almost all initial strategies converge to similar efficiency.
It suggests that after a high enough number of rounds an equilibrium state is reached.

\begin{figure*}[tbh]
\centering
\setlength\tabcolsep{0pt}
\begin{tabular}{m{.32\textwidth}m{.32\textwidth}m{.32\textwidth}}
\multicolumn{1}{c}{Preferential attachment networks} &
\multicolumn{1}{c}{Scale free networks} &
\multicolumn{1}{c}{Random graph networks} \\
\includegraphics[width=\linewidth]{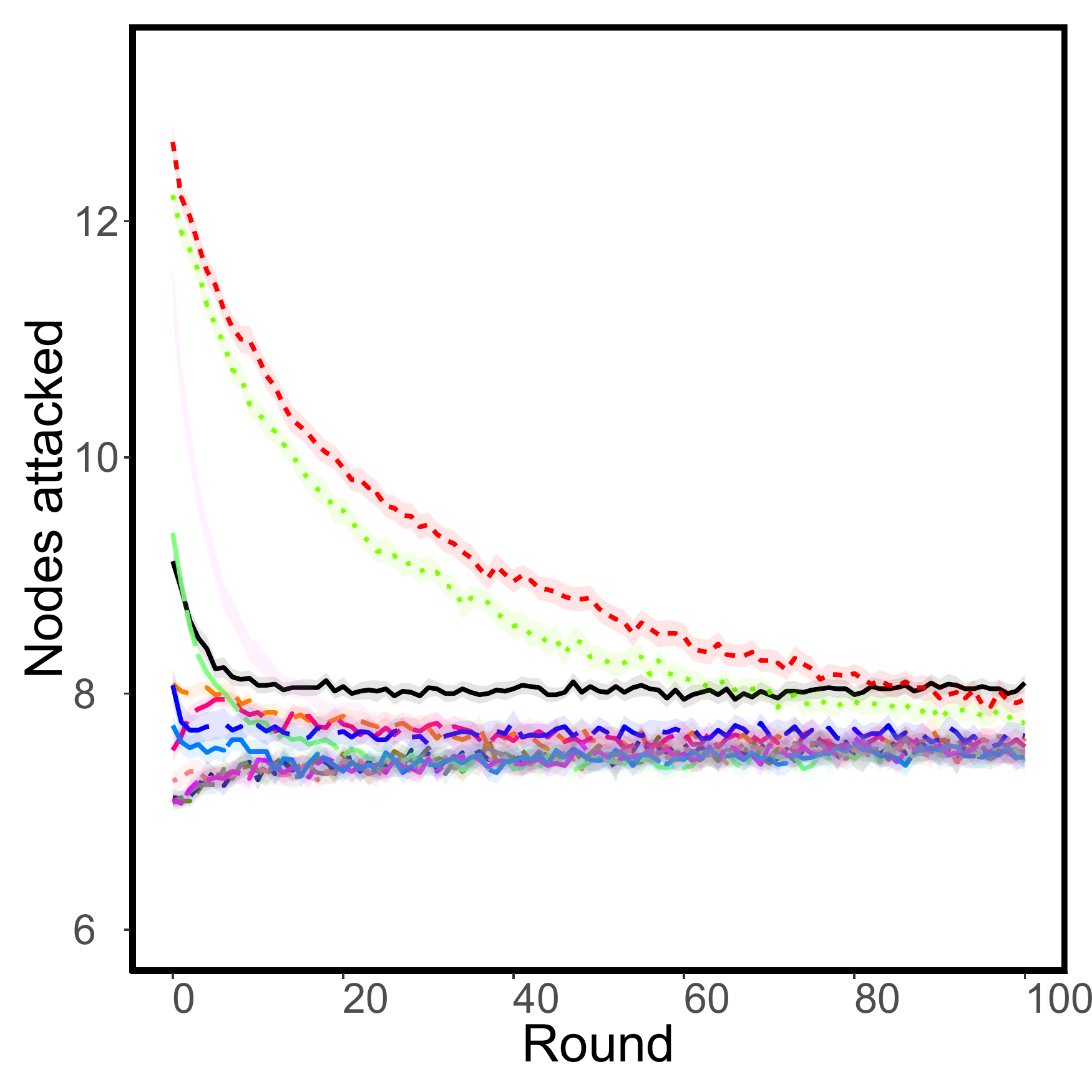} &
\includegraphics[width=\linewidth]{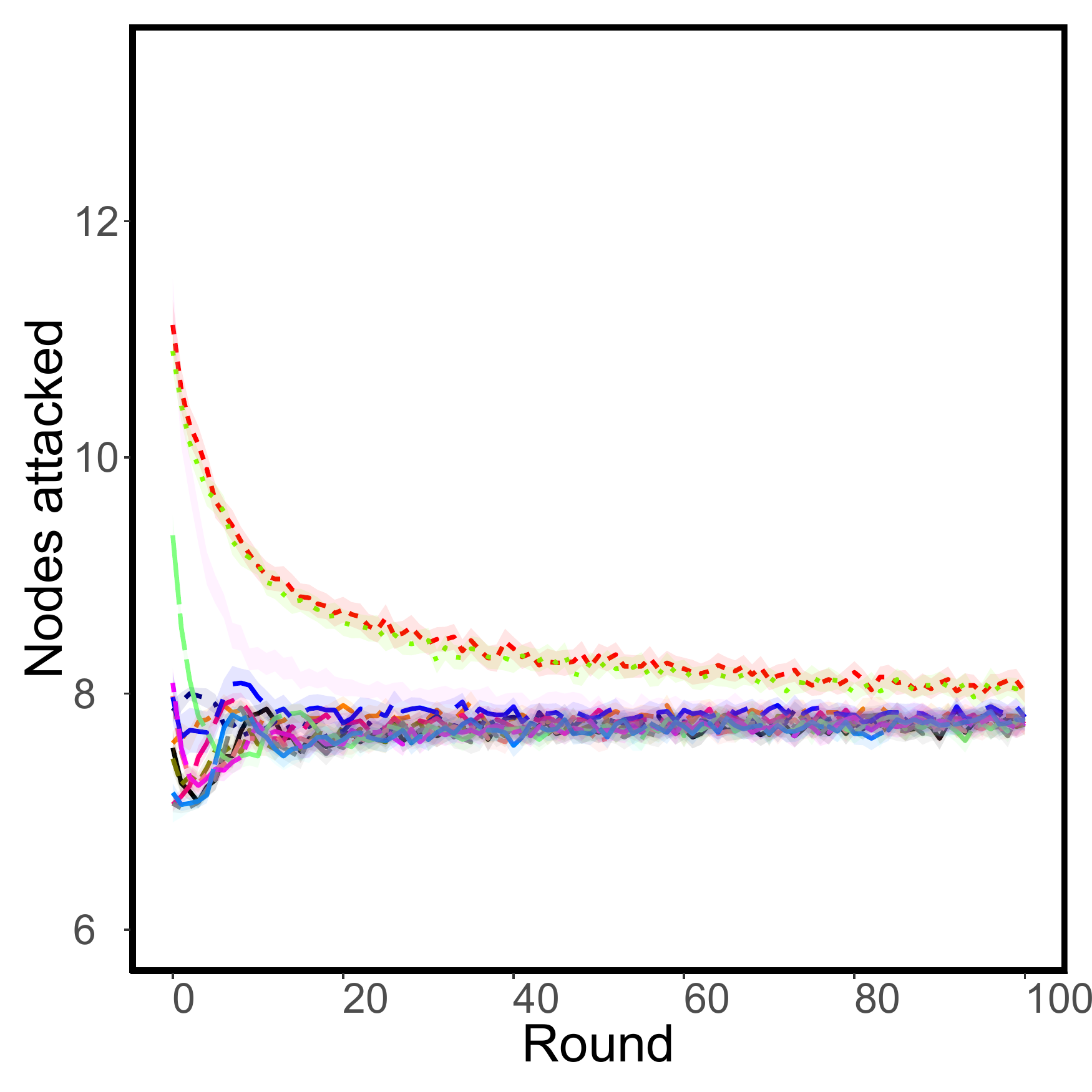} &
\includegraphics[width=\linewidth]{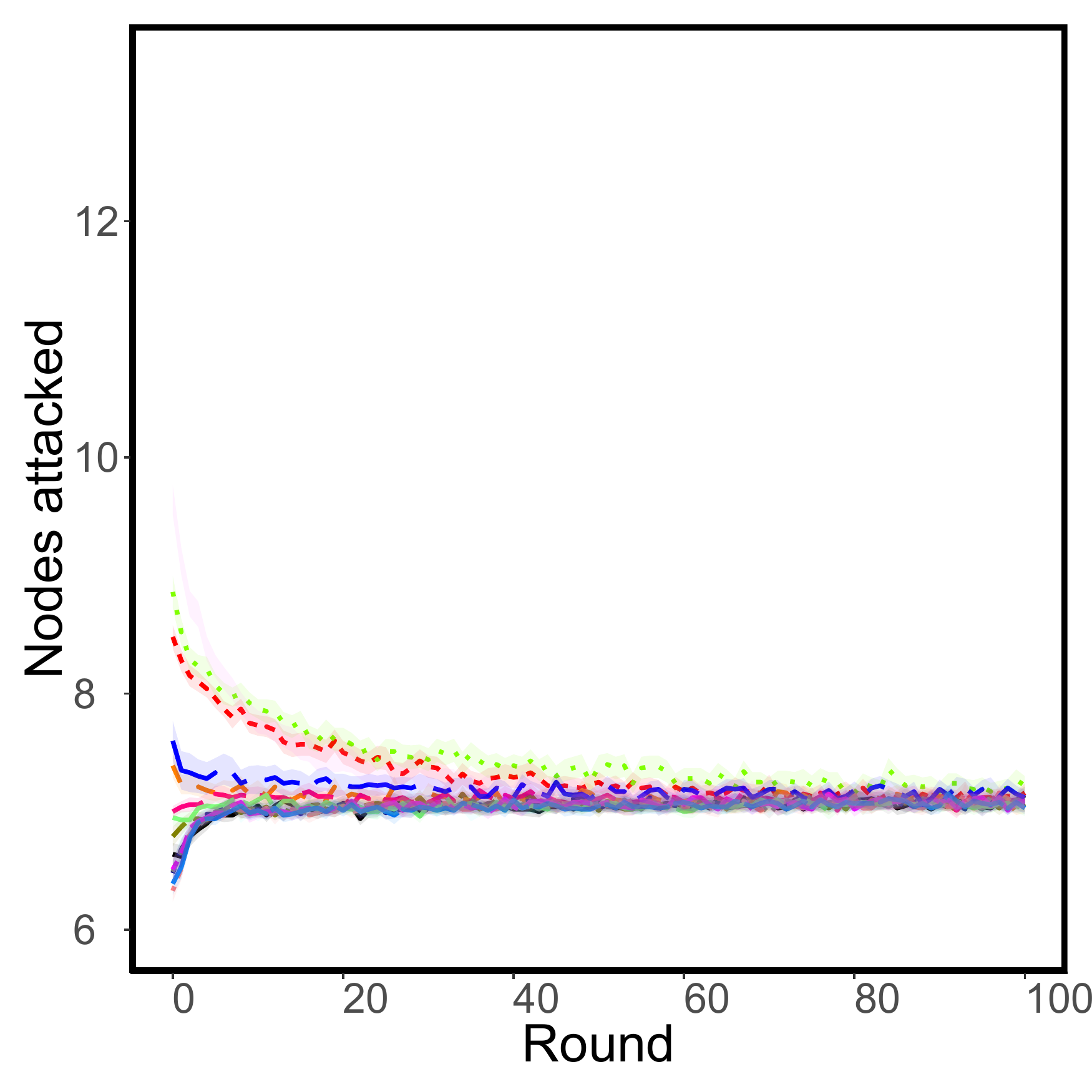} \\
\end{tabular}
\begin{tabular}{m{.32\textwidth}m{.32\textwidth}}
\multicolumn{1}{c}{Small world networks} &
\multicolumn{1}{c}{Random trees} \\
\includegraphics[width=\linewidth]{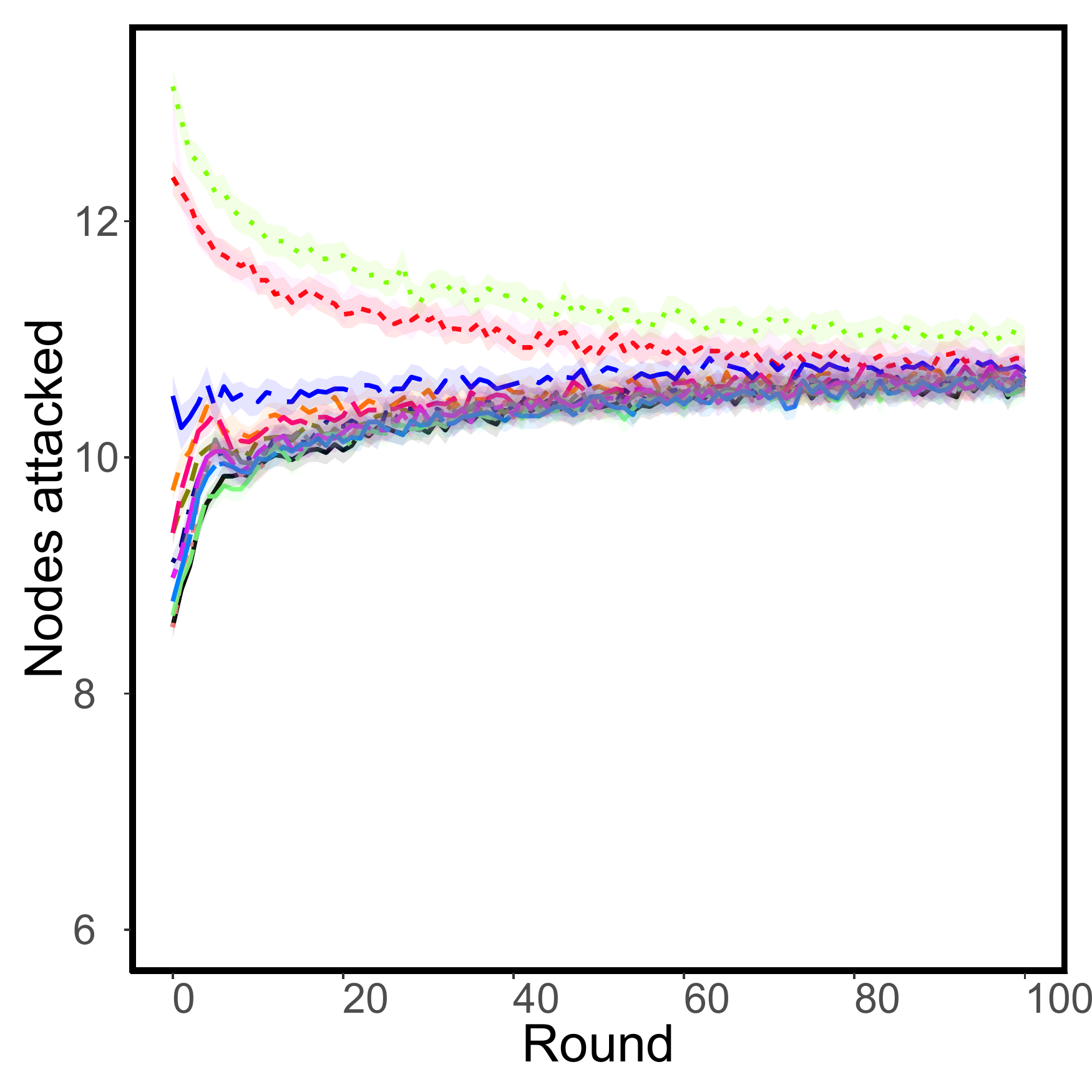} &
\includegraphics[width=\linewidth]{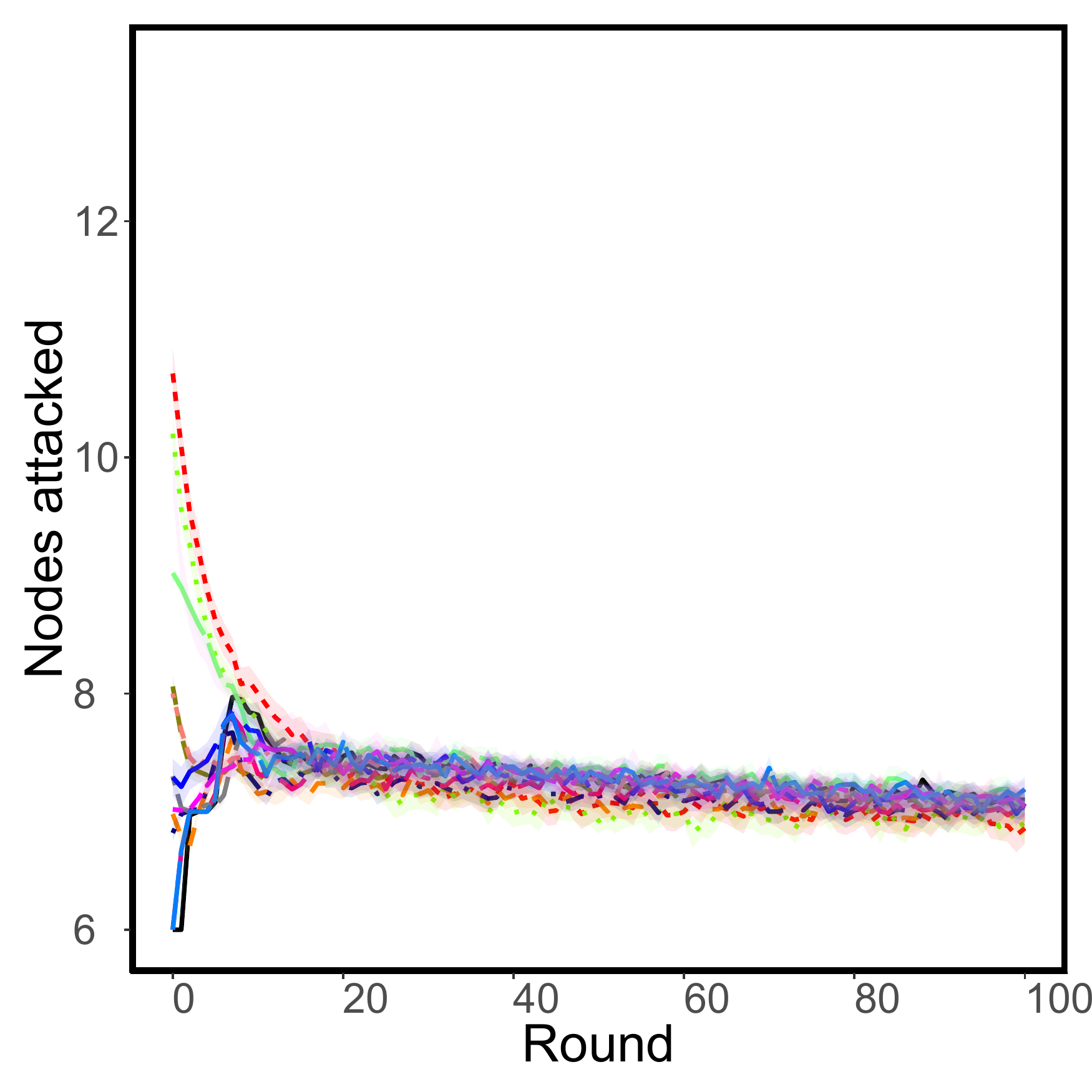} \\
\end{tabular}
\begin{tabular}{m{\textwidth}}
\includegraphics[width=\linewidth]{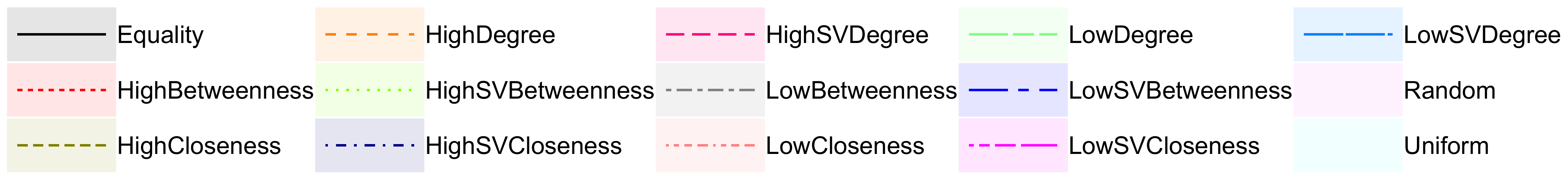} \\
\end{tabular}
\caption{
Results of our experiments for repeated game setting.
The x-axis represents the round, \ie, the  consecutive instance of the game.
The y-axis represents the number of activated nodes after the process finished.
Every line represents different initial strategy of the defender.
The scale is fixed to make plots easier to compare.
}
\label{fig:repeated}
\end{figure*}

\section{Experimental Results with Alternative Diffusion Models}
\label{app:alternative-models}

We now consider alternative settings of our experiments, where we modify the formula of the activation time of a given node.
In particular, following the strategic diffusion model ~\cite{alshamsi2018optimal}, we introduce the parameter $\alpha \in \R^+$ that allows to adjust the strength of path dependencies.
In a setting with a given value of $\alpha$ we have that:

\begin{itemize}
\item The first node in the sequence chosen by the attacker (the seed node) is activated in time $\et(v) = \left(d(v) + \sr(v)\right)^\alpha$.
\item The activation time for all other nodes is $\et(v) = \left(\frac{d(v) + \sr(v)}{|N(v) \cap I|}\right)^\alpha$ where $I$ is the set of currently activated nodes.
\end{itemize}

Notice that for $\alpha=1$ this is equivalent to the basic model described in Section~\ref{sec:game-definition} of the main article.

The results of our experiments with $\alpha=0.8$ and with $\alpha=1.25$ are presented in Figures~\ref{fig:heat-80-alt-models-0.8}, \ref{fig:heat-80-alt-models-1.25}, and \ref{fig:best-bars-alt-models}.

As for the relative effectiveness of the different strategies of the attacker, the general trends are similar to those in experiments with the basic model with $\alpha = 1$.
The parameters of the attacker's strategies that proved to be successful in the basic version of the experiments are usually also effective when the function determining the time of activation changes.

As for the relative effectiveness of the defender's strategies, results for the setting with $\alpha=0.8$ show similar tendencies to those for $\alpha=1$.
However, increasing the value of $\alpha$ to $1.25$ brings some interesting insights.
The Equality strategy, in other cases not particularly successful, suddenly becomes one of the best ways of defending the network.
In situation where capturing the nodes becomes significantly harder (which is the case for $\alpha>1$), withholding the initial attack also becomes more crucial (which is the idea behind the Equality strategy).

As for the comparison between different network models, while results for $\alpha=0.8$ present similar trends to the results for the basic setting, again an interesting change can be noticed when $\alpha=1.25$.
While in other cases the small-world networks were the easiest to attack, they are more resilient when $\alpha > 1$.
Increasing the value of $\alpha$ puts more pressure on activating nodes with many active neighbors and reduces the advantage of being able to quickly reach all parts of the network.

\begin{figure}[t]
\centering
\setlength\tabcolsep{0pt}
\begin{tabular}{m{.32\textwidth}m{.32\textwidth}m{.32\textwidth}}
\multicolumn{1}{c}{Preferential attachment networks} &
\multicolumn{1}{c}{Scale free networks} &
\multicolumn{1}{c}{Random graph networks} \\
\includegraphics[width=\linewidth]{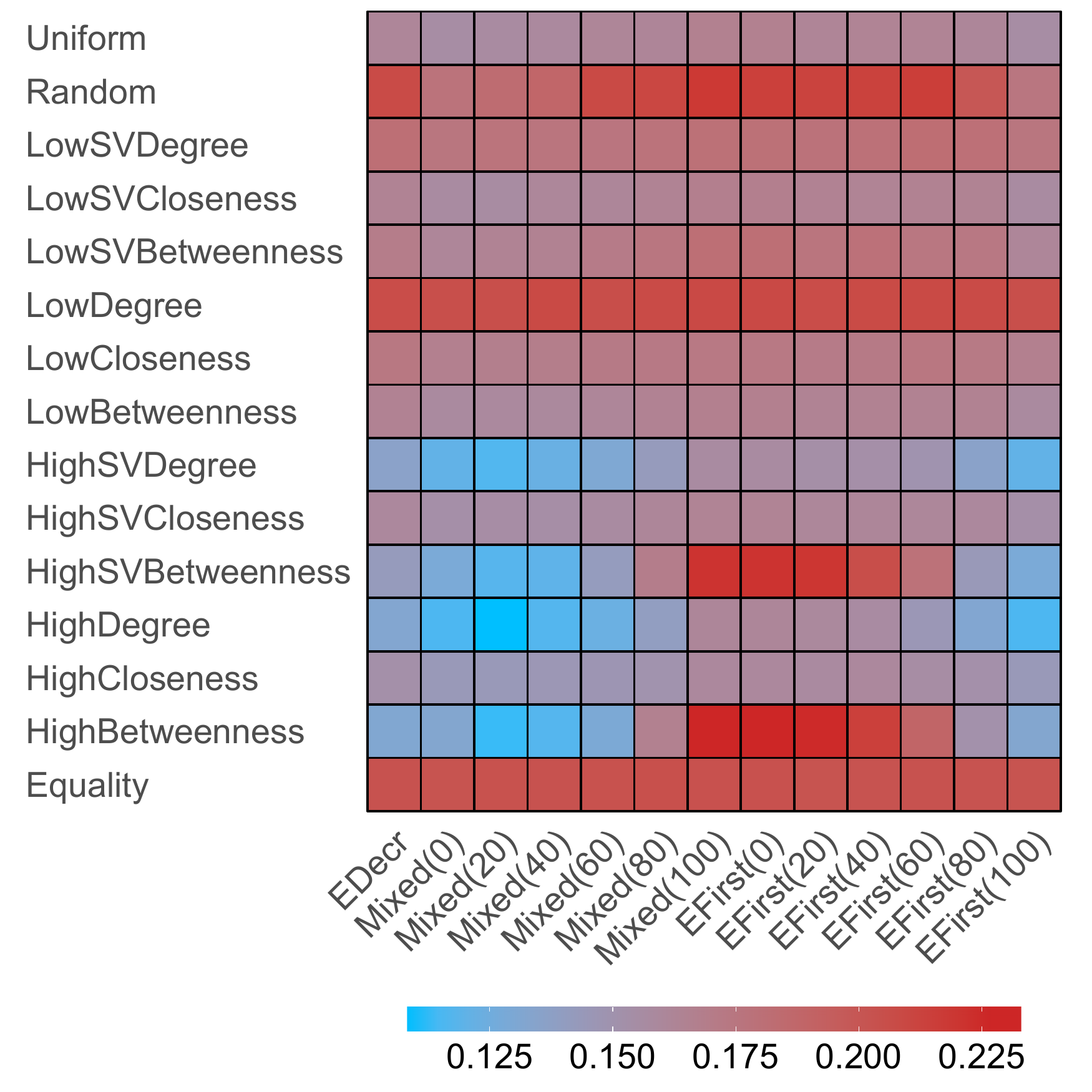} &
\includegraphics[width=\linewidth]{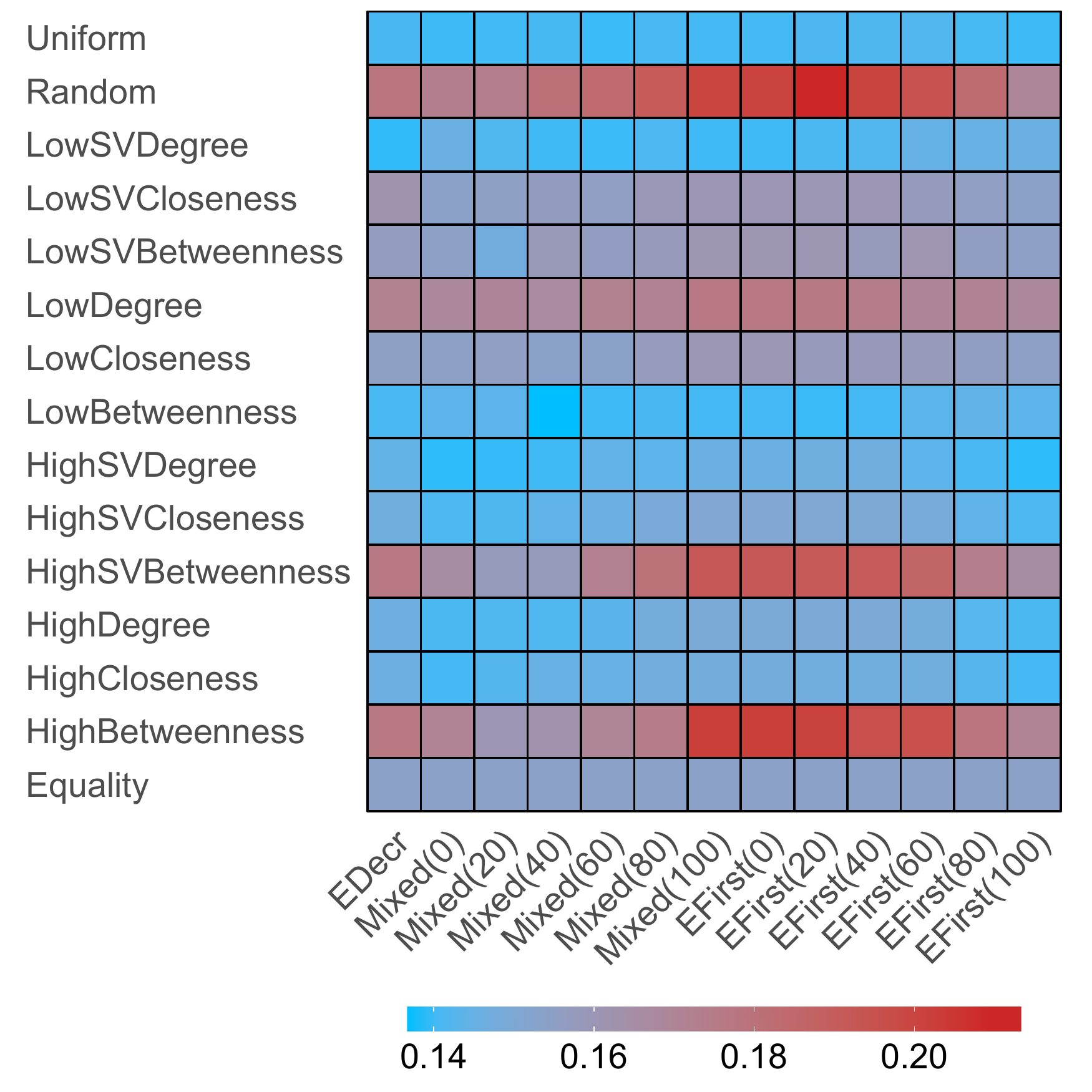} &
\includegraphics[width=\linewidth]{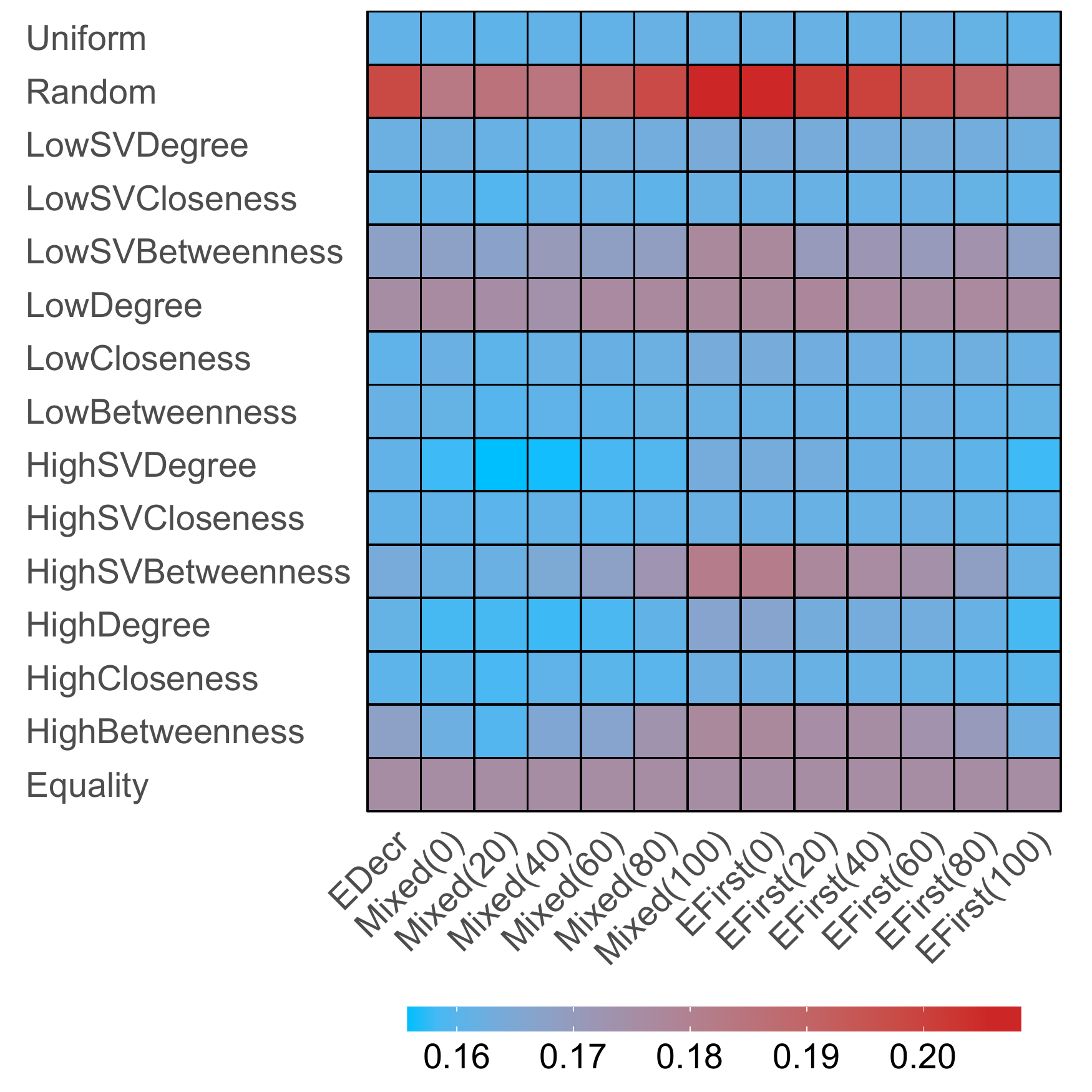} \\
\end{tabular}
\begin{tabular}{m{.32\textwidth}m{.32\textwidth}}
\multicolumn{1}{c}{Small world networks} &
\multicolumn{1}{c}{Random trees} \\
\includegraphics[width=\linewidth]{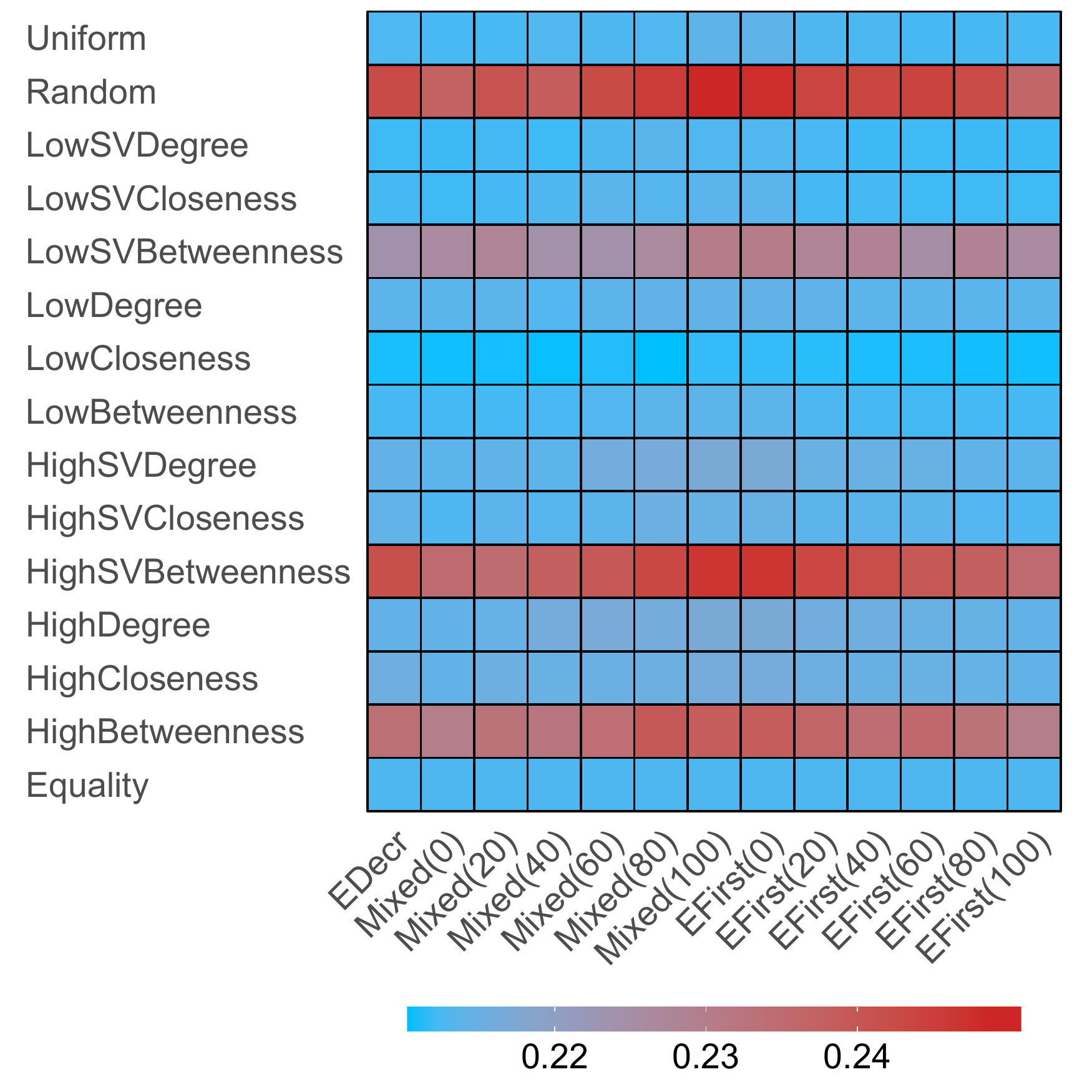} &
\includegraphics[width=\linewidth]{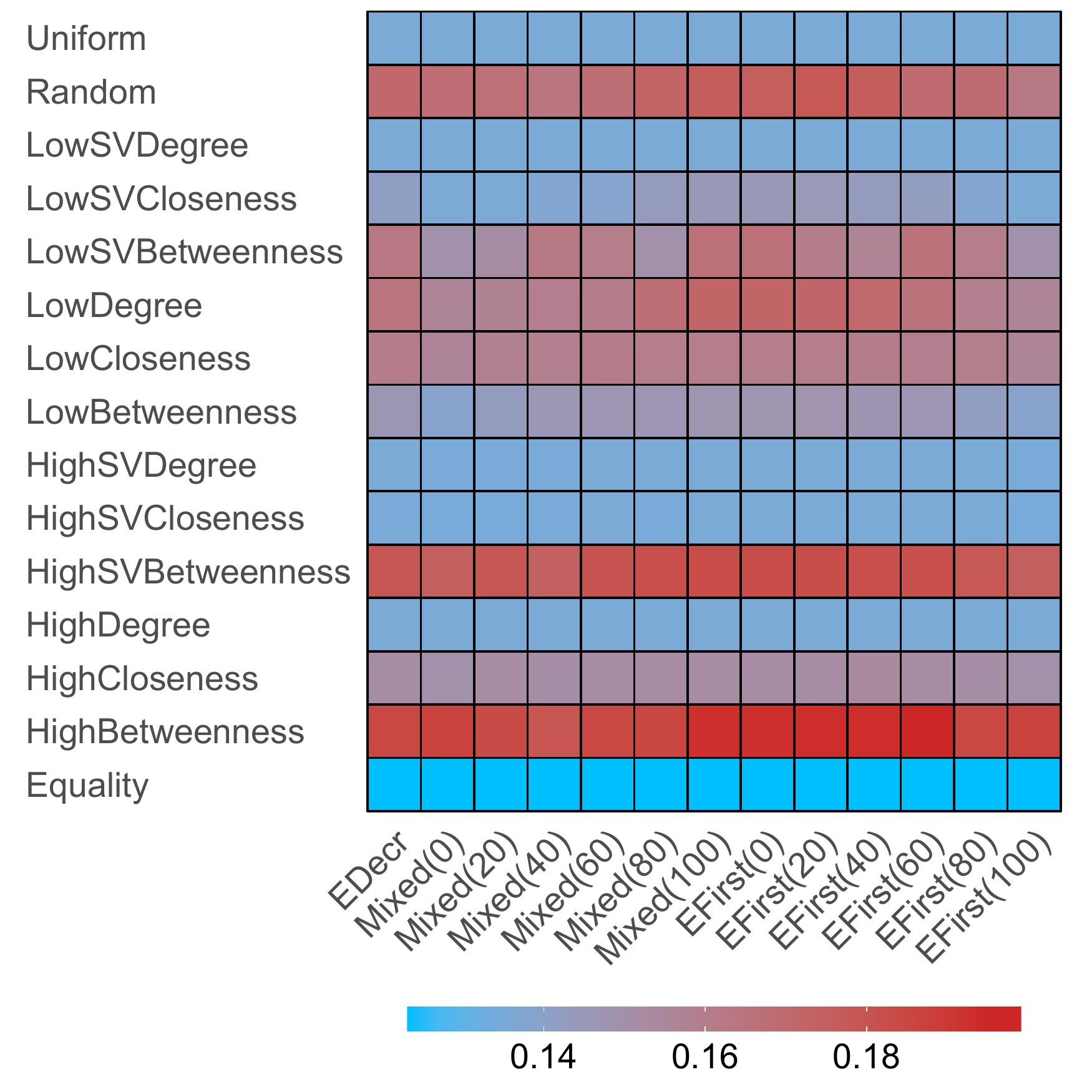} \\
\end{tabular}
\caption{
Comparison of effectiveness of defender's strategies for different attacker's strategies on networks with $80$ nodes in model with $\alpha=0.8$.
Color of each cell represents the expected percentage of nodes successfully activated by the attacker.
Results are taken as an average over $100$ simulations, with a new network generated for each simulation using one of the models.
}
\label{fig:heat-80-alt-models-0.8}
\end{figure}

\begin{figure}[t]
\centering
\setlength\tabcolsep{0pt}
\begin{tabular}{m{.32\textwidth}m{.32\textwidth}m{.32\textwidth}}
\multicolumn{1}{c}{Preferential attachment networks} &
\multicolumn{1}{c}{Scale free networks} &
\multicolumn{1}{c}{Random graph networks} \\
\includegraphics[width=\linewidth]{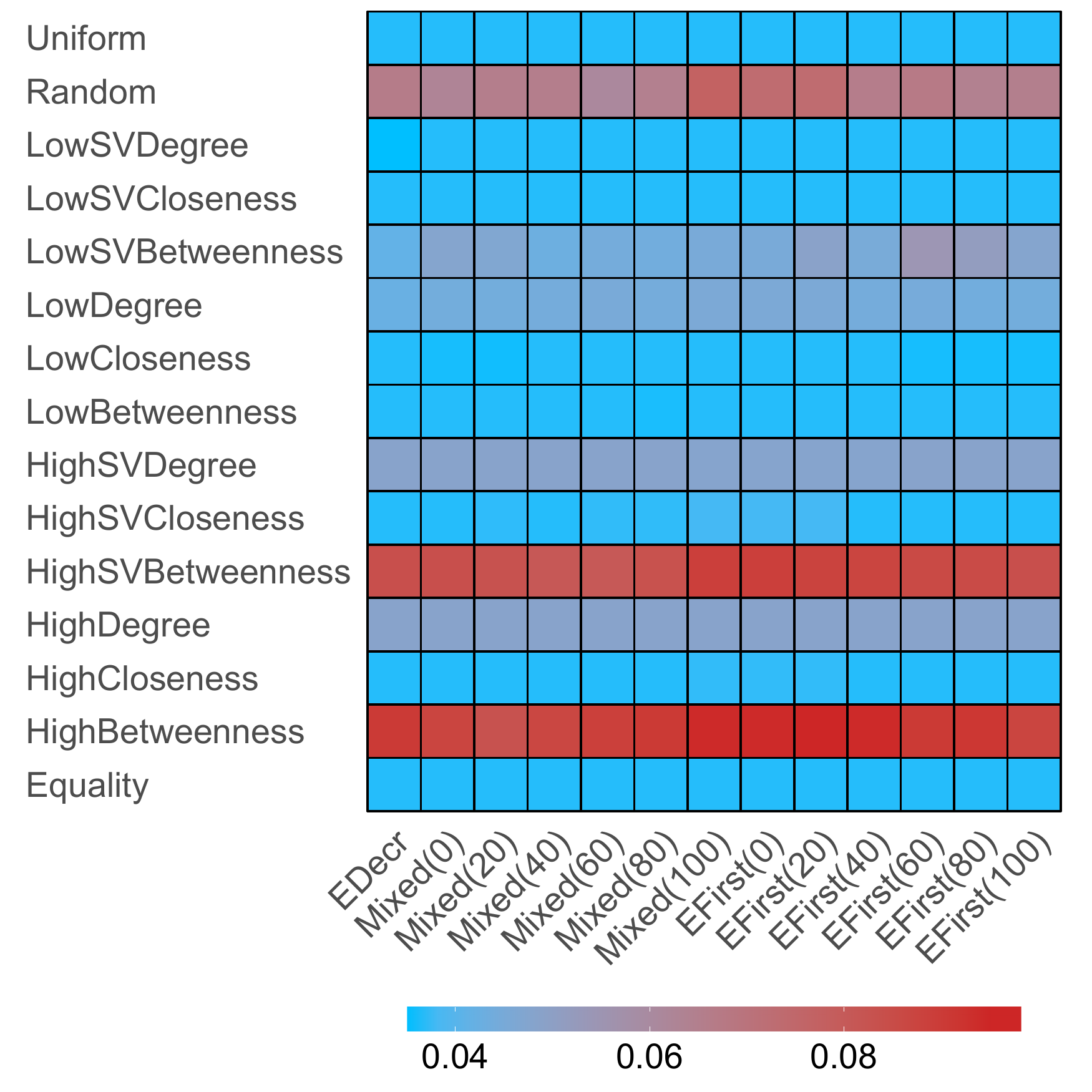} &
\includegraphics[width=\linewidth]{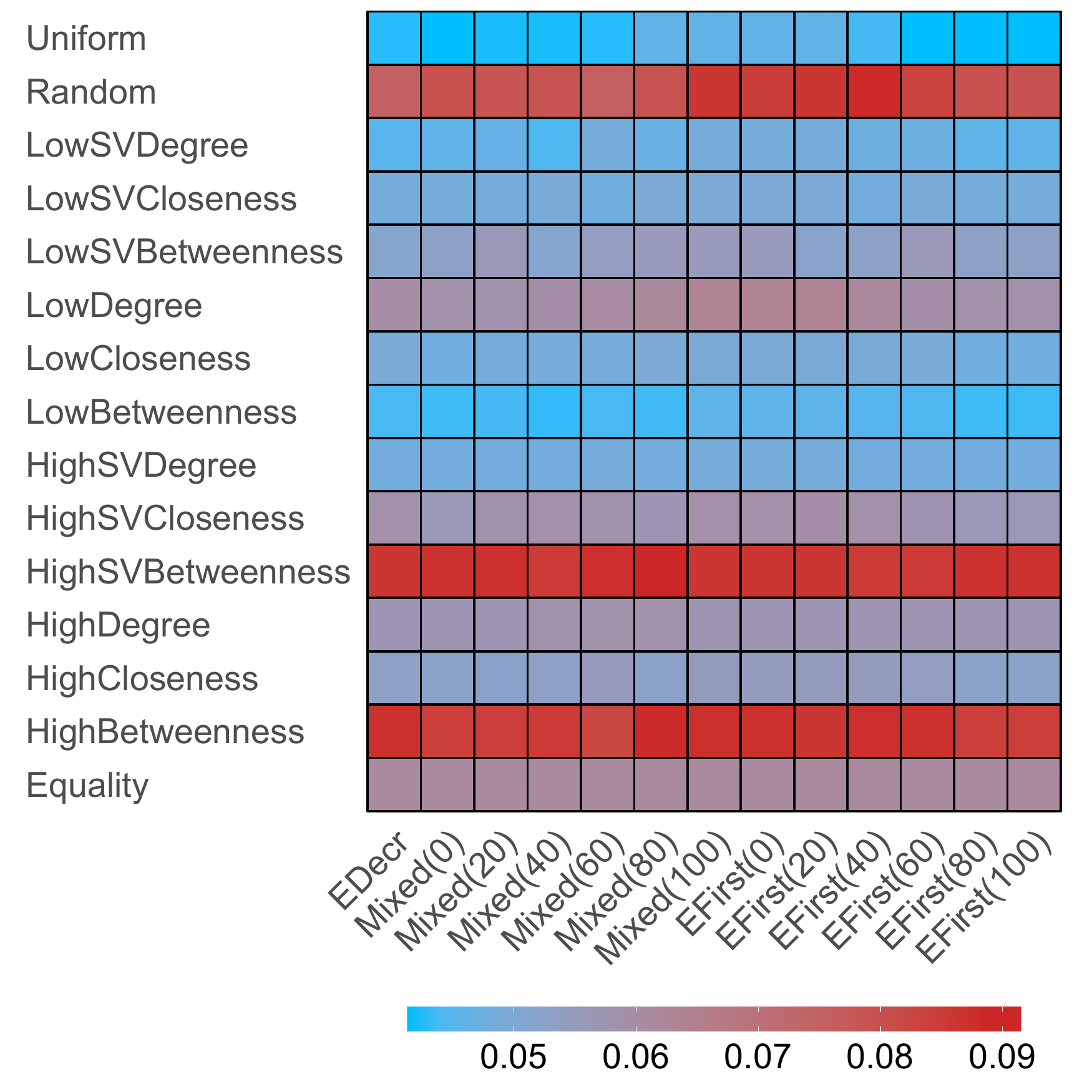} &
\includegraphics[width=\linewidth]{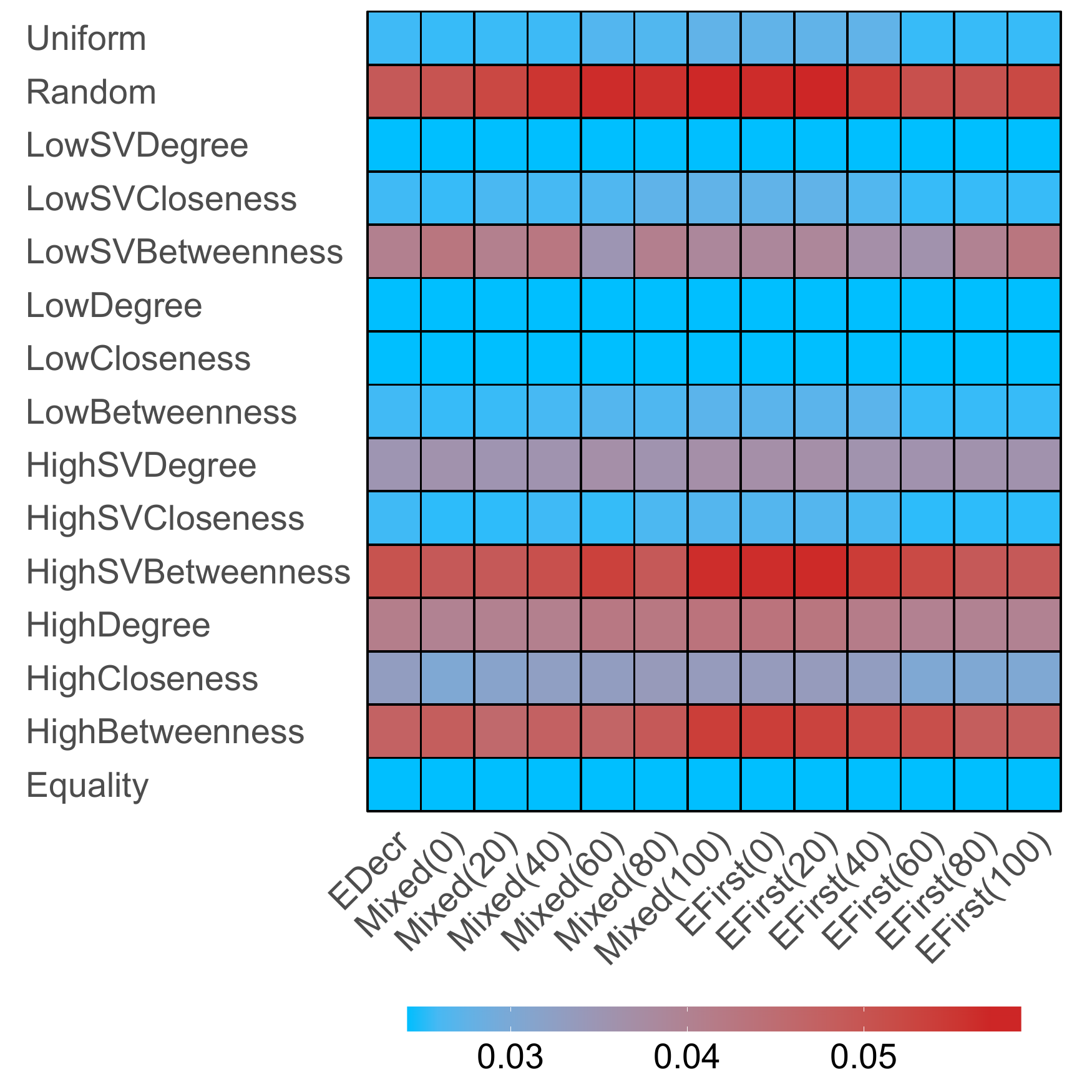} \\
\end{tabular}
\begin{tabular}{m{.32\textwidth}m{.32\textwidth}}
\multicolumn{1}{c}{Small world networks} &
\multicolumn{1}{c}{Random trees} \\
\includegraphics[width=\linewidth]{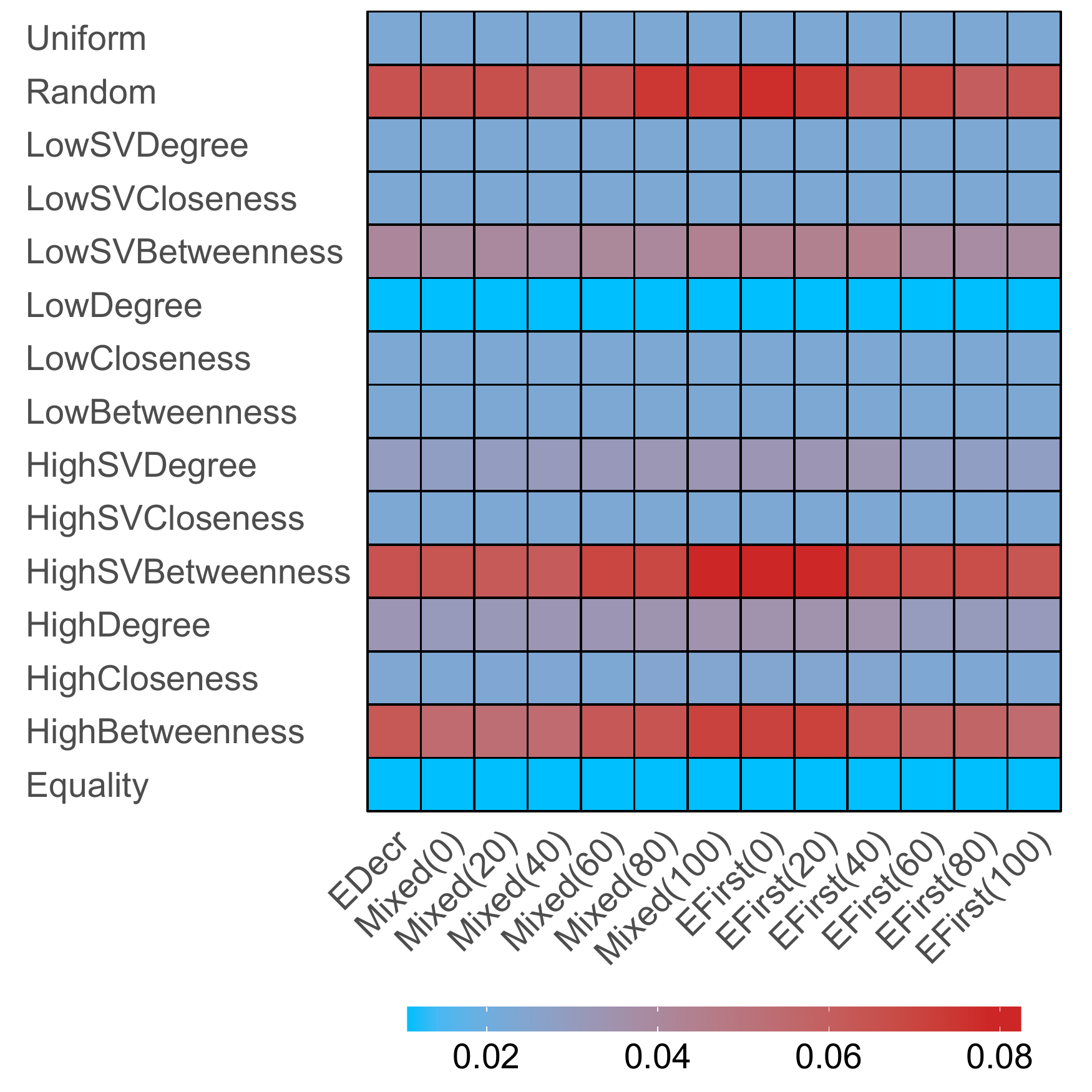} &
\includegraphics[width=\linewidth]{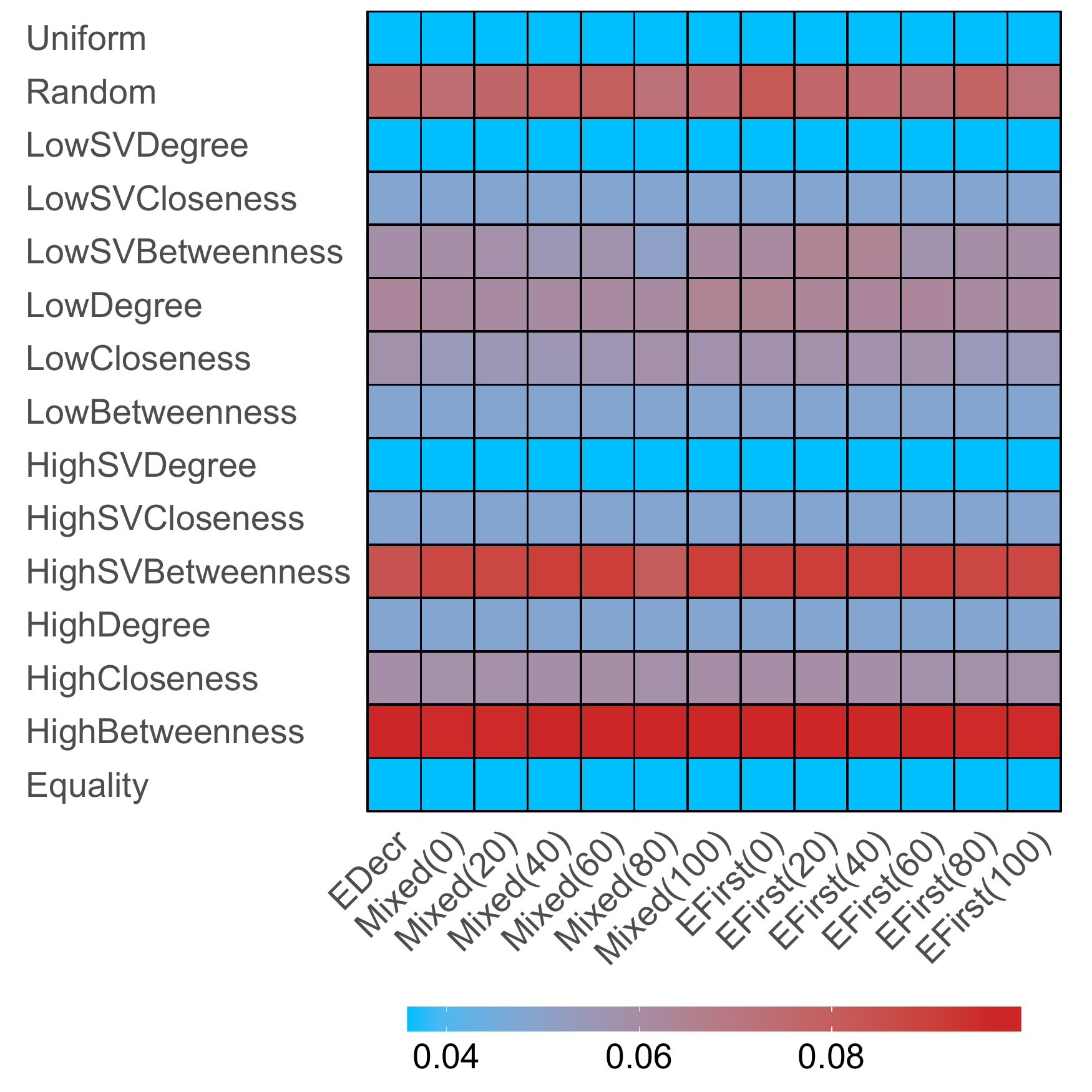} \\
\end{tabular}
\caption{
Comparison of effectiveness of defender's strategies for different attacker's strategies on networks with $80$ nodes in model with $\alpha=1.25$.
Color of each cell represents the expected percentage of nodes successfully activated by the attacker.
Results are taken as an average over $100$ simulations, with a new network generated for each simulation using one of the models.
}
\label{fig:heat-80-alt-models-1.25}
\end{figure}

\begin{figure}[t]
\centering
\setlength\tabcolsep{0pt}
\begin{tabular}{m{.5\textwidth}m{.5\textwidth}}
\multicolumn{1}{c}{$\alpha=0.8$} &
\multicolumn{1}{c}{$\alpha=1.25$} \\
\includegraphics[width=\linewidth]{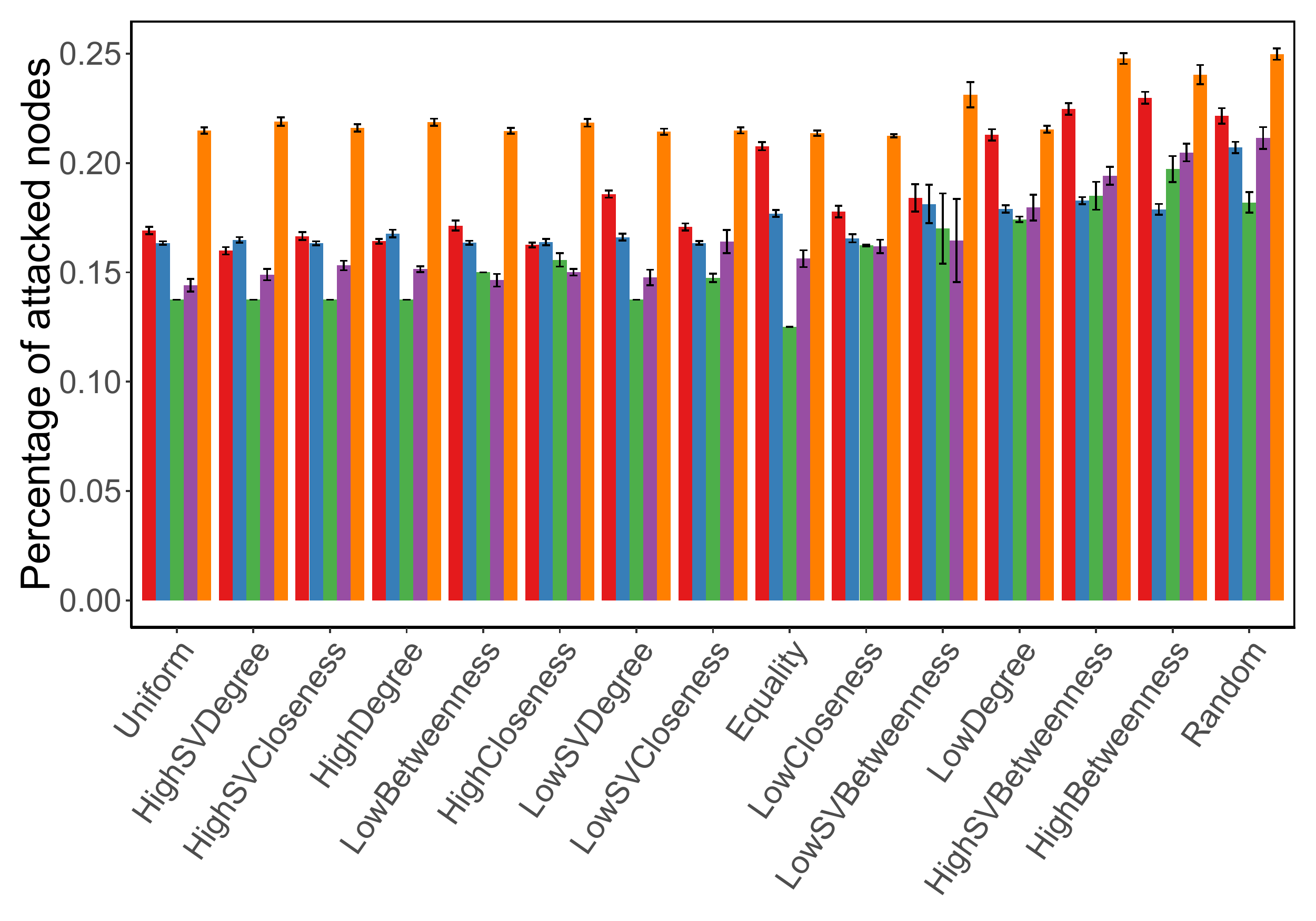} &
\includegraphics[width=\linewidth]{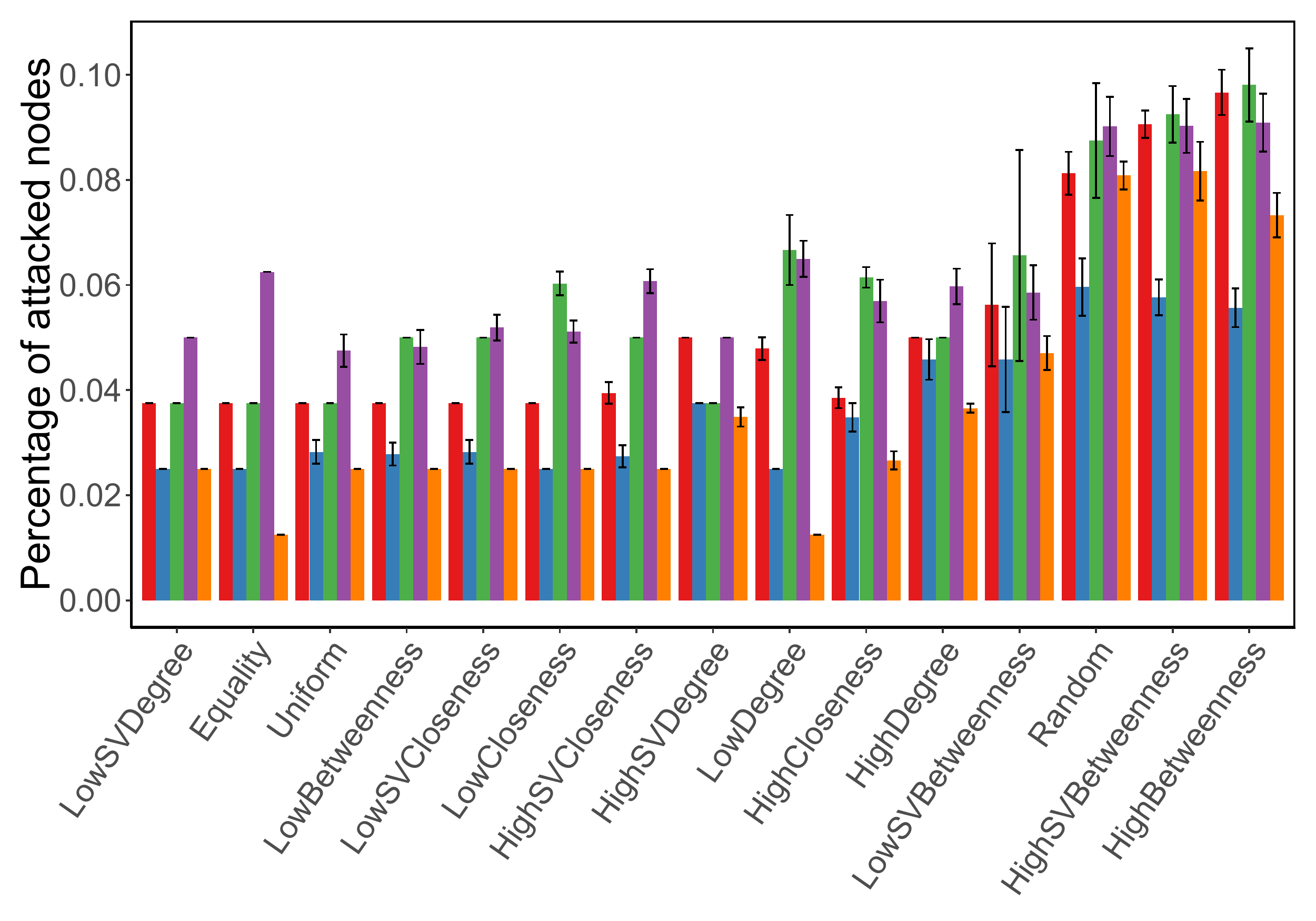} \\
\multicolumn{2}{c}{\includegraphics[width=.7\linewidth]{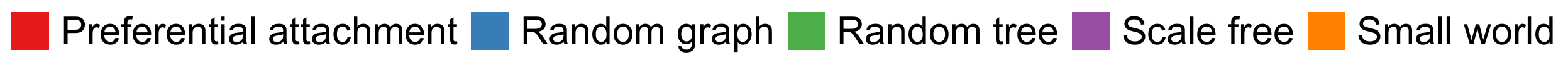}} \\
\end{tabular}
\caption{
Comparison of effectiveness of defender's strategies, under assumption that the attacker uses the best heuristic considered by us, for models with different values of $\alpha$.
Results are taken as an average over $100$ simulations, with a new network generated for each simulation using one of the models.
Defender's strategies are sorted from most to least effective on average.
Error bars represent $95\%$ confidence intervals.
}
\label{fig:best-bars-alt-models}
\end{figure}

\end{document}